\newtheorem{theorem}{Theorem}
\newtheorem{definition}[theorem]{Definition}
\newtheorem{lemma}[theorem]{Lemma}
\newtheorem{proposition}[theorem]{Proposition}
\newtheorem{remark}[theorem]{Remark}
\title[Solution Map for Organic Solar Cells]
{Solution Map Analysis of a Multiscale Drift-Diffusion Model for Organic Solar Cells}
\address{$^{2}$ Dipartimento di Matematica, Politecnico di Milano \\
				 Piazza L. da Vinci 32, 20133 Milano, Italy}
\author{Maurizio Verri$^{1}$ \and Matteo Porro$^{1}$ \and Riccardo Sacco$^{1}$ \and Sandro Salsa$^{1}$}
\email{maurizio.verri@polimi.it}
\email{matteo.porro1@polimi.it}
\email{riccardo.sacco@polimi.it}
\email{sandro.salsa@polimi.it}
\begin{document}

\date{\today}

\begin{abstract}
In this article we address the theoretical study of a multiscale
drift-diffusion (DD) model for the description of photoconversion mechanisms
in organic solar cells. The multiscale nature of the formulation is based on
the co-presence of light absorption, conversion and diffusion phenomena that
occur in the three-dimensional material bulk, of charge photoconversion
phenomena that occur at the two-dimensional material interface separating
acceptor and donor material phases, and of charge separation and subsequent
charge transport in each three-dimensional material phase to device
terminals that are driven by drift and diffusion electrical forces. The
model accounts for the nonlinear interaction among four species: excitons,
polarons, electrons and holes, and allows to quantitatively predict the
electrical current collected at the device contacts of the cell. Existence
and uniqueness of weak solutions of the DD system, as well as nonnegativity
of all species concentrations, are proved in the stationary regime via a
solution map that is a variant of the Gummel iteration commonly used in the
treatment of the DD model for inorganic semiconductors. The results are
established upon assuming suitable restrictions on the data and some
regularity property on the mixed boundary value problem for the Poisson
equation. The theoretical conclusions are numerically validated on the
simulation of three-dimensional problems characterized by realistic values
of the physical parameters.
\end{abstract}

\maketitle

{\bf Keywords:}
Organic semiconductors; solar cells; nonlinear systems
of partial differential equations; multi-domain formulation; Drift-Diffusion model;
functional iteration. 

\section{Introduction}\label{sec:introduction}

Within the widespread set of applications of nanotechnology, the branch of renewable energies
certainly occupies a prominent position because of the urgent need of addressing and solving
the problems related with the production and use of energy and its impact on air pollution
and climate. We refer to~\cite{IPCC_report}
for a realtime update of the state-of-the-art in the complex connection between industrial and
domestic usage of energy and global climate change.
Renewable energies comprise a set of different physical and technological approaches to production,
storage and delivery of sources of supply to everyday's life human activities that are alternative
to the usual fossile fuel, and include, without being limited to: solar, hydrogen, wind, biomass, geothermal
and tidal energies. A comprehensive survey on the fundamental role of nanotechnology in understanding
and developing novel advancing fronts in renewable energies can be found in~\cite{Hussein2015}.

In this article we focus our interest on the specific area of solar energy, and, more in detail,
on organic solar cells (OSCs). OSCs have received increasing attention in the current
nanotechnology industry because of distinguishing features, such as good efficiency at a very cheap cost
and mechanical flexibility because of roll-to-roll fabrication process, which make them promising
alternatives to traditional silicon-based devices~\cite{hoppe_sariciftci_2004}.
The macroscopic behaviour of an OSC depends strongly on the photoconversion mechanisms
that occur at much finer spatial and temporal scales, basically consisting in
(1) generation and diffusion of excited neutral states in the material bulk;
(2) dipole separation at material interfaces
into positive and negative charge carriers; and (3) transport of charge carriers in the
different material phases for subsequent collection of electric current at the output device terminals
(positive charges at the anode and negative charges at the cathode).
We refer to~\cite{deFalcoSacco2010,deFalco2012,porro2014} and references cited therein for a
physical description of the above mentioned phenomena, the mathematical analysis of some of their basic
functional properties and numerical implementation in a simulation tool.

In the following pages, we consider the model proposed and studied in~\cite{deFalco2012}, in
two-dimensional geometrical configurations, under the assumption that
the computational domain is a three-dimensional polyhedron divided into two disjoint regions
separated by a two-dimensional manifold that represents the material interface at which the
principal photoconversion phenomena take place. The structure considered in the present work
is described in Sect.~\ref{sec:geometry}
and can be regarded as a faithful representation of a realistic OSC.
The mathematical model, described in Sect.~\ref{sec:model_equations}, and then
subsequently in Sect.~\ref{sec:auxiliary_Poisson_pb} and Sect.~\ref{sec:stationary_multiscale_model},
is an extension of the classic Drift-Diffusion (DD) system of
partial differential equations (PDEs) used for the investigation of charge transport in semiconductor devices
for micro and nano-electronics~\cite{markowich1986stationary,Markowich,Jerome:AnalyCharTran,Lundstrom2009}.
It consists of a multidomain differential problem in conservation format
for four distinct species: excitons, polarons, electrons and holes. Excitons and polarons are neutral particles;
polarons may dissociate into electrons (negatively charged) and holes (positively charged) at the interface
and the resulting free charges are free to move in their respective material phases under the action of
a internal potential drop (related to the work function gap between the two phases) and of an external
electric field due to an applied voltage drop. Electrons and holes are
electrostatically coupled through Gauss' law in differential form (Poisson equation)
and kinetically coupled through recombination/generation reactions occurring at the interface.

The resulting problem is a highly
nonlinearly coupled system of advection-diffusion-reaction PDEs for which, in Sect.~\ref{sec:fixed_point_map},
we provide in the stationary regime a complete analysis of the existence and uniqueness of weak solutions, as well as nonnegativity of all species concentrations, via a
solution map that is a variant of the Gummel iteration commonly used in the
treatment of the DD model for inorganic semiconductors~\cite{Jerome:AnalyCharTran}.
The results are established upon assuming suitable restrictions on the data and some
regularity property on the mixed boundary value problem for the Poisson
equation. The theoretical conclusions are numerically validated in Sect.~\ref{sec:numerical_simulations}
on the simulation of three-dimensional problems characterized by realistic values
of the physical parameters whereas in Sect.~\ref{sec:conclusions}
some concluding remarks and indications for future extensions of model and analysis are illustrated.

\section{Geometry and notations}\label{sec:geometry}
Let $\Omega \subset \mathbb{R}^{3}$ denote the organic solar cell volume
(called from now on the \textit{device}). We assume that $\Omega $ is a bounded, connected, Lipschitzian open set.
\begin{figure}[h!]
\centering
\subfigure[]{
\includegraphics[width=0.45\textwidth,height=0.45\textwidth]{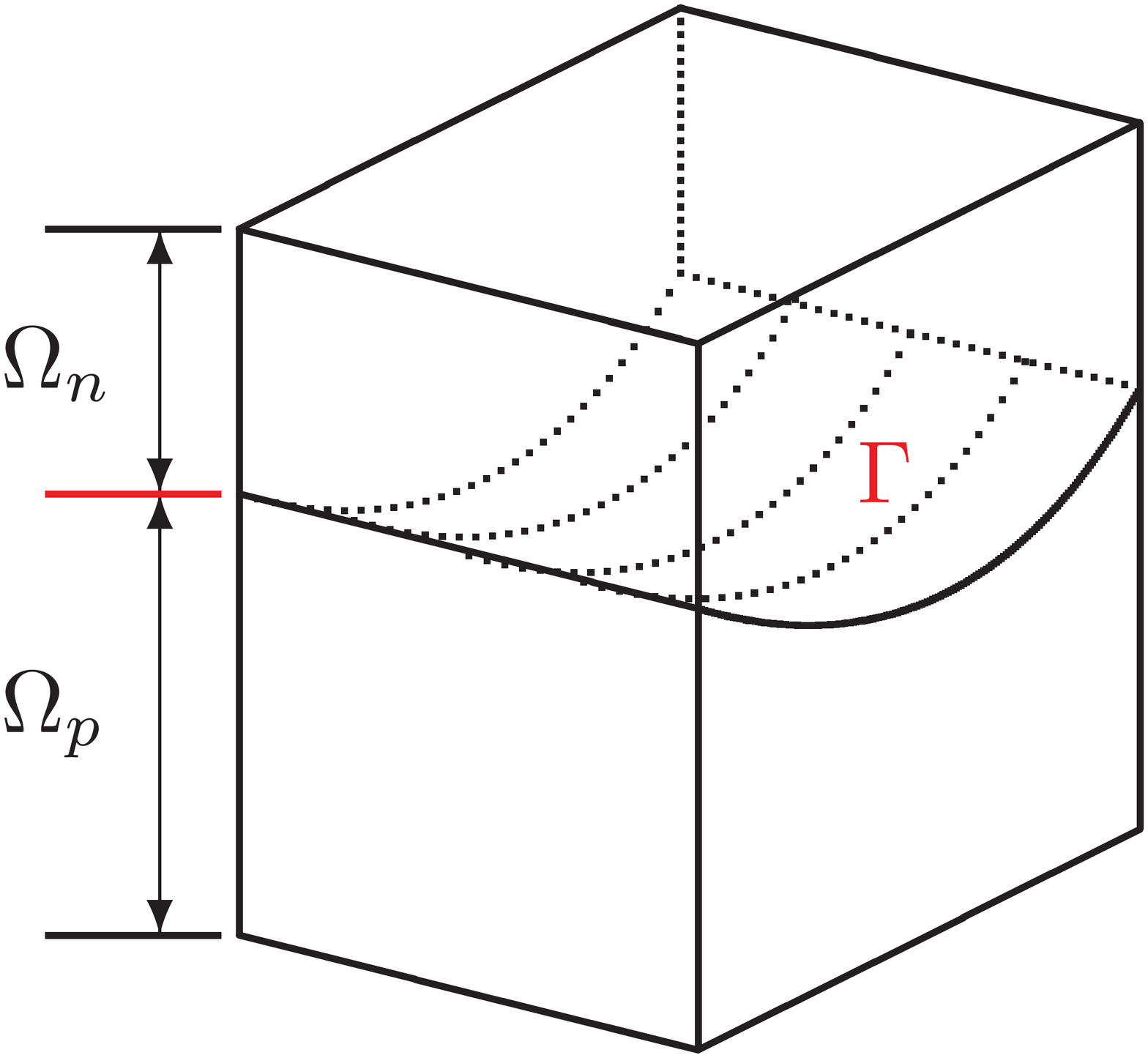}
\label{sfig:domain}}
\subfigure[]{
\includegraphics[width=0.50\textwidth,height=0.50\textwidth]{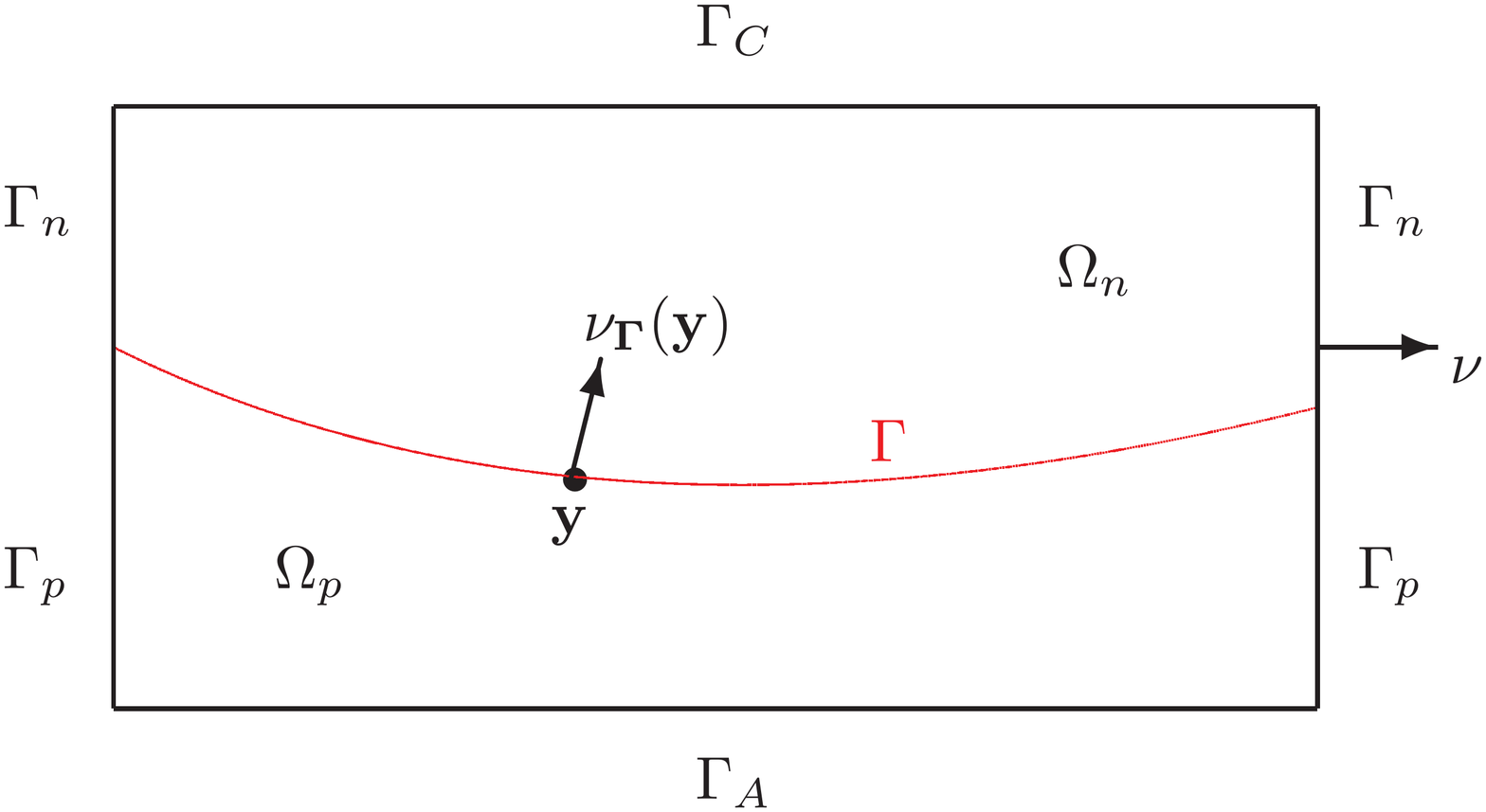}
\label{sfig:interface}}
\caption{Left: device domain. Right: domain boundary and interface.}
\label{fig:domain}
\end{figure}
Inside $\Omega$ we admit the presence of an open, regular surface $\Gamma$
(called from now on the \textit{interface}) that divides $\Omega$ into the two regions
(connected open sets) $\Omega_{n}$ and $\Omega_{p}$ in such a way that
$\Omega =\Omega_{n}\cup \Gamma \cup \Omega_{p}$. The unit normal vector
oriented from $\Omega_p$ into $\Omega_n$ is denoted by $\nu_\Gamma$.
A graphical plot of the three-dimensional (3D) domain comprising the interface is depicted in Fig.~\ref{sfig:domain}.
The boundary of $\Omega $ is the union of two disjoint subsets, so that
$\partial \Omega =\Gamma_{D}\cup \Gamma_{N}$. The
unit outward normal vector on $\partial \Omega$ is denoted by $\nu$.
Specifically, $\Gamma_{D}$ represents the contacts of the device, i.e.
\textit{anode} $\Gamma_{A}=\Gamma_{D}\cap \partial \Omega_{p}$ and
\textit{cathode} $\Gamma_{C}=\Gamma_{D}\cap \partial \Omega_{n}$. We assume that anode and cathode have nonzero
areas and that $\Gamma_{D}$ and $\Gamma$ are strictly separated. Furthermore, $\Gamma_{N}$ is the (relatively open) part of its boundary where the device is insulated from the surrounding environment. We put $\Gamma_{n}=\Gamma_{N}
\cap \partial \Omega_{n}$ and $\Gamma_{p}=\Gamma_{N}\cap \partial
\Omega_{p}$. A graphical plot of a two-dimensional (2D) cross-section of
the device domain comprising the interface and the boundary is depicted in Fig.~\ref{sfig:interface}.

The notation of function spaces in the present paper is as follows. We define $\mathcal{W}^{q}$ ($q\geq 2$) as the closure of the set%
\[
\left\{ \left. w\right\vert _{\Omega }:w\in C^{\infty }\left( \mathbb{R}%
^{3}\right) ,\text{ supp}\left( w\right) \cap \Gamma _{D}=\varnothing
\right\}
\]%
in $W^{1,q}\left( \Omega \right) $, that is, $\mathcal{W}^{q}$ is the
subspace of functions belonging to $W^{1,q}\left( \Omega \right) $ which
vanish on $\Gamma _{D}$ in the sense of traces%
\[
\mathcal{W}^{q}=\left\{ w\in W^{1,q}\left( \Omega \right) :\left.
w\right\vert _{\Gamma _{D}}=0\right\} .
\]%
Furthermore, we define $\mathcal{W}^{-q^{\prime }}\equiv \left( \mathcal{W}%
^{q}\right) ^{\prime }$ as the dual of $\mathcal{W}^{q}$ where $%
1/q+1/q^{\prime }=1$. $\mathcal{W}^{q}$ is a Banach space with respect to
the usual norm in $W^{1,q}\left( \Omega \right) $. Due to meas$\left( \Gamma
_{D}\right) >0$, the Poincar\'{e} inequality holds so that $\mathcal{W}^{q}$
can also be equipped with the equivalent norm%
\begin{equation}
\left\Vert w\right\Vert _{\mathcal{W}^{q}}=\left\Vert \nabla w\right\Vert
_{L^{q}\left( \Omega \right) }. \label{eq:norm_on_Wq}
\end{equation}%
In analogy with the definition of $\mathcal{W}^{q}$, we set%
\[
\mathcal{W}_{n}^{q}=\left\{ w\in W^{1,q}\left( \Omega _{n}\right) :\left.
w\right\vert _{\Gamma _{C}}=0\right\},
\]%
\[
\mathcal{W}_{p}^{q}=\left\{ w\in W^{1,q}\left( \Omega _{p}\right) :\left.
w\right\vert _{\Gamma _{A}}=0\right\}
\]%
with norms ($i=n,p$)%
\begin{equation}
\left\Vert w\right\Vert _{\mathcal{W}_{i}^{q}}=\left\Vert \nabla
w\right\Vert _{L^{q}\left( \Omega _{i}\right) }. \label{eq:norm_on_Wiq}
\end{equation}

\section{Model equations}\label{sec:model_equations}

In this section we illustrate the mathematical model of the OSC schematically represented
in Fig.~\ref{fig:domain}. For a detailed derivation of the equation system and the validation of its physical accuracy,
we invite the reader to consult~\cite{deFalco2012} and all references cited therein. For convenience, a list of all the variables and parameters of the cell model together with their units is contained in Tab.~\ref{tab:symbols}.
{\small
\begin{table}[h!]
\begin{tabular}{|l|l|l|}
\hline
\textsf{symbol} & \textsf{description} & \textsf{units} \\ \hline\hline
$e\left( t,\mathbf{x}\right) $ & $\text{concentration of excitons}$ & m$%
^{-3} $ \\ \hline
$n\left( t,\mathbf{x}\right) $ & $\text{concentration of electrons}$ & m$%
^{-3}$ \\ \hline
$p\left( t,\mathbf{x}\right) $ & $\text{concentration of holes}$ & m$^{-3}$
\\ \hline
$P\left( t,\mathbf{y}\right) $ & $\text{areal concentration
of polarons}$ & m$^{-2}$ \\ \hline
$\tau _{d}$ & exciton-polaron dissociation time & s \\ \hline
$\tau _{e}$ & exciton lifetime & s \\ \hline
$k_{d}$ & polaron dissociation rate & s$^{-1}$ \\ \hline
$k_{r}$ & polaron-exciton recombination rate & s$^{-1}$ \\ \hline
$\gamma $ & bimolecular recombination coefficient & m$^{3}$s$^{-1}$ \\ \hline
$\eta $ & polaron-exciton recombination fraction &  \\ \hline
$q$ & quantum of charge & C \\ \hline
$D_{e}$, $D_{n}$, $D_{p}$ & exciton (electron, hole) diffusion coefficient &
m$^{2}$s$^{-1}$ \\ \hline
$\mu _{n}$, $\mu _{p}$ & electron (hole) mobility & m$^{2}$V$^{-1}$s$^{-1}$
\\ \hline
$Q$ & exciton photogeneration rate & m$^{-3}$s$^{-1}$ \\ \hline
$\mathbf{J}_{e}=-D_{e}\nabla e$ & exciton flux density & m$^{-2}$s$^{-1}$ \\
\hline
$\mathbf{J}_{n}=q\left( D_{n}\nabla n+\mu _{n}n\mathbf{E}\right) $ &
electron current density & Cm$^{-2}$s$^{-1}$ \\ \hline
$\mathbf{J}_{p}=q\left( -D_{p}\nabla p+\mu _{p}p\mathbf{E}\right) $ & hole
current density & Cm$^{-2}$s$^{-1}$ \\ \hline
$\varphi \left( t,\mathbf{x}\right) $ & electric potential & V \\ \hline
$\mathbf{E}=-\nabla \varphi $ & electric field & Vm$^{-1}=$ NC$^{-1}$ \\
\hline
$E=\left\vert \mathbf{E} \right\vert =\left\vert \nabla \varphi \right\vert $ & electric field
intensity &  \\ \hline
$\widetilde{\varepsilon }$ & electric permittivity & CV$^{-1}$m$^{-1}=$ C$%
^{2}$N$^{-1}$m$^{-2}$ \\ \hline
$\varepsilon =\widetilde{\varepsilon }/q$ & electric permittivity per unit
charge & V$^{-1}$m$^{-1}=$ C$^{2}$N$^{-1}$m$^{-2}$ \\ \hline
$H$ & interface half- width & m \\ \hline
\end{tabular}%
\caption{Variables, coefficients and parameters of the solar cell model.}
\label{tab:symbols}
\end{table}
}

The equations for the description of exciton generation and dynamics inside the bulk of the device material read\footnote{%
We denote by $\llbracket f\rrbracket=\left. f\right\vert _{\Gamma \cap
\partial \Omega _{n}}-\left. f\right\vert _{\Gamma \cap \partial \Omega _{p}}
$ the jump of $f$ across $\Gamma $.}
\begin{subequations}\label{eq:excitons}
\begin{align}
& \dfrac{\partial e}{\partial t}-\nabla \cdot \left( D_{e}\nabla e\right) =
Q-\dfrac{e}{\tau _{e}} & \text{in }\Omega \setminus \Gamma \qquad
\text{for }t>0 \label{eq:cons_excitons} \\
& \llbracket e\rrbracket=0 & \text{on }\Gamma \qquad \text{for }t>0
\label{eq:jump_1_excitons} \\
& \llbracket-D_{e}\dfrac{\partial e}{\partial \nu _{\Gamma }}\rrbracket
=\eta k_{r}P-\dfrac{2H}{\tau _{d}}e & \text{on }\Gamma \qquad \text{for }t>0
\label{eq:jump_2_excitons} \\
& e=0 & \text{on }\Gamma_{D} \qquad \text{for }t>0
\label{eq:BC_D_excitons} \\
& \dfrac{\partial e}{\partial \nu }=0 &
\text{on }\Gamma_{N} \qquad \text{for }t>0
\label{eq:BC_N_excitons} \\
& e\left( 0,\mathbf{x}\right) =e_{0}
\left( \mathbf{x}\right) & \text{in }\Omega \qquad \text{for }t=0.
\label{eq:IC_excitons}
\end{align}
\end{subequations}

\begin{remark}
The boundary condition~\eqref{eq:BC_D_excitons} corresponds to assuming that perfect exciton quenching occurs at the contacts (see~\cite{walker2008}).
\end{remark}

The equations for the description of electron generation and dynamics inside the donor phase of the solar cell material read
\begin{subequations}\label{eq:electrons}
\begin{align}
& \dfrac{\partial n}{\partial t}-\nabla \cdot \left( D_{n}\nabla n-
\mu_{n}n\nabla \varphi \right) =0 &
\text{in }\Omega_{n} \qquad \text{for }t>0
\label{eq:cons_electrons} \\
& D_{n}\dfrac{\partial n}{\partial \nu _{\Gamma }}=\mu _{n}\dfrac{\partial
\varphi }{\partial \nu _{\Gamma }}n\mathbf{-}k_{d}P+2H\gamma np
& \text{on } \Gamma \qquad \text{for }t>0
\label{eq:gamma_electrons} \\
& n\equiv 0 & \text{in }\Omega_{p} \qquad \text{for }t>0
\label{eq:electrons_in_Omega_p} \\
&  n=0 & \text{on }\Gamma_{C} \qquad \text{for }t>0
\label{eq:BC_D_electrons} \\
&
D_{n}\dfrac{\partial n}{\partial \nu }=\mu _{n}\dfrac{\partial \varphi }{%
\partial \nu }n & \text{on }\Gamma_{n} \qquad
\text{for }t>0 \label{eq:BC_N_electrons} \\
&
n\left( 0,\mathbf{x}\right) =n_{0}\left( \mathbf{x}\right) & \text{in }%
\Omega_{n}\cup \Gamma \qquad \text{for }t=0.
\label{eq:IC_electrons}
\end{align}
\end{subequations}

\begin{remark}
The boundary condition~\eqref{eq:BC_D_electrons}
corresponds to assuming an infinite recombination velocity at the cathode.
\end{remark}

The equations for the description of hole generation and dynamics
inside the acceptor phase of the solar cell material read
\begin{subequations}\label{eq:holes}
\begin{align}
& \dfrac{\partial p}{\partial t}-\nabla \cdot \left( D_{p}\nabla p+
\mu_{p}p\nabla \varphi \right) =0  &
\text{in }\Omega_{p} \qquad \text{for }t>0
\label{eq:cons_holes} \\
& D_{p}\dfrac{\partial p}{\partial \nu _{\Gamma }}=-\mu _{p}\dfrac{\partial
\varphi }{\partial \nu _{\Gamma }}p\mathbf{+}k_{d}P-2H\gamma np
& \text{on } \Gamma \qquad \text{for }t>0
\label{eq:gamma_holes} \\
& p\equiv 0 & \text{in }\Omega_{n} \qquad \text{for }t>0
\label{eq:holes_in_Omega_n} \\
&  p=0 & \text{on }\Gamma_{A} \qquad \text{for }t>0
\label{eq:BC_D_holes} \\
&
D_{p}\dfrac{\partial p}{\partial \nu }=-\mu _{p}\dfrac{\partial \varphi }{%
\partial \nu }p & \text{on }\Gamma_{p} \qquad
\text{for }t>0 \label{eq:BC_N_holes} \\
&
p\left( 0,\mathbf{x}\right) =p_{0}\left( \mathbf{x}\right) & \text{in }%
\Omega_{p}\cup \Gamma \qquad \text{for }t=0.
\label{eq:IC_holes}
\end{align}
\end{subequations}

\begin{remark}
The boundary condition~\eqref{eq:BC_D_holes}
corresponds to assuming an infinite recombination velocity at the anode.
\end{remark}

The equations for the description of polaron generation and dynamics
on the interface separating the two material phases
of the solar cell material read
\begin{subequations}\label{eq:polarons}
\begin{align}
& P\equiv 0 & \text{in }\Omega_{n}\cup \Omega_{p} \qquad \text{for }t>0
& \label{eq:bulk_polarons} \\
& \dfrac{\partial P}{\partial t}=\dfrac{2H}{\tau _{d}}e+2H\gamma np-\left(
k_{d}+k_{r}\right) P
& \text{on } \Gamma \qquad \text{for }t>0
\label{eq:gamma_polarons} \\
&
P\left( 0,\mathbf{x}\right) =P_{0}\left( \mathbf{x}\right) & \text{on }
\Gamma \qquad \text{for }t=0.
\label{eq:IC_polarons}
\end{align}
\end{subequations}

The equations for the description of electric potential distribution
inside the bulk of the device material read
\begin{subequations}\label{eq:potential}
\begin{align}
& -\nabla \cdot \left( \varepsilon \nabla \varphi \right) =-n
& \text{in }\Omega_n \label{eq:pot_Omega_n} \\
& -\nabla \cdot \left( \varepsilon \nabla \varphi \right) =+p
& \text{in }\Omega_p \label{eq:pot_Omega_p} \\
& \llbracket \varphi \rrbracket=0 & \text{on }\Gamma
\label{eq:jump_1_pot} \\
& \llbracket\varepsilon \dfrac{\partial \varphi }{%
\partial \nu _{\Gamma }}\rrbracket=0 & \text{on }\Gamma
\label{eq:jump_2_pot} \\
& \varphi =\varphi_{C}\left( \mathbf{x}\right) & \text{on }\Gamma_{C}
\label{eq:BC_D_pot_C} \\
& \varphi =\varphi_{A}\left( \mathbf{x}\right) & \text{on }\Gamma_{A}
\label{eq:BC_D_pot_A} \\
& \dfrac{\partial \varphi }{\partial \nu }=0 &
\text{on }\Gamma_{n}\cup\Gamma_{p}.
\label{eq:BC_N_pot}
\end{align}
\end{subequations}

\begin{remark}
Condition~\eqref{eq:jump_1_pot} expresses the physical fact that
the potential is continuous passing from the acceptor to the donor material
phase of the cell. Condition~\eqref{eq:jump_2_pot}
means no charge density on the interface $\Gamma$.
\end{remark}

The general assumptions satisfied by all model coefficients and parameters throughout the paper are collected in Tab.~\ref{tab:assumptions}.

\begin{table}[h!]
\begin{tabular}{|l|l|l|}
\hline
\textsf{symbol} & \textsf{assumption} & \textsf{bounds} \\ \hline\hline
$\tau _{d}$ & constant & $\tau _{d}>0$ \\ \hline
$\tau _{e}$ & constant & $\tau _{e}>0$ \\ \hline
$k_{d}\left( \mathbf{y}\right) $ & measurable &
$k_{d}\left( \mathbf{\cdot }\right) \geq 0\quad $a.e. on $\Gamma$ \\ \hline
$k_{r}$ & constant & $k_{r}>0$ \\ \hline
$\gamma \left( \mathbf{y}\right) $ &
$\in L^{\infty}\left( \Gamma \right) $ & $\exists \underline{\gamma },\overline{\gamma}
>0\qquad \underline{\gamma }\leq \gamma \left( \mathbf{\cdot }\right) \leq
\overline{\gamma }\quad $a.e. on $\Gamma $ \\ \hline
$\eta $ & constant & $0\leq \eta \leq 1$ \\ \hline
$D_{e}\left( \mathbf{x}\right) $ &
$\in L^{\infty }\left(\Omega \right) $ & $\exists \underline{d}_{e},\overline{d}_{e}>0\qquad
\underline{d}_{e}\leq D_{e}\left( \mathbf{\cdot }\right) \leq \overline{d}_{e}
\quad $a.e. in $\Omega $ \\ \hline
$D_{n}\left( \mathbf{x}\right) $ &
$\in L^{\infty }\left(\Omega_{n}\right) $ & $\exists \underline{d}_{n},\overline{d}_{n}>0\qquad
\underline{d}_{n}\leq D_{n}\left( \mathbf{\cdot }\right) \leq \overline{d}_{n}
\quad $a.e. in $\Omega_{n}$ \\ \hline
$D_{p}\left( \mathbf{x}\right) $ &
$\in L^{\infty }\left(\Omega_{p}\right) $ & $\exists \underline{d}_{p},\overline{d}_{p}>0\qquad
\underline{d}_{p}\leq D_{p}\left( \mathbf{\cdot }\right) \leq \overline{d}_{p}
\quad $a.e. in $\Omega_{p}$ \\ \hline
$\mu _{n}\left( \mathbf{x},E\right) $ &
$\in Car\left( \overline{\Omega_{n}}
\times \mathbb{R}\right) $ & $\exists \overline{\mu }_{n}>0\qquad 0\leq \mu
_{n}\left( \mathbf{\cdot },E\right) \leq \overline{\mu }_{n}\quad $a.e. in
$\overline{\Omega_{n}}$, $\forall E\geq 0 $ \\ \hline
$\mu _{p}\left( \mathbf{x},E\right) $ &
$\in Car\left( \overline{\Omega_{p}}
\times \mathbb{R}\right) $ & $\exists \overline{\mu }_{p}>0\qquad 0\leq \mu
_{p}\left( \mathbf{\cdot },E\right) \leq \overline{\mu }_{p}\quad $a.e. in
$\overline{\Omega_{p}}$, $\forall E\geq 0 $ \\ \hline
$Q\left( \mathbf{x}\right) $ & $\in L^{2}\left( \Omega
\right) $ & $Q\left( \mathbf{\cdot }\right) \geq 0\quad $a.e. in $\Omega $ \\ \hline
$\varepsilon \left( \mathbf{x}\right) $ &
$\in L^{\infty}\left( \Omega \right) $ & $\exists \underline{\varepsilon },
\overline{\varepsilon }>0\qquad \underline{\varepsilon }\leq
\varepsilon \left( \mathbf{\cdot }\right) \leq \overline{\varepsilon }\quad $a.e. in $\Omega $
\\ \hline
$H\left( \mathbf{y}\right) $ &
$\in L^{\infty }\left(
\Gamma \right) $ & $\exists \overline{h}>0\qquad 0\leq H\left( \mathbf{\cdot }\right)
\leq \overline{h}\quad $a.e. on $\Gamma $
\\ \hline
\end{tabular}
\caption{Assumptions on model coefficients and parameters.}
\label{tab:assumptions}
\end{table}

\section{The auxiliary Poisson problem}\label{sec:auxiliary_Poisson_pb}

The elliptic boundary value problem for the electric potential~\eqref{eq:potential}
can be written in more compact form as
\begin{subequations}\label{poisson}
\begin{align}
& -\nabla \cdot \left( \varepsilon \left( \cdot \right) \nabla \varphi \right)
=g\left( n,p\right)
& \text{in }\Omega \diagdown \Gamma \label{eq:poisson_eq} \\
&  \llbracket\varphi \rrbracket=\llbracket\varepsilon \left( \cdot \right)
\dfrac{\partial \varphi }{\partial \nu _{\Gamma }}\rrbracket=0 & \text{on } \Gamma
\label{eq:poisson_interface} \\
&  \varphi =\varphi _{D} & \text{on }\Gamma_{D} \label{eq:poisson_BC_D} \\
&  \dfrac{\partial \varphi }{\partial \nu }=0 & \text{on }\Gamma_{N} \label{eq:poisson_BC_N}
\end{align}
\end{subequations}
where
\begin{equation}\label{g}
g\left( n,p\right) :=\left\{
\begin{array}{ll}
-n & \qquad \text{in }\Omega_{n} \\
+p & \qquad \text{in }\Omega_{p}%
\end{array}%
\right.
\end{equation}%
and%
\begin{equation}\label{phi_D}
\varphi _{D}:=\left\{
\begin{array}{ll}
\varphi _{C} & \qquad \text{on }\Gamma_{C} \\
\varphi _{A} & \qquad \text{on }\Gamma_{A}.
\end{array}%
\right.
\end{equation}%
We assume that the electric permittivity $\varepsilon $ is as specified in
Tab.~\ref{tab:assumptions} and that there exists $\widetilde{\varphi }\in
H^{1}\left( \Omega \right) $ whose trace on $\partial \Omega $ is equal to
$\varphi _{D}$ on $\Gamma_{D}$. Next, for the moment
let $g\in L^{2}\left( \Omega \right)$ be a given function and consider the
following linear elliptic transmission problem with mixed boundary conditions
(from now on referred to as \emph{auxiliary Poisson problem}):
\begin{subequations}\label{eq:aux poisson}
\begin{align}
& -\nabla \cdot \left( \varepsilon \left( \cdot \right) \nabla \varphi \right)
=g \left( \cdot \right)
& \text{in }\Omega \diagdown \Gamma \label{eq:aux poisson_eq} \\
&  \llbracket\varphi \rrbracket=\llbracket\varepsilon \left( \cdot \right)
\dfrac{\partial \varphi }{\partial \nu _{\Gamma }}\rrbracket=0 & \text{on } \Gamma
\label{eq:aux poisson_interface} \\
&  \varphi =\varphi _{D} & \text{on }\Gamma_{D} \label{eq:aux poisson_BC_D} \\
&  \dfrac{\partial \varphi }{\partial \nu }=0 & \text{on }\Gamma_{N}. \label{eq:aux poisson_BC_N}
\end{align}
\end{subequations}
Let $u=\varphi -\widetilde{\varphi }$.
Then the auxiliary problem (\ref{eq:aux poisson}) is equivalent to
\begin{subequations}\label{eq:aux2 poisson}
\begin{align}
& -\nabla \cdot \left( \varepsilon \left( \cdot \right) \nabla u \right)
=g\left( \cdot \right) +\nabla \cdot \left( \varepsilon \left( \cdot \right)
\nabla \widetilde{\varphi }\right) & \qquad \text{in }\Omega  \label{eq:aux2 poisson_eq} \\
&  u=0 & \text{on }\Gamma_{D} \label{eq:aux2 poisson_BC_D} \\
&  \dfrac{\partial u}{\partial \nu }=-\dfrac{\partial \widetilde{\varphi }}{%
\partial \nu } & \text{on }\Gamma_{N}. \label{eq:aux2 poisson_BC_N}
\end{align}
\end{subequations}

\begin{definition}
$u\in \mathcal{W}^{2}$ is called a variational solution to the auxiliary
Poisson problem (\ref{eq:aux2 poisson}) if%
\begin{equation}
a\left( u,v\right) =L\left( v\right) \qquad \forall v\in \mathcal{W}^{2}
\label{Poisson_weak}
\end{equation}%
where
\begin{align*}
& a\left( u,v\right) =\int_{\Omega }\varepsilon \left( \cdot \right) \nabla
u\cdot \nabla vdx & \qquad u, v \in \mathcal{W}^{2}, \label{eq:a_uv} \\
& L\left( v\right) =\int_{\Omega }g\left( \cdot \right) vdx-\int_{\Omega
}\varepsilon \left( \cdot \right) \nabla \widetilde{\varphi }\cdot \nabla vdx
& \qquad v \in \mathcal{W}^{2}.
\end{align*}
\end{definition}

It is easily verified that $a$ and $L$ satisfy the hypotheses
of the Lax-Milgram Lemma. As a consequence, the following result can be proved.
\begin{lemma}[auxiliary Poisson problem, \#1]
Assume $\varepsilon \left( \cdot \right)$ as specified in
Tab.~\ref{tab:assumptions}, $g\in L^{2}\left( \Omega \right) $ and that there exists $%
\widetilde{\varphi }\in H^{1}\left( \Omega \right) $ whose trace on $%
\partial \Omega $ is equal to $\varphi _{D}$ on $\Gamma_{D}$. Then there is
a unique weak solution $\varphi $ to problem (\ref{eq:aux poisson}) in the function class
$\varphi -\widetilde{\varphi }=u\in \mathcal{W}^{2}$ and the following estimate holds
\begin{equation}
\left\Vert u\right\Vert _{\mathcal{W}^{2}}\leq \frac{c }{\underline{\varepsilon }}\left\Vert g\right\Vert _{L^{2}\left( \Omega
\right) }+\frac{\overline{\varepsilon }}{\underline{\varepsilon }}\left\Vert
\nabla \widetilde{\varphi }\right\Vert _{L^{2}\left( \Omega \right) }
\label{u estimate}
\end{equation}%
for some $c=c\left( \Omega \right) >0$.
\end{lemma}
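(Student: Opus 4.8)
The plan is to verify that the bilinear form $a$ and the linear functional $L$ satisfy the hypotheses of the Lax--Milgram Lemma on the Hilbert space $\mathcal{W}^{2}$ equipped with the equivalent norm~\eqref{eq:norm_on_Wq}, to conclude existence and uniqueness of a variational solution $u$ of~\eqref{Poisson_weak}, and then to extract the quantitative bound~\eqref{u estimate} by testing the variational identity against $u$ itself.

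First I would check \emph{continuity} and \emph{coercivity} of $a$. Since $\varepsilon(\cdot)\le\overline{\varepsilon}$ a.e.\ in $\Omega$, Cauchy--Schwarz gives $|a(u,v)|\le\overline{\varepsilon}\,\|\nabla u\|_{L^{2}(\Omega)}\|\nabla v\|_{L^{2}(\Omega)}=\overline{\varepsilon}\,\|u\|_{\mathcal{W}^{2}}\|v\|_{\mathcal{W}^{2}}$, while $\varepsilon(\cdot)\ge\underline{\varepsilon}>0$ a.e.\ in $\Omega$ yields $a(u,u)\ge\underline{\varepsilon}\,\|\nabla u\|_{L^{2}(\Omega)}^{2}=\underline{\varepsilon}\,\|u\|_{\mathcal{W}^{2}}^{2}$; both identities use that $\|\nabla\cdot\|_{L^{2}(\Omega)}$ is an equivalent norm on $\mathcal{W}^{2}$, which holds because $\mathrm{meas}(\Gamma_{D})>0$ and hence Poincar\'e applies. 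Next I would check $L\in(\mathcal{W}^{2})'$: letting $c=c(\Omega)>0$ be the Poincar\'e constant with $\|v\|_{L^{2}(\Omega)}\le c\,\|\nabla v\|_{L^{2}(\Omega)}$ for all $v\in\mathcal{W}^{2}$, Cauchy--Schwarz gives $|L(v)|\le\|g\|_{L^{2}(\Omega)}\|v\|_{L^{2}(\Omega)}+\overline{\varepsilon}\,\|\nabla\widetilde{\varphi}\|_{L^{2}(\Omega)}\|\nabla v\|_{L^{2}(\Omega)}\le\bigl(c\,\|g\|_{L^{2}(\Omega)}+\overline{\varepsilon}\,\|\nabla\widetilde{\varphi}\|_{L^{2}(\Omega)}\bigr)\|v\|_{\mathcal{W}^{2}}$. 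Lax--Milgram then provides a unique $u\in\mathcal{W}^{2}$ solving~\eqref{Poisson_weak}, and $\varphi:=u+\widetilde{\varphi}$ is the asserted unique weak solution of~\eqref{eq:aux poisson} in the class $\varphi-\widetilde{\varphi}\in\mathcal{W}^{2}$; choosing in~\eqref{Poisson_weak} test functions compactly supported in $\Omega_{n}$, in $\Omega_{p}$, and then in $\Omega\setminus\Gamma$, and finally arbitrary in $\mathcal{W}^{2}$, one recovers~\eqref{eq:aux poisson_eq}, the transmission conditions $\llbracket\varphi\rrbracket=\llbracket\varepsilon(\cdot)\,\partial\varphi/\partial\nu_{\Gamma}\rrbracket=0$ on $\Gamma$, and the homogeneous Neumann condition for $\varphi$ on $\Gamma_{N}$, in the distributional sense.

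To obtain~\eqref{u estimate} I would take $v=u$ in~\eqref{Poisson_weak}: combining coercivity, $\underline{\varepsilon}\,\|u\|_{\mathcal{W}^{2}}^{2}\le a(u,u)=L(u)$, with the continuity estimate for $L$ gives $\underline{\varepsilon}\,\|u\|_{\mathcal{W}^{2}}^{2}\le\bigl(c\,\|g\|_{L^{2}(\Omega)}+\overline{\varepsilon}\,\|\nabla\widetilde{\varphi}\|_{L^{2}(\Omega)}\bigr)\|u\|_{\mathcal{W}^{2}}$. If $u\neq0$ we divide by $\|u\|_{\mathcal{W}^{2}}$ and then by $\underline{\varepsilon}$ (the case $u=0$ being trivial), obtaining exactly $\|u\|_{\mathcal{W}^{2}}\le\frac{c}{\underline{\varepsilon}}\|g\|_{L^{2}(\Omega)}+\frac{\overline{\varepsilon}}{\underline{\varepsilon}}\|\nabla\widetilde{\varphi}\|_{L^{2}(\Omega)}$ with $c=c(\Omega)$ the Poincar\'e constant.

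I do not expect a genuine obstacle: this is the textbook Lax--Milgram scheme, and the only points needing a little care are (i) the equivalence of $\|\nabla\cdot\|_{L^{2}(\Omega)}$ with the $W^{1,2}(\Omega)$-norm on $\mathcal{W}^{2}$, which is precisely where the hypothesis $\mathrm{meas}(\Gamma_{D})>0$ enters through Poincar\'e, and (ii) the bookkeeping of constants so that $c(\Omega)$ multiplies $\|g\|_{L^{2}(\Omega)}$ while the coefficient of $\|\nabla\widetilde{\varphi}\|_{L^{2}(\Omega)}$ is the pure ratio $\overline{\varepsilon}/\underline{\varepsilon}$, exactly as displayed in~\eqref{u estimate}.
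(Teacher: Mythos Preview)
Your proof is correct and follows exactly the approach the paper indicates: the paper simply states that ``it is easily verified that $a$ and $L$ satisfy the hypotheses of the Lax--Milgram Lemma'' and asserts the result without giving details, and your argument supplies precisely those details, including the derivation of the constants in~\eqref{u estimate} by testing with $v=u$.
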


In order to prove the existence of a weak solution to the DD system we need
a stronger solution to the auxiliary Poisson problem. To this end, consider
the elliptic operator $-\nabla \cdot \varepsilon \nabla :\mathcal{W}%
^{2}\rightarrow \mathcal{W}^{-2}$ defined by%
\[
\left\langle -\nabla \cdot \left( \varepsilon \left( \cdot \right) \nabla
u\right) ,v\right\rangle _{\mathcal{W}^{-2}}:=a\left( u,v\right) ,\qquad
u,v\in \mathcal{W}^{2}
\]%
and use the same notation $-\nabla \cdot \varepsilon \nabla $ for the
restriction of this operator to the spaces $\mathcal{W}^{q}$ ($q>2$). Then,
it is clear that it is a continuous operator from $\mathcal{W}^{q}$ into $%
\mathcal{W}^{-q}\equiv \left( \mathcal{W}^{q^{\prime }}\right) ^{\prime }$.
However, it would be desirable that $-\nabla \cdot \varepsilon \nabla :%
\mathcal{W}^{q}\rightarrow \mathcal{W}^{-q}$ provides a topological
isomorphism for some $q>2$, i.e. a one-to-one continuous mapping of $%
\mathcal{W}^{q}$ onto $\mathcal{W}^{-q}$ for which the inverse mapping is
also continuous. Since it is well known that this isomorphism property is
actually an assumption on $\Omega $, $\Gamma _{D}$ and $\Gamma _{N}$
(see~\cite{markowich1986stationary,BEI1996} and~\cite{groger1985,groger1989,Dauge1992,HALLER2008,Hieber2008,GLI}),
we shall call $q-$\textit{admissible} any triple $\left\{
\Omega ,\Gamma _{D},\Gamma _{N}\right\} $ such that the stated property
holds.

\begin{lemma}[auxiliary Poisson problem, \#2]\label{Lemma Poisson 2}
Assume that $\left\{ \Omega ,\Gamma _{D},\Gamma _{N}\right\} $ is a $q-$admissible triple for some
$q> 2$, $\varepsilon \left( \cdot \right) $ as specified in
Tab.~\ref{tab:assumptions}, $g\in L^{q}\left( \Omega \right) $ and
that there exists $\widetilde{\varphi }\in W^{1.q}\left( \Omega \right) $ whose trace on
$\partial \Omega $ is equal to $\varphi _{D}$ on $\Gamma_{D}$. Then there is
a unique solution $\varphi $ to problem (\ref{eq:aux poisson}) in the function
class $\varphi -\widetilde{\varphi }=u\in \mathcal{W}^{q}$ and the
following estimate holds
\begin{equation}
\left\Vert u\right\Vert _{\mathcal{W}^{q}}\leq c\left\{ \left\Vert
g\right\Vert _{L^{q}\left( \Omega \right) }+\left\Vert \nabla \widetilde{%
\varphi }\right\Vert _{L^{q}\left( \Omega \right) }\right\}
\label{u estimate1}
\end{equation}%
for some $c=c\left( q,\Omega ,\varepsilon \right) >0$.
\end{lemma}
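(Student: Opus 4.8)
The plan is to reduce the statement to the $q$-admissibility hypothesis: once we know that the operator $\mathcal{A}:=-\nabla\cdot\varepsilon\nabla$ is a topological isomorphism of $\mathcal{W}^{q}$ onto $\mathcal{W}^{-q}=(\mathcal{W}^{q^{\prime}})^{\prime}$, the lemma follows by rewriting the auxiliary problem \eqref{eq:aux2 poisson} as a single operator equation $\mathcal{A}u=\mathcal{F}$ in $\mathcal{W}^{-q}$ and inverting. First I would set up the right-hand side. For $v\in\mathcal{W}^{q^{\prime}}$ define
\[
\langle\mathcal{F},v\rangle_{\mathcal{W}^{-q}}:=\int_{\Omega}g(\cdot)\,v\,dx-\int_{\Omega}\varepsilon(\cdot)\,\nabla\widetilde{\varphi}\cdot\nabla v\,dx,
\]
which is the $\mathcal{W}^{q}$-version of the functional $L$ appearing in \eqref{Poisson_weak}, i.e.\ the distributional reading of $g+\nabla\cdot(\varepsilon\nabla\widetilde{\varphi})$ together with the inhomogeneous Neumann datum in \eqref{eq:aux2 poisson_BC_N}. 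I would check $\mathcal{F}\in\mathcal{W}^{-q}$ with
\[
\|\mathcal{F}\|_{\mathcal{W}^{-q}}\le c_{1}\,\|g\|_{L^{q}(\Omega)}+\overline{\varepsilon}\,\|\nabla\widetilde{\varphi}\|_{L^{q}(\Omega)},
\]
using H\"older's inequality with conjugate exponents $q,q^{\prime}$ on both integrals, the bound $\varepsilon\le\overline{\varepsilon}$ from Tab.~\ref{tab:assumptions}, and the Poincar\'e inequality on $\mathcal{W}^{q^{\prime}}$ (valid because meas$(\Gamma_{D})>0$), which gives $\|v\|_{L^{q^{\prime}}(\Omega)}\le c_{1}\|\nabla v\|_{L^{q^{\prime}}(\Omega)}=c_{1}\|v\|_{\mathcal{W}^{q^{\prime}}}$ with $c_{1}=c_{1}(\Omega,q)$.

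Next, since $\{\Omega,\Gamma_{D},\Gamma_{N}\}$ is $q$-admissible, $\mathcal{A}:\mathcal{W}^{q}\to\mathcal{W}^{-q}$ is a topological isomorphism, so there is a unique $u\in\mathcal{W}^{q}$ with $\mathcal{A}u=\mathcal{F}$, and
\[
\|u\|_{\mathcal{W}^{q}}\le\|\mathcal{A}^{-1}\|_{\mathcal{L}(\mathcal{W}^{-q},\mathcal{W}^{q})}\,\|\mathcal{F}\|_{\mathcal{W}^{-q}}\le c\left(\|g\|_{L^{q}(\Omega)}+\|\nabla\widetilde{\varphi}\|_{L^{q}(\Omega)}\right),
\]
with $c=c(q,\Omega,\varepsilon)$, the dependence on $\varepsilon$ and on the admissible triple being absorbed into $\|\mathcal{A}^{-1}\|$. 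Setting $\varphi:=u+\widetilde{\varphi}$ then yields a solution of \eqref{eq:aux poisson} in the required class; testing $\mathcal{A}u=\mathcal{F}$ against $v\in\mathcal{W}^{2}\supset\mathcal{W}^{q}$ recovers the weak formulation \eqref{Poisson_weak}, and uniqueness is immediate from injectivity of $\mathcal{A}$, since two solutions differ by an element of $\ker\mathcal{A}=\{0\}$. For consistency I would remark that, because $q>2$, we have $\mathcal{W}^{q}\hookrightarrow\mathcal{W}^{2}$ and $L^{q}(\Omega)\hookrightarrow L^{2}(\Omega)$, so this $u$ coincides with the $\mathcal{W}^{2}$ solution furnished by the previous lemma (auxiliary Poisson problem, \#1).

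The H\"older/Poincar\'e estimates in the first step are routine. The only genuinely substantive ingredient --- the isomorphism property of $-\nabla\cdot\varepsilon\nabla$ on $\mathcal{W}^{q}$ for some $q>2$ --- is exactly what is being postulated through $q$-admissibility (and is known from the cited literature, e.g.~\cite{groger1989,HALLER2008}), so there is in effect no hard obstacle left inside the proof: the delicate point is simply to formulate the problem at the correct functional-analytic level, namely as an equation in $\mathcal{W}^{-q}=(\mathcal{W}^{q^{\prime}})^{\prime}$ rather than in $\mathcal{W}^{-2}$, so that the admissibility assumption applies verbatim.
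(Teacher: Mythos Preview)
Your proposal is correct and follows essentially the same route as the paper: both rewrite the problem as $-\nabla\cdot(\varepsilon\nabla u)=\mathcal{F}$ with $\mathcal{F}\in\mathcal{W}^{-q}$, invoke the $q$-admissibility isomorphism to obtain $u$ and the bound $\|u\|_{\mathcal{W}^{q}}\le c\|\mathcal{F}\|_{\mathcal{W}^{-q}}$, and then estimate $\|\mathcal{F}\|_{\mathcal{W}^{-q}}$ by H\"older. The only cosmetic difference is that the paper first bounds $\langle\mathcal{F},v\rangle$ for $v\in\mathcal{W}^{2}$ and extends to $\mathcal{W}^{q'}$ by density, whereas you estimate directly on $\mathcal{W}^{q'}$ via Poincar\'e; the content is the same.
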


\begin{proof}
Set $\psi =g\left( \cdot \right) +\nabla \cdot \left( \varepsilon \left(
\cdot \right) \nabla \widetilde{\varphi }\right) $.
Then $\psi \in \left( W^{1,q^{\prime }}\left( \Omega
\right) \right) ^{\prime }$, the dual of $W^{1,q^{\prime }}\left( \Omega
\right) $ (see~\cite{ZIE1989}, Th. 4.3.2, p.186). But the inclusion
$\mathcal{W}^{q^{\prime }}\subset W^{1,q^{\prime }}\left( \Omega \right)$
implies $\left( W^{1,q^{\prime }}\left( \Omega \right) \right) ^{\prime }\subset
\left( \mathcal{W}^{q^{\prime }}\right) ^{\prime }\equiv \mathcal{W}^{-q}$
so that the right-hand side of (\ref{eq:aux2 poisson_eq}) is an element of
$\mathcal{W}^{-q}$. Then by $q-$admissibility there is a unique solution
$u\in \mathcal{W}^{q}$ to problem (\ref{eq:aux2 poisson}) and
\begin{equation*}
\left\Vert u\right\Vert _{\mathcal{W}^{q}}\leq c\left( q,\Omega ,\varepsilon
\right) \left\Vert \psi \right\Vert _{\mathcal{W}^{-q}}.
\end{equation*}%
Now, $q> 2$ implies $q^{\prime }< 2$ so that $\mathcal{W}^{q}\subset
\mathcal{W}^{2}\subset \mathcal{W}^{q^{\prime }}$. Then, for $v\in \mathcal{W}^{2}$,
we have by H\"{o}lder's inequality
\begin{align*}
& \left\vert \left\langle \psi ,v\right\rangle_{\mathcal{W}^{-q}}\right\vert
=\left\vert \int_{\Omega }\left( gv-\varepsilon \nabla \widetilde{\varphi }%
\cdot \nabla v\right) dx\right\vert & \\
&\leq \left\Vert g\right\Vert _{L^{q}\left( \Omega \right) }\left\Vert
v\right\Vert _{L^{q\prime }\left( \Omega \right) }+\overline{\varepsilon }%
\left\Vert \nabla \widetilde{\varphi }\right\Vert _{L^{q}\left( \Omega
\right) }\left\Vert \nabla v\right\Vert _{L^{q\prime }\left( \Omega \right) } &
\\
&\leq c\left( q,\Omega \right) \left( \left\Vert g\right\Vert _{L^{q}\left(
\Omega \right) }+\overline{\varepsilon }\left\Vert \nabla \widetilde{\varphi
}\right\Vert _{L^{q}\left( \Omega \right) }\right) \left\Vert v\right\Vert _{%
\mathcal{W}^{q^{\prime }}}. &
\end{align*}%
By density the above estimate holds for all $v\in \mathcal{W}^{q^{\prime }}$
hence%
\begin{equation*}
\left\Vert \psi \right\Vert _{\mathcal{W}^{-q}}\leq c\left( q,\Omega \right)
\left( \left\Vert g\right\Vert _{L^{q}\left( \Omega \right) }+\overline{%
\varepsilon }\left\Vert \nabla \widetilde{\varphi }\right\Vert _{L^{q}\left(
\Omega \right) }\right)
\end{equation*}%
and~\eqref{u estimate1} follows.
\end{proof}
\begin{remark}
Lemma \ref{Lemma Poisson 2} guarantees that $\varphi \in W^{1,q}\left(
\Omega \right) $, hence (the restrictions) $\nabla \varphi \in L^{q}\left(
\Omega _{n}\right) $ and $\nabla \varphi \in L^{q}\left( \Omega _{p}\right) $%
. Moreover, using~\eqref{eq:norm_on_Wq} and~\eqref{u estimate1}, we get
\begin{align}
& \left\Vert \nabla \varphi \right\Vert _{L^{q}\left( \Omega _{n}\right)
}\leq \left\Vert \nabla u\right\Vert _{L^{q}\left( \Omega _{n}\right)
}+\left\Vert \nabla \widetilde{\varphi }\right\Vert _{L^{q}\left( \Omega
_{n}\right) }\leq \left\Vert u\right\Vert _{\mathcal{W}^{q}}+\left\Vert
\nabla \widetilde{\varphi }\right\Vert _{L^{q}\left( \Omega _{n}\right) }
\notag \\
& \leq c\left( q,\Omega ,\varepsilon \right) \left( \left\Vert g\right\Vert
_{L^{q}\left( \Omega \right) }+\left\Vert \nabla \widetilde{\varphi }%
\right\Vert _{L^{q}\left( \Omega \right) }\right) +\left\Vert \nabla
\widetilde{\varphi }\right\Vert _{L^{q}\left( \Omega _{n}\right) }  \notag \\
& \leq \left\{ c\left( q,\Omega ,\varepsilon \right) +1\right\} \left(
\left\Vert g\right\Vert _{L^{q}\left( \Omega \right) }+\left\Vert \nabla
\widetilde{\varphi }\right\Vert _{L^{q}\left( \Omega \right) }\right)
\label{grad fi estim}
\end{align}%
and a similar estimate holds true for $\left\Vert \nabla \varphi \right\Vert
_{L^{q}\left( \Omega _{p}\right) }$.
\end{remark}

\section{The multiscale model in the stationary case}\label{sec:stationary_multiscale_model}
In this section we examine the multiscale model of Sect.~\ref{sec:model_equations}
in stationary conditions. This corresponds to setting to zero all partial
derivatives with respect to the time variable $t$ and to assuming that all coefficients
and unknowns depend on the sole spatial variable $\mathbf{x}$.

\subsection{Polarons}

Eq.~\eqref{eq:gamma_polarons} has the explicit stationary solution for $%
\mathbf{y}\in \Gamma $%
\begin{equation}\label{P_red_stat}
P\left( \mathbf{y}\right) =\dfrac{2H\left( \mathbf{y}\right) }{\left(
k_{d}\left( \mathbf{y}\right) +k_{r}\right) \tau _{d}}\,e\left( \mathbf{y}%
\right) +\dfrac{2H\left( \mathbf{y}\right) \gamma \left( \mathbf{y}\right) }{%
k_{d}\left( \mathbf{y}\right) +k_{r}}\,n\left( \mathbf{y}\right) p\left(
\mathbf{y}\right)
\end{equation}%
and this expression has to be inserted into the condition on $\Gamma $ of
the stationary problems for excitons, electrons and holes. This is done in
the next sections.

\subsection{The auxiliary exciton problem}
Upon inserting (\ref{P_red_stat}) into~\eqref{eq:jump_2_excitons}
the stationary problem for the excitons reads
\begin{subequations}\label{e_red_stat}
\begin{align}
& -\nabla \cdot \left( D_{e}\left( \cdot \right) \nabla e\right) +\tau
_{e}^{-1}e=Q\left( \cdot \right) & \qquad \text{in }\Omega \setminus \Gamma \\
& \llbracket e\rrbracket=0 & \qquad \text{on }\Gamma \\
& \llbracket D_{e}\left( \cdot \right) \dfrac{\partial e}{\partial \nu
_{\Gamma }}\rrbracket=\alpha \left( \cdot \right) e-\beta \left( \cdot
\right) f\left( n,p\right) & \text{on }\Gamma \\
&  e=0 & \text{on }\Gamma_{D} \\
&  \dfrac{\partial e}{\partial \nu }=0 & \text{on }\Gamma_{N}%
\end{align}
\end{subequations}%
where we have set (for all $\mathbf{y}\in \Gamma $)
\begin{align}
& \alpha \left( \mathbf{y}\right) :=\dfrac{2H\left( \mathbf{y}\right) }{\tau
_{d}}\times \dfrac{k_{d}\left( \mathbf{y}\right) +\left( 1-\eta \right) k_{r}%
}{k_{d}\left( \mathbf{y}\right) +k_{r}}=\dfrac{2H\left( \mathbf{y}\right) }{%
\tau _{d}}-\dfrac{\beta \left( \mathbf{y}\right) }{\gamma \left( \mathbf{y}%
\right) \tau _{d}} \label{eq:alfa}\\
& \beta \left( \mathbf{y}\right) :=\dfrac{2\eta k_{r}\gamma \left( \mathbf{y}%
\right) H\left( \mathbf{y}\right) }{k_{d}\left( \mathbf{y}\right) +k_{r}} \label{eq:beta}\\
& f\left( n,p\right) :=np. \label{eq:f=np}
\end{align}
Taking into account the bounds
stated in Tab.~\ref{tab:assumptions} the functions $\alpha $ and $\beta $
satisfy the following constraints:
\begin{align*}
& 0\leq \left( 1-\eta \right) \dfrac{2H\left( \cdot \right) }{\tau _{d}}\leq
\alpha \left( \cdot \right) \leq \dfrac{2H\left( \cdot \right) }{\tau _{d}}%
\leq \dfrac{2\overline{h}}{\tau _{d}}=:\overline{\alpha }
\label{eq:alfa_bar} \\
& 0\leq \beta \left( \cdot \right) \leq 2\eta H\left( \cdot \right) \gamma
\left( \cdot \right) \leq 2\eta \overline{h}\overline{\gamma }=:\overline{%
\beta }
\end{align*}
For the moment let $f$ be a given function. Then the
transmission problem (\ref{e_red_stat}) is referred to as
the \emph{auxiliary exciton problem}:
\begin{subequations}\label{e_red_stat1}
\begin{align}
& -\nabla \cdot \left( D_{e}\left( \cdot \right) \nabla e\right) +\tau
_{e}^{-1}e=Q\left( \cdot \right) & \qquad \text{in }\Omega \setminus \Gamma \\
& \llbracket e\rrbracket=0 & \qquad \text{on }\Gamma \\
& \llbracket D_{e}\left( \cdot \right) \dfrac{\partial e}{\partial \nu
_{\Gamma }}\rrbracket=\alpha \left( \cdot \right) e-\beta \left( \cdot
\right) f\left( \cdot \right) & \text{on }\Gamma \\
&  e=0 & \text{on }\Gamma_{D} \\
&  \dfrac{\partial e}{\partial \nu }=0 & \text{on }\Gamma_{N}.%
\end{align}
\end{subequations}%

\begin{definition}
$e\in \mathcal{W}^{2}$ is called a variational solution to the auxiliary
exciton problem (\ref{e_red_stat1}) if%
\begin{equation}
b\left( e,v\right) =\ell \left( v\right) \qquad \forall v\in \mathcal{W}^{2}
\label{e_weak}
\end{equation}%
where
\begin{align*}
& b\left( u,v\right) =\int_{\Omega }D_{e}\left( \cdot \right) \nabla u\cdot
\nabla vdx+\tau _{e}^{-1}\int_{\Omega }uvdx+\int_{\Gamma }\alpha \left(
\cdot \right) uvd\sigma & \label{eq:b_exciton} \\
& \ell \left( v\right) =\int_{\Omega }Q\left( \cdot \right) vdx+\int_{\Gamma
}\beta \left( \cdot \right) f\left( \cdot \right) vd\sigma. &
\end{align*}
\end{definition}

\begin{lemma}[Auxiliary exciton problem]
\label{Lemma Excitons} Let $D_{e}$, $\tau _{e}$, $Q$ be as specified in
Tab.~\ref{tab:assumptions}; $f\in L^{2}\left( \Gamma \right) $; $\alpha
,\beta \in L^{\infty }\left( \Gamma \right) $ and $0\leq \alpha \leq
\overline{\alpha }$,\ $0\leq \beta \leq \overline{\beta }$ a.e. on $\Gamma $
for some constants $\overline{\alpha },\overline{\beta }>0$. Then there is a
unique variational solution $e$ to (\ref{e_red_stat1}). If in addition $f\geq 0$ a.e. on $\Gamma $, then the solution $%
e\geq 0$ a.e. in $\Omega $.
\end{lemma}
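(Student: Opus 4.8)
The plan is to verify the hypotheses of the Lax--Milgram Lemma for the bilinear form $b$ and the linear functional $\ell$ on $\mathcal{W}^2$, and then to establish nonnegativity via the standard trick of testing with the negative part of the solution.

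\emph{Existence and uniqueness.} First I would check that $b$ is bounded on $\mathcal{W}^2\times\mathcal{W}^2$: the volume terms are controlled by $\overline{d}_e$ and $\tau_e^{-1}$ together with the Poincar\'e inequality~\eqref{eq:norm_on_Wq}, while the interface term $\int_\Gamma \alpha\, uv\, d\sigma$ is controlled by $\overline{\alpha}$ and the continuity of the trace operator $\mathcal{W}^2\hookrightarrow L^2(\Gamma)$, which is available since $\Gamma$ is a regular surface strictly separated from $\Gamma_D$. Coercivity is the crucial point: because $\alpha\ge 0$ a.e. on $\Gamma$, the interface term $\int_\Gamma \alpha\, v^2\, d\sigma$ is nonnegative and can simply be dropped, leaving $b(v,v)\ge \underline{d}_e\|\nabla v\|_{L^2(\Omega)}^2 = \underline{d}_e\|v\|_{\mathcal{W}^2}^2$; the $\tau_e^{-1}\|v\|_{L^2(\Omega)}^2$ term is an additional bonus but is not even needed for coercivity once the equivalent norm~\eqref{eq:norm_on_Wq} is used. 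For $\ell$, boundedness follows from $Q\in L^2(\Omega)$, $f\in L^2(\Gamma)$, $\beta\in L^\infty(\Gamma)$, the Poincar\'e inequality, and the trace inequality. Lax--Milgram then yields a unique $e\in\mathcal{W}^2$ satisfying~\eqref{e_weak}.

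\emph{Nonnegativity.} Assuming $f\ge 0$ a.e. on $\Gamma$, I would test~\eqref{e_weak} with $v = e^- := \max\{-e,0\}\in\mathcal{W}^2$ (this is an admissible test function since truncation preserves membership in $\mathcal{W}^2$ and $e^-$ vanishes on $\Gamma_D$). Using $\nabla e\cdot\nabla e^- = -|\nabla e^-|^2$, $e\, e^- = -(e^-)^2$ pointwise, one obtains
\[
-\underline{d}_e\|\nabla e^-\|_{L^2(\Omega)}^2 - \tau_e^{-1}\|e^-\|_{L^2(\Omega)}^2 - \int_\Gamma \alpha\,(e^-)^2\, d\sigma \ \ge\ b(e,e^-) \ =\ \ell(e^-) \ =\ \int_\Omega Q\, e^-\, dx + \int_\Gamma \beta\, f\, e^-\, d\sigma \ \ge\ 0,
\]
where the last inequality uses $Q\ge 0$, $\beta\ge 0$, $f\ge 0$ and $e^-\ge 0$. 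Since $\alpha\ge 0$, the left-hand side is $\le -\underline{d}_e\|\nabla e^-\|_{L^2(\Omega)}^2 = -\underline{d}_e\|e^-\|_{\mathcal{W}^2}^2$, forcing $e^- = 0$, i.e. $e\ge 0$ a.e. in $\Omega$.

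\emph{Anticipated obstacle.} None of the steps is genuinely hard; the only point requiring a little care is the rigorous justification that $b(e,e^-)$ equals (rather than merely bounds) the explicit expression above, which rests on the chain rule for Sobolev functions under truncation ($\nabla e^- = -\nabla e\,\mathbf{1}_{\{e<0\}}$ a.e.) and on the fact that the trace of $e^-$ on $\Gamma$ coincides with the negative part of the trace of $e$. I would cite the standard Stampacchia-type result for this. A secondary subtlety is ensuring the interface integrals are well defined, which follows from the compact trace embedding $\mathcal{W}^2\hookrightarrow L^2(\Gamma)$; this is where the geometric hypotheses on $\Gamma$ (regular surface, strictly separated from $\Gamma_D$) enter. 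Everything else is a routine application of Lax--Milgram and the Poincar\'e/trace inequalities.
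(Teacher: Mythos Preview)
Your proposal is correct and follows essentially the same route as the paper: Lax--Milgram for existence/uniqueness (with coercivity obtained by dropping the nonnegative $\alpha$ and $\tau_e^{-1}$ contributions), and testing with $e^{-}$ for nonnegativity. The only cosmetic difference is that the paper organizes the positivity step via the splitting $b(e^{+}-e^{-},e^{-})=\ell(e^{-})$ together with $b(e^{+},e^{-})=0$ from disjoint supports, whereas you compute $b(e,e^{-})$ directly using $\nabla e\cdot\nabla e^{-}=-|\nabla e^{-}|^{2}$ and $e\,e^{-}=-(e^{-})^{2}$; the two arguments are equivalent.
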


\begin{proof}
(\textit{Existence and uniqueness}) By the Sobolev Imbedding Theorem on
submanifolds we have
$H^{1}\left( \Omega \right) \hookrightarrow L^{q}\left( \Gamma \right)$ for all
$q\in \left[ 2,4\right]$.
Thus, there exists a constant $c=c\left( q,\Omega ,\Gamma \right) $ such that%
\begin{equation*}
\left\Vert v\right\Vert _{L^{q}\left( \Gamma \right) }\leq c\left\Vert
v\right\Vert _{H^{1}\left( \Omega \right) }\qquad \qquad \forall v\in
H^{1}\left( \Omega \right)
\end{equation*}%
hence we obtain, in particular,
\begin{equation}
\left\Vert v\right\Vert_{L^{q}\left( \Gamma \right) }\leq c\left\Vert v
\right\Vert_{\mathcal{W}^{2}} \qquad \forall v\in \mathcal{W}^{2}. \label{v estim}
\end{equation}
Then, using (\ref{v estim}) with $q=2$, we have
\begin{eqnarray*}
\left\vert b\left( u,v\right) \right\vert &\leq &\int_{\Omega }\left\vert
D_{e}\nabla u\cdot \nabla v\right\vert dx+\tau _{e}^{-1}\int_{\Omega
}\left\vert uv\right\vert dx+\int_{\Gamma }\left\vert \alpha uv\right\vert
d\sigma \\
&\leq &\overline{d}_{e}\left\Vert \nabla u\right\Vert _{L^{2}\left( \Omega
\right) }\left\Vert \nabla v\right\Vert _{L^{2}\left( \Omega \right) }+\tau
_{e}^{-1}\left\Vert u\right\Vert _{L^{2}\left( \Omega \right) }\left\Vert
v\right\Vert _{L^{2}\left( \Omega \right) }+\overline{\alpha }\left\Vert
u\right\Vert _{L^{2}\left( \Gamma \right) }\left\Vert v\right\Vert
_{L^{2}\left( \Gamma \right) } \\
&\leq & \left( \overline{d}_{e}+\tau _{e}^{-1}c\left( \Omega \right) +c\left(
\Omega ,\Gamma \right) \overline{\alpha }\right) \left\Vert u\right\Vert _{%
\mathcal{W}^{2}}\left\Vert v\right\Vert _{\mathcal{W}^{2}}\qquad \qquad
\qquad \forall u,v\in \mathcal{W}^{2}.
\end{eqnarray*}%
This shows that $b\left( u,v\right)$ is continuous on $\mathcal{W}^{2}\times \mathcal{W}^{2}$.
Furthermore, $b\left( u,v\right)$ is coercive on $\mathcal{W}^{2}\times \mathcal{W}^{2}$
because
\begin{align*}
& b\left( v,v\right) \geq \underline{d}_{e}\int_{\Omega }\left\vert \nabla v\right\vert ^{2}dx
=\underline{d}_{e}\left\Vert v\right\Vert _{\mathcal{W}^{2}}^{2} & \qquad \forall v\in \mathcal{W}^{2}
\end{align*}%
(recall that $\tau _{e}>0$ and $\alpha \geq 0$). Finally,
$\ell \left( v\right)$ is continuous on $\mathcal{W}^{2}$ because
\begin{eqnarray*}
\left\vert \ell \left( v\right) \right\vert &\leq &\left\Vert Q\right\Vert
_{L^{2}\left( \Omega \right) }\left\Vert v\right\Vert _{L^{2}\left( \Omega
\right) }+\overline{\beta }\left\Vert f\right\Vert _{L^{2}\left( \Gamma
\right) }\left\Vert v\right\Vert _{L^{2}\left( \Gamma \right) } \\
&\leq &\left( c\left( \Omega \right) \left\Vert Q\right\Vert _{L^{2}\left(
\Omega \right) }+c\left( \Omega ,\Gamma \right) \overline{\beta }\left\Vert
f\right\Vert _{L^{2}\left( \Gamma \right) }\right) \left\Vert v\right\Vert _{%
\mathcal{W}^{2}}\qquad \qquad \qquad \forall v\in \mathcal{W}^{2}.
\end{eqnarray*}%
Then the assertion follows by the Lax-Milgram Lemma.

\noindent
(\textit{Positivity}) Define $e^{+}=\max \left\{ e,0\right\} $ and $%
e^{-}=\max \left\{ -e,0\right\} $. Then $e^{+},e^{-}\geq 0$ and
$e=e^{+}-e^{-}$. Since $e^{-}\in \mathcal{W}^{2}$, we can choose $v=e^{-}$ in (\ref{e_weak})
to get
\begin{equation*}
b\left( e^{+}-e^{-},e^{-}\right) =\ell \left( e^{-}\right).
\end{equation*}%
But $\ell \left( e^{-}\right) \geq 0$ so that%
\begin{equation*}
0\leq b\left( e^{-},e^{-}\right) \leq b\left( e^{+},e^{-}\right).
\end{equation*}%
Let $\Omega_{+}=\left\{ e\geq 0\right\} $ and $\Omega_{-}=\left\{ e\leq
0\right\} $: then $\left. e^{+}\right\vert _{\Omega_{-}}=0$, $\left.
e^{-}\right\vert _{\Omega_{+}}=0$, hence $e^{+}e^{-}=0$ in $\Omega =\Omega
_{+}\cup \Omega_{-}$. As a consequence we have also $e^{+}e^{-}=0$ in $%
\Gamma $ and $\nabla e^{+}\cdot \nabla e^{-}=0$ in $\Omega $, so that $%
b\left( e^{+},e^{-}\right) =0$. In conclusion $b\left( e^{-},e^{-}\right) =0$%
, from which it follows $e^{-}=0$, i.e. $e=e^{+}\geq 0$ in $\Omega $.
\end{proof}

\begin{remark}
From (\ref{e_weak}) where $v=e$ is chosen, we see that%
\begin{equation*}
\underline{d}_{e}\left\Vert e\right\Vert _{\mathcal{W}^{2}}^{2}\leq b\left(
e,e\right) =\ell \left( e\right) \leq \left( c\left( \Omega \right)
\left\Vert Q\right\Vert _{L^{2}\left( \Omega \right) }+c\left( \Omega
,\Gamma \right) \overline{\beta }\left\Vert f\right\Vert _{L^{2}\left(
\Gamma \right) }\right) \left\Vert e\right\Vert _{\mathcal{W}^{2}}
\end{equation*}%
hence the variational solution $e$ of (\ref{e_red_stat1}) satisfies the
estimate%
\begin{equation}
\left\Vert e\right\Vert _{\mathcal{W}^{2}}\leq \frac{c\left( \Omega \right)
}{\underline{d}_{e}}\left\Vert Q\right\Vert _{L^{2}\left( \Omega \right) }+%
\frac{c\left( \Omega ,\Gamma \right) }{\underline{d}_{e}}\overline{\beta }%
\left\Vert f\right\Vert _{L^{2}\left( \Gamma \right) }  \label{e estimate}
\end{equation}%
for some constants $c\left( \Omega \right) >0$, $c\left( \Omega ,\Gamma
\right) >0$.
\end{remark}

\subsection{The auxiliary electron problem}\label{sec:aux_elec_pb}
Upon inserting (\ref{P_red_stat}) into~\eqref{eq:gamma_electrons}
the stationary problem for the electrons reads
\begin{subequations}
\label{n_red_stat}
\begin{align}
& -\nabla \cdot \left( D_{n}\left( \cdot \right) \nabla n-\mu _{n}\left(
\cdot ,\left\vert \nabla \varphi \right\vert \right) n\nabla \varphi \right)
=0 & & \qquad \text{in }\Omega _{n} \\
& D_{n}\left( \cdot \right) \dfrac{\partial n}{\partial \nu _{\Gamma }}=\mu
_{n}\left( \cdot ,\left\vert \nabla \varphi \right\vert \right) \dfrac{%
\partial \varphi }{\partial \nu _{\Gamma }}n+h\left( \cdot ,p\right) n%
\mathbf{-}h_{e}\left( \cdot ,e\right)  & & \qquad \text{on }\Gamma  \\
& n=0 & & \qquad \text{on }\Gamma _{C} \\
& D_{n}\left( \cdot \right) \dfrac{\partial n}{\partial \nu }=\mu _{n}\left(
\cdot ,\left\vert \nabla \varphi \right\vert \right) \dfrac{\partial \varphi
}{\partial \nu }n & & \qquad \text{on }\Gamma _{n}
\end{align}
\end{subequations}%
where ($\mathbf{y}\in \Gamma $)
\begin{align}
& \omega \left( \mathbf{y}\right) :=\frac{k_{d}\left( \mathbf{y}\right) }{%
k_{d}\left( \mathbf{y}\right) +k_{r}}\dfrac{2H\left( \mathbf{y}\right) }{%
\tau _{d}}  \\
& h\left( \mathbf{y},p\right) :=\frac{\beta \left( \mathbf{y}\right) }{\eta }%
p \label{eq:h_elec}\\
& h_{e}\left( \mathbf{y},e\right) :=\omega \left( \mathbf{y}\right) e\label{eq:h_e_elec}
\end{align}%
and where $\beta $ is defined in (\ref{eq:beta}). Note that we have
\begin{equation}
0\leq \omega \left( \cdot \right) \leq \frac{2H\left( \cdot \right) }{\tau
_{d}}\leq \dfrac{2\overline{h}}{\tau _{d}}=\overline{\alpha }.
\label{sup omega}
\end{equation}
Now assume that the function $\varphi $ in (\ref{n_red_stat}) is given by $%
\varphi =u+\widetilde{\varphi }$ where $u$ is the solution of the auxiliary
Poisson problem (\ref{eq:aux2 poisson}). In addition, suppose that $\mu _{n}$, $%
h $ and $h_{e}$ are given and known (with $\mu _{n}$ satisfying the bounds of
Tab.~\ref{tab:assumptions}). Then the transmission problem (\ref{n_red_stat})
is referred to as the \emph{auxiliary electron problem}:
\begin{subequations}
\label{n_red_stat1}
\begin{align}
& -\nabla \cdot \left( D_{n}\left( \cdot \right) \nabla n-\mu _{n}\left(
\cdot \right) n\nabla \varphi \right) =0 & & \qquad \text{in }\Omega _{n} \\
& D_{n}\left( \cdot \right) \dfrac{\partial n}{\partial \nu _{\Gamma }}=\mu
_{n}\left( \cdot \right) \dfrac{\partial \varphi }{\partial \nu _{\Gamma }}%
n+h\left( \cdot \right) n\mathbf{-}h_{e}\left( \cdot \right)  & & \qquad
\text{on }\Gamma  \\
& n=0 & & \qquad \text{on }\Gamma _{C} \\
& D_{n}\left( \cdot \right) \dfrac{\partial n}{\partial \nu }=\mu _{n}\left(
\cdot \right) \dfrac{\partial \varphi }{\partial \nu }n & & \qquad \text{on }%
\Gamma _{n}.
\end{align}
\end{subequations}%

\begin{definition}
$n\in \mathcal{W}_{n}^{2}$ is called a variational solution to the auxiliary
electron problem (\ref{n_red_stat1}) if%
\begin{equation}
a_{n}\left( n,v\right) =L_{n}\left( v\right) \qquad \forall v\in \mathcal{W}%
_{n}^{2}  \label{n_weak}
\end{equation}%
where
\begin{align*}
& a_{n}\left( n,v\right) =\int_{\Omega_{n}}D_{n}\left( \cdot \right) \nabla
n\cdot \nabla vdx-\int_{\Omega_{n}}\mu _{n}\left( \cdot \right) n\left(
\nabla v\cdot \nabla \varphi \right) dx+\int_{\Gamma }h\left( \cdot \right)
nvd\sigma & \\
& L_{n}\left( v\right) =\int_{\Gamma }h_{e}\left( \cdot \right) vd\sigma. &
\end{align*}
\end{definition}

\begin{lemma}[Auxiliary electron problem]\label{Lemma Electrons}
Assume that $\left\{ \Omega ,\Gamma _{D},\Gamma
_{N}\right\} $ is a $q-$admissible triple for some $q\geq 3$; let $\varphi $
be given by Lemma \ref{Lemma Poisson 2}; $D_{n}$, $\mu _{n}\in L^{\infty
}\left( \Omega _{n}\right) $ and satisfying the bounds of Tab.~\ref%
{tab:assumptions}; $h,h_{e}\in L^{2}\left( \Gamma \right) $; $h\geq 0$ a.e.
on $\Gamma $. Then there is a constant $\delta >0$ such that if $\left\Vert
\nabla \varphi \right\Vert _{L^{q}\left( \Omega _{n}\right) }<\delta $ then
problem (\ref{n_red_stat1}) has a unique variational solution $n$. If in
addition $h_{e}\geq 0$ a.e. on $\Gamma $, then the solution $n\geq 0$ a.e.
in $\Omega _{n}$.
\end{lemma}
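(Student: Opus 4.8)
The plan is to obtain existence and uniqueness from the Lax-Milgram lemma applied to the bilinear form $a_n$ and the functional $L_n$ on the Hilbert space $\mathcal{W}_n^2$, equipped with the norm~\eqref{eq:norm_on_Wiq}, exactly in the spirit of Lemma~\ref{Lemma Excitons}, and then to deduce nonnegativity by the same truncation test used there, i.e.\ testing~\eqref{n_weak} with $v=n^-=\max\{-n,0\}$.

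First I would verify the hypotheses of Lax-Milgram. The diffusion term $\int_{\Omega_n}D_n\nabla n\cdot\nabla v\,dx$ is bounded above and below using $\underline{d}_n\le D_n\le\overline{d}_n$, giving $\underline{d}_n\|v\|_{\mathcal{W}_n^2}^2$ on the diagonal. The interface term $\int_\Gamma h\,nv\,d\sigma$ I would estimate by H\"older with exponents $(2,4,4)$ together with the trace embedding $H^1(\Omega_n)\hookrightarrow L^4(\Gamma)$, valid since $\Gamma$ is a two-dimensional manifold and $H^{1/2}(\Gamma)\hookrightarrow L^4(\Gamma)$, which is the analogue of~\eqref{v estim} on $\Omega_n$; this term is continuous, and since $h\ge0$ a.e.\ on $\Gamma$ it is nonnegative on the diagonal, so it only helps coercivity. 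The functional $L_n(v)=\int_\Gamma h_e v\,d\sigma$ is continuous by Cauchy-Schwarz on $\Gamma$ and the same trace embedding, since $h_e\in L^2(\Gamma)$.

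The delicate term --- and the reason the smallness hypothesis on $\|\nabla\varphi\|_{L^q(\Omega_n)}$ is needed --- is the convective contribution $\int_{\Omega_n}\mu_n\,n\,(\nabla v\cdot\nabla\varphi)\,dx$, which carries no sign. Recalling from Lemma~\ref{Lemma Poisson 2} and the remark following it that $\nabla\varphi\in L^q(\Omega_n)$, I would bound it by H\"older with exponents $q$, $r$ and $2$, where $1/r=1/2-1/q$; the assumption $q\ge3$ gives $r\le6$, so that $\|n\|_{L^r(\Omega_n)}\le c\,\|n\|_{\mathcal{W}_n^2}$ by the Sobolev embedding $H^1(\Omega_n)\hookrightarrow L^6(\Omega_n)$ in dimension three, which together with $0\le\mu_n\le\overline{\mu}_n$ yields
\[
\left|\int_{\Omega_n}\mu_n\,n\,(\nabla v\cdot\nabla\varphi)\,dx\right|\le c\,\overline{\mu}_n\,\|\nabla\varphi\|_{L^q(\Omega_n)}\,\|n\|_{\mathcal{W}_n^2}\,\|v\|_{\mathcal{W}_n^2}.
\]
This makes $a_n$ continuous unconditionally, and on the diagonal it gives
\[
a_n(v,v)\ge\left(\underline{d}_n-c\,\overline{\mu}_n\,\|\nabla\varphi\|_{L^q(\Omega_n)}\right)\|v\|_{\mathcal{W}_n^2}^2,
\]
so that choosing $\delta:=\underline{d}_n/(c\,\overline{\mu}_n)$ ensures coercivity whenever $\|\nabla\varphi\|_{L^q(\Omega_n)}<\delta$, and Lax-Milgram then provides the unique variational solution $n$.

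Finally, for positivity (assuming $h_e\ge0$) I would test with $v=n^-\in\mathcal{W}_n^2$ and use the standard identities $\nabla n^+\cdot\nabla n^-=0$ and $n^+\nabla n^-=0$ a.e.\ in $\Omega_n$, together with $n^+n^-=0$ on $\Gamma$ (traces commute with truncation), to get $a_n(n^+,n^-)=0$, hence $a_n(n,n^-)=-a_n(n^-,n^-)$. Since $a_n(n,n^-)=L_n(n^-)=\int_\Gamma h_e n^-\,d\sigma\ge0$, it follows that $a_n(n^-,n^-)\le0$, while coercivity forces $a_n(n^-,n^-)\ge C\|n^-\|_{\mathcal{W}_n^2}^2\ge0$; therefore $n^-=0$ and $n\ge0$ a.e.\ in $\Omega_n$. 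I expect the main obstacle to be precisely the handling of the convective term: everything reduces to absorbing it into the coercivity constant of the diffusion term, which is why one needs both $q\ge3$, to keep the conjugate Lebesgue exponent at most $6$, and the quantitative smallness of $\|\nabla\varphi\|_{L^q(\Omega_n)}$, the latter being in turn controllable through estimate~\eqref{grad fi estim} by restricting the data.
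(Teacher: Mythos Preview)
Your proposal is correct and follows essentially the same route as the paper: Lax--Milgram on $\mathcal{W}_n^2$ with the convective term controlled via H\"older with exponents $(q,r,2)$, $1/r=1/2-1/q$, and the Sobolev embedding $H^1(\Omega_n)\hookrightarrow L^r(\Omega_n)$ (which is where $q\ge3$ enters), the interface term via $(2,4,4)$ and the trace embedding into $L^4$, and positivity by testing with $n^-$. Your remark that $n^+\nabla n^-=0$ a.e.\ is needed to kill the drift contribution in $a_n(n^+,n^-)$ is in fact slightly more explicit than the paper's argument.
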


\begin{proof}
(\textit{Existence and uniqueness}) Let us show that $a_{n}\left( u,v\right) $ is continuous on $\mathcal{W}%
_{n}^{2}\times \mathcal{W}_{n}^{2}$. We have%
\begin{align}
& \left\vert a_{n}\left( u,v\right) \right\vert \leq \overline{d}%
_{n}\left\Vert \nabla u\right\Vert _{L^{2}\left( \Omega _{n}\right)
}\left\Vert \nabla v\right\Vert _{L^{2}\left( \Omega _{n}\right) }  \notag
\label{an} \\
& +\overline{\mu }_{n}\int_{\Omega _{n}}\left\vert u\right\vert \,\left\vert
\nabla v\right\vert \,\left\vert \nabla \varphi \right\vert
\,dx+\int_{\Gamma }h\left\vert uv\right\vert d\sigma .
\end{align}%
By virtue of the H\"{o}lder's inequality for three functions the following
estimate holds
\begin{equation}\label{3holder}
\int_{\Omega _{n}}\left\vert u\right\vert \,\left\vert \nabla v\right\vert
\,\left\vert \nabla \varphi \right\vert \,dx\leq \left\Vert u\right\Vert
_{L^{r}\left( \Omega _{n}\right) }\left\Vert \nabla v\right\Vert
_{L^{2}\left( \Omega _{n}\right) }\left\Vert \nabla \varphi \right\Vert
_{L^{q}\left( \Omega _{n}\right) }
\end{equation}%
where $1/r+1/q=1/2$. The continuity of the embedding $H^{1}\left( \Omega
_{n}\right) \rightarrow L^{r}\left( \Omega _{n}\right) $ ($2\leq r\leq 6$)
yields%
\begin{equation}\label{emb_est}
\left\Vert u\right\Vert _{L^{r}\left( \Omega _{n}\right) }\leq c\left(
q,\Omega _{n}\right) \left\Vert u\right\Vert _{H^{1}\left( \Omega
_{n}\right) }\leq c\left( q,\Omega _{n}\right) \left\Vert u\right\Vert _{%
\mathcal{W}_{n}^{2}}
\end{equation}%
and $2\leq r\leq 6$ implies $q\geq 3$. Therefore%
\begin{equation}
\int_{\Omega _{n}}\left\vert u\right\vert \,\left\vert \nabla v\right\vert
\,\left\vert \nabla \varphi \right\vert \,dx\leq c\left( q,\Omega
_{n}\right) \left\Vert \nabla \varphi \right\Vert _{L^{q}\left( \Omega
_{n}\right) }\left\Vert u\right\Vert _{\mathcal{W}_{n}^{2}}\left\Vert
v\right\Vert _{\mathcal{W}_{n}^{2}}.  \label{est1}
\end{equation}%
In addition, using the generalized H\"{o}lder's inequality and the
continuity of trace and embedding $H^{1}\left( \Omega _{n}\right)
\longrightarrow H^{1/2}\left( \partial \Omega _{n}\right) \longrightarrow
L^{4}\left( \partial \Omega _{n}\right) $, gives%
\begin{eqnarray}
\int_{\Gamma }h\left\vert uv\right\vert d\sigma  &\leq &\left\Vert
h\right\Vert _{L^{2}\left( \Gamma \right) }\left\Vert uv\right\Vert
_{L^{2}\left( \Gamma \right) }
\leq \left\Vert h\right\Vert _{L^{2}\left( \Gamma \right) }\left\Vert
u\right\Vert _{L^{4}\left( \Gamma \right) }\left\Vert v\right\Vert
_{L^{4}\left( \Gamma \right) }  \notag \\
&\leq &\left\Vert h\right\Vert _{L^{2}\left( \Gamma \right) }\left\Vert
u\right\Vert _{L^{4}\left( \partial \Omega _{n}\right) }\left\Vert
v\right\Vert _{L^{4}\left( \partial \Omega _{n}\right) }  \notag \\
&\leq &c\left( \Omega _{n}\right) \left\Vert h\right\Vert _{L^{2}\left(
\Gamma \right) }\left\Vert u\right\Vert _{H^{1}\left( \Omega _{n}\right)
}\left\Vert v\right\Vert _{H^{1}\left( \Omega _{n}\right) }  \notag \\
&\leq &c\left( \Omega _{n}\right) \left\Vert h\right\Vert _{L^{2}\left(
\Gamma \right) }\left\Vert u\right\Vert _{\mathcal{W}_{n}^{2}}\left\Vert
v\right\Vert _{\mathcal{W}_{n}^{2}}.  \label{est2}
\end{eqnarray}%
Inserting (\ref{est1}) and (\ref{est2}) into (%
\ref{an}) yields%
\begin{equation*}
\left\vert a_{n}\left( u,v\right) \right\vert \leq \left\{ \overline{d}%
_{n}+c\left( q,\Omega _{n}\right) \overline{\mu }_{n}\left\Vert \nabla
\varphi \right\Vert _{L^{q}\left( \Omega _{n}\right) }+c\left( \Omega
_{n}\right) \left\Vert h\right\Vert _{L^{2}\left( \Gamma \right) }\right\}
\left\Vert u\right\Vert _{\mathcal{W}_{n}^{2}}\left\Vert v\right\Vert _{%
\mathcal{W}_{n}^{2}}
\end{equation*}%
which proves the continuity of $a_{n}\left( u,v\right) $. Concerning the
coercivity of $a_{n}\left( u,v\right) $, we have for $v\in \mathcal{W}%
_{n}^{2}$ (recall that $h\geq 0$)%
\begin{eqnarray*}
a_{n}\left( v,v\right)  &=&\int_{\Omega _{n}}D_{n}\left\vert \nabla
v\right\vert ^{2}dx-\int_{\Omega _{n}}\mu _{n}\left( \nabla \varphi \cdot
\nabla v\right) vdx+\int_{\Gamma }hv^{2}d\sigma  \\
&\geq &\underline{d}_{n}\int_{\Omega _{n}}\left\vert \nabla v\right\vert
^{2}dx-\int_{\Omega _{n}}\mu _{n}\left( \nabla \varphi \cdot \nabla v\right)
vdx \\
&\geq &\underline{d}_{n}\left\Vert \nabla v\right\Vert _{L^{2}\left( \Omega
_{n}\right) }^{2}-\overline{\mu }_{n}\int_{\Omega _{n}}\left\vert \nabla
\varphi \right\vert \,\left\vert \nabla v\right\vert \,\left\vert
v\right\vert \,dx \\
&&\text{by (\ref{est1})} \\
&\geq &\underline{d}_{n}\left\Vert v\right\Vert _{\mathcal{W}%
_{n}^{2}}^{2}-c\left( q,\Omega _{n}\right) \overline{\mu }_{n}\left\Vert
\nabla \varphi \right\Vert _{L^{q}\left( \Omega _{n}\right) }\left\Vert
v\right\Vert _{\mathcal{W}_{n}^{2}}^{2}
\end{eqnarray*}%
hence%
\begin{equation*}
a_{n}\left( v,v\right) \geq \Lambda _{n}\left\Vert v\right\Vert _{\mathcal{W}%
_{n}^{2}}^{2}\qquad \qquad \qquad \forall v\in \mathcal{W}_{n}^{2}
\end{equation*}%
where%
\begin{equation}
\Lambda _{n}:=\underline{d}_{n}-c\left( q,\Omega _{n}\right) \overline{\mu }%
_{n}\left\Vert \nabla \varphi \right\Vert _{L^{q}\left( \Omega _{n}\right) }.
\label{Lambda_n}
\end{equation}%
\newline
Using again the continuity of trace and embedding allows us to prove that $%
L_{n}\left( v\right) $ is continuous on $\mathcal{W}_{n}^{2}$:
\begin{eqnarray*}
\left\vert L_{n}\left( v\right) \right\vert  &\leq &\left\Vert
h_{e}\right\Vert _{L^{2}\left( \Gamma \right) }\left\Vert v\right\Vert
_{L^{2}\left( \Gamma \right) }\leq \left\Vert h_{e}\right\Vert _{L^{2}\left(
\Gamma \right) }\left\Vert v\right\Vert _{L^{2}\left( \partial \Omega
_{n}\right) } \\
&\leq &c\left( \Omega _{n}\right) \left\Vert h_{e}\right\Vert _{L^{2}\left(
\Gamma \right) }\left\Vert v\right\Vert _{\mathcal{W}_{n}^{2}}\qquad \qquad
\qquad \forall v\in \mathcal{W}_{n}^{2}.
\end{eqnarray*}%
Then we conclude that the existence of a unique solution follows by the
Lax-Milgram Lemma provided that $\Lambda _{n}>0$, i.e if
\begin{equation}
\left\Vert \nabla \varphi \right\Vert _{L^{q}\left( \Omega _{n}\right)
}<\delta :=\frac{\underline{d}_{n}}{\overline{\mu }_{n}}c\left( q,\Omega
_{n}\right) .  \label{eq:Lambda_n_gt_0}
\end{equation}

\noindent (\textit{Positivity}) Define $n^{+}=\max \left\{ n,0\right\} $ and
$n^{-}=\max \left\{ -n,0\right\} $. Then $n^{+},n^{-}\geq 0$ and $%
n=n^{+}-n^{-}$. Since $n^{-}\in \mathcal{W}_{n}^{2}$, we can choose $v=n^{-}$
in (\ref{n_weak}) to get%
\begin{equation*}
a_{n}\left( n^{+}-n^{-},n^{-}\right) =L_{n}\left( n^{-}\right) .
\end{equation*}%
But $L_{n}\left( n^{-}\right) \geq 0$ so that%
\begin{equation*}
a_{n}\left( n^{-},n^{-}\right) \leq a_{n}\left( n^{+},n^{-}\right) .
\end{equation*}%
Let $\Omega _{n}^{+}=\left\{ n\geq 0\right\} $ and $\Omega _{n}^{-}=\left\{
n\leq 0\right\} $: then $\Omega _{n}=\Omega _{n}^{+}\cup \Omega _{n}^{-}$
and $\left. n^{+}\right\vert _{\Omega _{n}^{-}}=\left. n^{-}\right\vert
_{\Omega _{n}^{+}}=0$, hence $a_{n}\left( n^{+},n^{-}\right) =0$. In
conclusion%
\begin{equation*}
0\leq \Lambda _{n}\left\Vert \nabla n^{-}\right\Vert _{L^{2}\left( \Omega
_{n}\right) }^{2}\leq a_{n}\left( n^{-},n^{-}\right) \leq 0
\end{equation*}%
from which it follows $\nabla n^{-}=0$ in $\Omega _{n}$. Then $n^{-}=0$ in $%
\Omega _{n}$ i.e. $n=n^{+}\geq 0$ in $\Omega _{n}$.
\end{proof}
\begin{remark}
From (\ref{n_weak}) where $v=n$ is chosen, we see that%
\begin{equation*}
\Lambda _{n}\left\Vert n\right\Vert _{\mathcal{W}_{n}^{2}}^{2}\leq
a_{n}\left( n,n\right) =L_{n}\left( n\right) \leq c\left( \Omega _{n}\right)
\left\Vert h_{e}\right\Vert _{L^{2}\left( \Gamma \right) }\left\Vert
n\right\Vert _{\mathcal{W}_{n}^{2}}
\end{equation*}%
hence the variational solution $n$ of (\ref{n_red_stat1}) satisfies the
estimate
\begin{equation}
\left\Vert n\right\Vert _{\mathcal{W}_{n}^{2}}\leq \frac{c\left( \Omega
_{n}\right) }{\Lambda _{n}}\left\Vert h_{e}\right\Vert _{L^{2}\left( \Gamma
\right) }  \label{n estimate}
\end{equation}%
for some $c=c\left( \Omega _{n}\right) >0$.
\end{remark}

\subsection{The auxiliary hole problem}

Upon inserting (\ref{P_red_stat}) into~\eqref{eq:gamma_holes} the stationary
problem for the holes reads
\begin{subequations}
\label{p_red_stat}
\begin{align}
& -\nabla \cdot \left( D_{p}\left( \cdot \right) \nabla p+\mu _{p}\left(
\cdot ,\left\vert \nabla \varphi \right\vert \right) p\nabla \varphi \right)
=0 & & \qquad \text{in }\Omega _{p} \\
& D_{p}\left( \cdot \right) \dfrac{\partial p}{\partial \nu _{\Gamma }}=-\mu
_{p}\left( \cdot ,\left\vert \nabla \varphi \right\vert \right) \dfrac{%
\partial \varphi }{\partial \nu _{\Gamma }}p\mathbf{+}h_{e}\left( \cdot
,e\right) -h\left( \cdot ,n\right) p & & \qquad \text{on }\Gamma \\
& p=0 & & \qquad \text{on }\Gamma _{A} \\
& D_{p}\left( \cdot \right) \dfrac{\partial p}{\partial \nu }=-\mu
_{p}\left( \cdot ,\left\vert \nabla \varphi \right\vert \right) \dfrac{%
\partial \varphi }{\partial \nu }p & & \qquad \text{on }\Gamma _{p}
\end{align}%
where $h$ and $h_{e}$ are defined as in~\eqref{eq:h_elec} and~%
\eqref{eq:h_e_elec}. In analogy with the case of electrons, we consider the
\textit{auxiliary hole problem:}
\end{subequations}
\begin{subequations}
\label{p_red_stat1}
\begin{align}
& -\nabla \cdot \left( D_{p}\left( \cdot \right) \nabla p+\mu _{p}\left(
\cdot \right) p\nabla \varphi \right) =0 & & \qquad \text{in }\Omega _{p} \\
& D_{p}\left( \cdot \right) \dfrac{\partial p}{\partial \nu _{\Gamma }}=-\mu
_{p}\left( \cdot \right) \dfrac{\partial \varphi }{\partial \nu _{\Gamma }}p%
\mathbf{+}h_{e}\left( \cdot ,e\right) -h\left( \cdot ,n\right) p & & \qquad
\text{on }\Gamma \\
& p=0 & & \qquad \text{on }\Gamma _{A} \\
& D_{p}\left( \cdot \right) \dfrac{\partial p}{\partial \nu }=-\mu
_{p}\left( \cdot ,\left\vert \nabla \varphi \right\vert \right) \dfrac{%
\partial \varphi }{\partial \nu }p & & \qquad \text{on }\Gamma _{p}.
\end{align}
\end{subequations}
\begin{definition}
$p\in \mathcal{W}_{p}^{2}$ is called a variational solution to the auxiliary
hole problem (\ref{p_red_stat1}) if
\begin{equation}
a_{p}\left( p,v\right) =L_{p}\left( v\right) \qquad \forall v\in \mathcal{W}%
_{p}^{2}  \label{p_weak}
\end{equation}%
where%
\begin{align*}
& a_{p}\left( p,v\right) =\int_{\Omega _{p}}D_{p}\left( \cdot \right) \nabla
p\cdot \nabla vdx+\int_{\Omega _{p}}\mu _{p}\left( \cdot \right) p\left(
\nabla v\cdot \nabla \varphi \right) dx+\int_{\Gamma }h\left( \cdot \right)
pvd\sigma ,\\
& L_{p}\left( v\right) =\int_{\Gamma }h_{e}\left( \cdot \right) vd\sigma .
\end{align*}
\end{definition}

\noindent Using the same arguments as in Sect.~\ref{sec:aux_elec_pb} we
conclude that:

\begin{itemize}
\item the bilinear form $a_{p}\left( u,v\right) $ is continuous on $\mathcal{%
W}_{p}^{2}\times \mathcal{W}_{p}^{2}$ and%
\begin{equation*}
\left\vert a_{p}\left( u,v\right) \right\vert \leq \left\{ \overline{d}%
_{p}+c\left( q,\Omega _{p}\right) \overline{\mu }_{p}\left\Vert \nabla
\varphi \right\Vert _{L^{q}\left( \Omega _{p}\right) }+c\left( \Omega
_{p}\right) \left\Vert h\right\Vert _{L^{2}\left( \Gamma \right) }\right\}
\left\Vert u\right\Vert _{\mathcal{W}_{p}^{2}}\left\Vert v\right\Vert _{%
\mathcal{W}_{p}^{2}}
\end{equation*}

\item we have%
\begin{equation*}
a_{p}\left( v,v\right) \geq \Lambda _{p}\left\Vert v\right\Vert _{\mathcal{W}%
_{p}^{2}}^{2}\qquad \qquad \qquad \forall v\in \mathcal{W}_{p}^{2}
\end{equation*}%
where%
\begin{equation}
\Lambda _{p}:=\underline{d}_{p}-c\left( q,\Omega _{p}\right) \overline{\mu }%
_{p}\left\Vert \nabla \varphi \right\Vert _{L^{q}\left( \Omega _{p}\right) }
\label{Lambda_p}
\end{equation}

\item the linear form $L_{p}\left( v\right) $ is continuous on $\mathcal{W}%
_{p}^{2}$ and%
\begin{equation*}
\left\vert L_{p}\left( v\right) \right\vert \leq c\left( \Omega_{p}\right)
\left\Vert h_{e}\right\Vert _{L^{2}\left( \Gamma \right) }\left\Vert
v\right\Vert _{\mathcal{W}_{p}^{2}}\qquad \qquad \qquad \forall v\in
\mathcal{W}_{p}^{2}.
\end{equation*}
\end{itemize}

\noindent The above properties allow us to prove the following result.

\begin{lemma}[Auxiliary hole problem]
\label{Lemma Holes}Assume that $\left\{ \Omega ,\Gamma _{D},\Gamma
_{N}\right\} $ is a $q-$admissible triple for some $q\geq 3$; let $\varphi $
be given by Lemma \ref{Lemma Poisson 2}; $D_{p}$, $\mu _{p}\in L^{\infty
}\left( \Omega _{p}\right) $ and satisfying the bounds of Tab.~\ref%
{tab:assumptions}; $h,h_{e}\in L^{2}\left( \Gamma \right) $; $h\geq 0$ a.e.
on $\Gamma $. Then there is a $\delta >0$ such that if $\left\Vert \nabla
\varphi \right\Vert _{L^{q}\left( \Omega _{p}\right) }<\delta $ then problem
(\ref{p_red_stat1}) has a unique variational solution $p$. If in addition $%
h_{e}\geq 0$ a.e. on $\Gamma $, then the solution $p\geq 0$ a.e. in $\Omega
_{p}$.
\end{lemma}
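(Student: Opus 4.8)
The plan is to follow verbatim the structure of the proof of Lemma~\ref{Lemma Electrons}, the auxiliary hole problem being the exact mirror image of the auxiliary electron problem with $\Omega_n$, $\mathcal{W}_n^2$, $D_n$, $\mu_n$, $\Gamma_C$ replaced by $\Omega_p$, $\mathcal{W}_p^2$, $D_p$, $\mu_p$, $\Gamma_A$ and with the sign of the drift term reversed. All the analytic ingredients are already in place: the three displayed properties preceding the statement establish that $a_p(\cdot,\cdot)$ is continuous on $\mathcal{W}_p^2\times\mathcal{W}_p^2$, that $a_p(v,v)\geq\Lambda_p\|v\|_{\mathcal{W}_p^2}^2$ with $\Lambda_p$ as in~\eqref{Lambda_p}, and that $L_p(\cdot)$ is continuous on $\mathcal{W}_p^2$ (here $q\geq 3$ enters exactly as in~\eqref{emb_est}--\eqref{est1}, through the embedding $H^1(\Omega_p)\hookrightarrow L^r(\Omega_p)$ with $1/r+1/q=1/2$, to bound the trilinear term $\int_{\Omega_p}|p|\,|\nabla v|\,|\nabla\varphi|\,dx$). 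Hence for the first assertion I would simply set $\delta:=\underline{d}_p/(c(q,\Omega_p)\,\overline{\mu}_p)$, the analogue of~\eqref{eq:Lambda_n_gt_0}, note that $\|\nabla\varphi\|_{L^q(\Omega_p)}<\delta$ forces $\Lambda_p>0$, and invoke the Lax--Milgram Lemma to obtain a unique $p\in\mathcal{W}_p^2$ solving~\eqref{p_weak}. The solution $\varphi$ itself, and the $L^q$-bound on $\nabla\varphi|_{\Omega_p}$ that gives the smallness condition a meaning, are provided by Lemma~\ref{Lemma Poisson 2} and the estimate~\eqref{grad fi estim}.

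For the positivity statement I would decompose $p=p^+-p^-$ with $p^\pm=\max\{\pm p,0\}\in\mathcal{W}_p^2$, insert $v=p^-$ into~\eqref{p_weak}, and use: (i) $L_p(p^-)=\int_\Gamma h_e p^-\,d\sigma\geq 0$ since $h_e\geq 0$ and $p^-\geq 0$; (ii) $p^+p^-=0$ a.e.\ in $\Omega_p$, hence also a.e.\ on $\Gamma$, while $\nabla p^+\cdot\nabla p^-=0$ and $p^+\nabla p^-=0$ a.e.\ in $\Omega_p$, so that every term of $a_p(p^+,p^-)$ vanishes; (iii) the coercivity bound. These combine into
\[
0\leq\Lambda_p\|\nabla p^-\|_{L^2(\Omega_p)}^2\leq a_p(p^-,p^-)=a_p(p^+,p^-)-L_p(p^-)\leq 0,
\]
so $\nabla p^-=0$ in $\Omega_p$; since by~\eqref{eq:norm_on_Wiq} $\|\nabla\cdot\|_{L^2(\Omega_p)}$ is a norm on $\mathcal{W}_p^2$, this gives $p^-=0$, i.e.\ $p=p^+\geq 0$.

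I do not expect a genuine obstacle. The only point worth a word of care is that the drift term in $a_p$ carries the opposite sign to the one in $a_n$ (holes being positively charged), but this is immaterial, because in both the continuity and the coercivity estimates that term is controlled in absolute value through the trilinear H\"{o}lder inequality~\eqref{3holder}, so its sign never plays a role. As a by-product, testing~\eqref{p_weak} with $v=p$ and combining coercivity with the continuity of $L_p$ yields the a priori estimate $\|p\|_{\mathcal{W}_p^2}\leq c(\Omega_p)\Lambda_p^{-1}\|h_e\|_{L^2(\Gamma)}$, the exact analogue of~\eqref{n estimate}, which will be needed when the auxiliary problems are assembled into the Gummel-type fixed-point map.
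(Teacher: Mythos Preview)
Your proposal is correct and follows exactly the approach the paper intends: the paper does not give a separate proof of Lemma~\ref{Lemma Holes} but simply states that ``using the same arguments as in Sect.~\ref{sec:aux_elec_pb}'' one obtains the continuity, coercivity (with $\Lambda_p$ as in~\eqref{Lambda_p}), and continuity of $L_p$, and then invokes Lax--Milgram and the positivity argument verbatim from Lemma~\ref{Lemma Electrons}. Your observation that the sign of the drift term is irrelevant because it is always estimated in absolute value is precisely the reason the electron proof transfers without modification.
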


\begin{remark}
The above solution satisfies the estimate
\begin{equation}
\left\Vert p\right\Vert _{\mathcal{W}_{p}^{2}}\leq \frac{c\left( \Omega
_{p}\right) }{\Lambda _{p}}\left\Vert h_{e}\right\Vert _{L^{2}\left( \Gamma
\right) }.  \label{p estimate}
\end{equation}
\end{remark}

\section{The fixed-point map}\label{sec:fixed_point_map}
In this section we collect the various auxiliary problems introduced before to end up
with a functional iteration that allows us to construct the solution of the multiscale
solar cell stationary model described in Sect.~\ref{sec:stationary_multiscale_model}.
\subsection{Preparatory lemmas}
Consider the ball of radius $R>0$ in the Hilbert direct sum
$\mathcal{W}_{n}^{2}\oplus \mathcal{W}_{p}^{2}$
\begin{equation*}
\mathcal{B}_{R}=\left\{ \left( n,p\right) \in \mathcal{W}_{n}^{2}\oplus
\mathcal{W}_{p}^{2}:\left\Vert n\right\Vert _{\mathcal{W}_{n}^{2}}^{2}+\left%
\Vert p\right\Vert _{\mathcal{W}_{p}^{2}}^{2}\leq R^{2}\right\}
\end{equation*}%
and its intersection $\mathcal{B}_{R}^{+}$ with the cone of nonnegative
functions $n\geq 0,p\geq 0$. Note that%
\begin{equation*}
\left( n,p\right) \in \mathcal{B}_{R}\qquad \Longrightarrow \qquad
\left\Vert n\right\Vert _{\mathcal{W}_{n}^{2}}\leq R,\qquad \left\Vert
p\right\Vert _{\mathcal{W}_{p}^{2}}\leq R.
\end{equation*}

\begin{lemma}
\label{Lemma g}Let $g\left( n,p\right) $ be given by (\ref{g})
and $\left( n,p\right) \in \mathcal{B}_{R}$. Then $g\in L^{q}\left( \Omega
\right) $ for $2\leq q\leq 6$ and there exists a constant $c=c\left(
q,\Omega_{n},\Omega_{p}\right) =c\left( q,\Omega ,\Gamma \right) $ such
that%
\begin{equation}\label{eq:g}
\left\Vert g\left( n,p\right) \right\Vert _{L^{q}\left( \Omega \right) }\leq cR.
\end{equation}
\end{lemma}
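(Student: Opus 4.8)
The plan is to bound the $L^q(\Omega)$ norm of $g(n,p)$ by splitting the integral over the two subdomains according to the piecewise definition~\eqref{g}, and then to invoke the Sobolev embedding $H^1\hookrightarrow L^q$ in each subdomain together with the fact that $(n,p)\in\mathcal{B}_R$ controls the $\mathcal{W}_n^2$ and $\mathcal{W}_p^2$ norms by $R$.

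First I would observe that, since $n\equiv 0$ in $\Omega_p$ and $p\equiv 0$ in $\Omega_n$ (cf.~\eqref{eq:electrons_in_Omega_p} and~\eqref{eq:holes_in_Omega_n}), one has
\begin{equation*}
\left\Vert g(n,p)\right\Vert_{L^q(\Omega)}^q=\int_{\Omega_n}|n|^q\,dx+\int_{\Omega_p}|p|^q\,dx=\left\Vert n\right\Vert_{L^q(\Omega_n)}^q+\left\Vert p\right\Vert_{L^q(\Omega_p)}^q.
\end{equation*}
Next, because $\Omega_n$ and $\Omega_p$ are (Lipschitz) subdomains of $\mathbb{R}^3$, the Sobolev Embedding Theorem gives $H^1(\Omega_i)\hookrightarrow L^q(\Omega_i)$ for $2\le q\le 6$, $i=n,p$; in particular there is a constant $c=c(q,\Omega_i)$ with $\left\Vert w\right\Vert_{L^q(\Omega_i)}\le c\left\Vert w\right\Vert_{H^1(\Omega_i)}$. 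Since $\mathcal{W}_n^2\subset H^1(\Omega_n)$ and the norm~\eqref{eq:norm_on_Wiq} is equivalent to the full $H^1$ norm (Poincar\'e inequality, as meas$(\Gamma_C)>0$, meas$(\Gamma_A)>0$), we get $\left\Vert n\right\Vert_{L^q(\Omega_n)}\le c(q,\Omega_n)\left\Vert n\right\Vert_{\mathcal{W}_n^2}$ and similarly for $p$.

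Combining these, and using $\left\Vert n\right\Vert_{\mathcal{W}_n^2}\le R$, $\left\Vert p\right\Vert_{\mathcal{W}_p^2}\le R$ for $(n,p)\in\mathcal{B}_R$, yields
\begin{equation*}
\left\Vert g(n,p)\right\Vert_{L^q(\Omega)}\le\left(c(q,\Omega_n)^q R^q+c(q,\Omega_p)^q R^q\right)^{1/q}\le cR
\end{equation*}
with $c=c(q,\Omega_n,\Omega_p)$, which upon identifying $\Omega_n,\Omega_p$ with the data $\{\Omega,\Gamma\}$ gives~\eqref{eq:g}. I do not anticipate a genuine obstacle here: the only point requiring a little care is making sure the constant is uniform in $(n,p)\in\mathcal{B}_R$, which is automatic since the embedding and Poincar\'e constants depend only on the geometry, not on the functions; the restriction $q\le 6$ is exactly the borderline exponent for the $H^1\hookrightarrow L^q$ embedding in three space dimensions.
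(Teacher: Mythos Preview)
Your argument is correct and follows essentially the same route as the paper: split $\|g\|_{L^q(\Omega)}^q$ into the contributions over $\Omega_n$ and $\Omega_p$, apply the Sobolev embedding $\mathcal{W}_i^2\hookrightarrow L^q(\Omega_i)$ for $2\le q\le 6$, and use $\|n\|_{\mathcal{W}_n^2},\|p\|_{\mathcal{W}_p^2}\le R$. The only extra detail you spell out is the Poincar\'e step behind the norm equivalence, which the paper takes for granted.
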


\begin{proof}
By the Sobolev Imbedding Theorem we have $\mathcal{W}_{n}^{2}\hookrightarrow
L^{q}\left( \Omega_{n}\right) $ and $\mathcal{W}_{p}^{2}\hookrightarrow
L^{q}\left( \Omega_{p}\right) $ for $2\leq q\leq 6$, hence%
\begin{eqnarray*}
\left\Vert g\right\Vert _{L^{q}\left( \Omega \right) }^{q} &=&\left\Vert
n\right\Vert _{L^{q}\left( \Omega_{n}\right) }^{q}+\left\Vert p\right\Vert
_{L^{q}\left( \Omega_{p}\right) }^{q}
\leq c\left( q,\Omega_{n}\right) \left\Vert n\right\Vert _{\mathcal{W}%
_{n}^{2}}^{q}+c\left( q,\Omega_{p}\right) \left\Vert p\right\Vert _{%
\mathcal{W}_{p}^{2}}^{q} \\
&<&c\left( q,\Omega_{n}\right) R^{q}+c\left( q,\Omega_{p}\right) R^{q}
\end{eqnarray*}%
and the assertion follows.
\end{proof}

\begin{lemma}
\label{Lemma np}Let $f\left( n,p\right) $ be given by (\ref{eq:f=np})
and $\left( n,p\right) \in \mathcal{B}_{R}$. Then there exists a constant $%
c=c\left( \Omega ,\Gamma \right) $ such that%
\begin{equation}\label{eq:f}
\left\Vert f\left( n,p\right) \right\Vert _{L^{2}\left( \Gamma \right) }\leq
cR^{2}.
\end{equation}
\end{lemma}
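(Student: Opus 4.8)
The plan is to estimate $\|f(n,p)\|_{L^2(\Gamma)}=\|np\|_{L^2(\Gamma)}$ by splitting the product via H\"older's inequality on $\Gamma$ into separate $L^4(\Gamma)$ factors for $n$ and for $p$, and then to control each factor by the trace-and-imbedding chain already used for~\eqref{est2}. Note first that for $(n,p)\in\mathcal{B}_R$ the functions $n\in\mathcal{W}_n^2$ and $p\in\mathcal{W}_p^2$ have well-defined traces on $\Gamma$ (since $\Gamma\subset\partial\Omega_n$ and $\Gamma\subset\partial\Omega_p$), so that $f(n,p)=np$ makes sense a.e. on $\Gamma$.

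The first step is the pointwise/H\"older bound on $\Gamma$: applying the Cauchy--Schwarz inequality to $n^{2}$ and $p^{2}$,
\begin{equation*}
\left\Vert f(n,p)\right\Vert_{L^{2}(\Gamma)}^{2}=\int_{\Gamma}n^{2}p^{2}\,d\sigma\leq\left\Vert n^{2}\right\Vert_{L^{2}(\Gamma)}\left\Vert p^{2}\right\Vert_{L^{2}(\Gamma)}=\left\Vert n\right\Vert_{L^{4}(\Gamma)}^{2}\left\Vert p\right\Vert_{L^{4}(\Gamma)}^{2},
\end{equation*}
hence $\left\Vert f(n,p)\right\Vert_{L^{2}(\Gamma)}\leq\left\Vert n\right\Vert_{L^{4}(\Gamma)}\left\Vert p\right\Vert_{L^{4}(\Gamma)}$. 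The second step is to invoke the continuity of the trace operator and of the imbedding $H^{1}(\Omega_{i})\hookrightarrow H^{1/2}(\partial\Omega_{i})\hookrightarrow L^{4}(\partial\Omega_{i})$ for $i=n,p$, together with the equivalent norm~\eqref{eq:norm_on_Wiq} (Poincar\'e inequality on $\mathcal{W}_{i}^{2}$), to get
\begin{equation*}
\left\Vert n\right\Vert_{L^{4}(\Gamma)}\leq c(\Omega_{n})\left\Vert n\right\Vert_{\mathcal{W}_{n}^{2}},\qquad\left\Vert p\right\Vert_{L^{4}(\Gamma)}\leq c(\Omega_{p})\left\Vert p\right\Vert_{\mathcal{W}_{p}^{2}}.
\end{equation*}
Combining the two displays and using $(n,p)\in\mathcal{B}_R\Rightarrow\|n\|_{\mathcal{W}_n^2}\leq R$, $\|p\|_{\mathcal{W}_p^2}\leq R$ gives
\begin{equation*}
\left\Vert f(n,p)\right\Vert_{L^{2}(\Gamma)}\leq c(\Omega_{n})\,c(\Omega_{p})\left\Vert n\right\Vert_{\mathcal{W}_{n}^{2}}\left\Vert p\right\Vert_{\mathcal{W}_{p}^{2}}\leq c(\Omega,\Gamma)\,R^{2},
\end{equation*}
which is~\eqref{eq:f}.

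There is no genuine obstacle here; the proof is essentially a repackaging of the computation~\eqref{est2}. The only point deserving a word of care is that the exponent $4$ is admissible in the trace-and-imbedding chain above, which is exactly the borderline case $q=4$ of the Sobolev imbedding on the submanifold $\Gamma$ already used in the proof of Lemma~\ref{Lemma Excitons}; it is precisely this admissible exponent that yields the clean quadratic dependence $R^{2}$ in~\eqref{eq:f} and, consequently, that the nonlinear coupling term $\beta f(n,p)$ on $\Gamma$ in the exciton problem~\eqref{e_red_stat1} lands in $L^{2}(\Gamma)$ as required by Lemma~\ref{Lemma Excitons}.
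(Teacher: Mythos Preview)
Your proof is correct and follows essentially the same route as the paper's: split $\|np\|_{L^2(\Gamma)}\le\|n\|_{L^4(\Gamma)}\|p\|_{L^4(\Gamma)}$ by H\"older, then control each $L^4(\Gamma)$ factor via the trace/Sobolev imbedding into $\mathcal{W}_i^2$, and conclude using $\|n\|_{\mathcal{W}_n^2},\|p\|_{\mathcal{W}_p^2}\le R$. The paper invokes~\eqref{v estim} with $q=4$ directly rather than the chain from~\eqref{est2}, but this is only a cosmetic difference.
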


\begin{proof}
Proceeding as for (\ref{v estim}), where $q=4$ and $\Omega $ is replaced by
$\Omega_{n}$ or $\Omega_{p}$, yields
\begin{align*}
& \left\Vert n\right\Vert _{L^{4}\left( \Gamma \right) }\leq c\left( \Omega_{n},\Gamma \right)
\left\Vert n\right\Vert _{\mathcal{W}_{n}^{2}}
& \qquad \forall n\in \mathcal{W}_{n}^{2}, \\
& \left\Vert p\right\Vert _{L^{4}\left( \Gamma \right) }\leq c\left( \Omega_{p},
\Gamma \right) \left\Vert p\right\Vert _{\mathcal{W}_{p}^{2}} &\qquad
\qquad \forall p\in \mathcal{W}_{p}^{2}
\end{align*}%
for suitable constants $c\left( \Omega_{n},\Gamma \right) $ and $c\left(
\Omega_{p},\Gamma \right) $. Then%
\begin{eqnarray*}
\left\Vert np\right\Vert _{L^{2}\left( \Gamma \right) } &\leq &\left\Vert
n\right\Vert _{L^{4}\left( \Gamma \right) }\left\Vert p\right\Vert
_{L^{4}\left( \Gamma \right) } \leq
c\left( \Omega_{n},\Gamma \right) c\left( \Omega_{p},\Gamma \right)
\left\Vert n\right\Vert _{\mathcal{W}_{n}^{2}}\left\Vert p\right\Vert _{%
\mathcal{W}_{p}^{2}} \\
&\leq &c\left( \Omega_{n},\Gamma \right) c\left( \Omega_{p},\Gamma \right)
R^{2}
\end{eqnarray*}%
and the assertion follows since $c\left( \Omega_{n},\Gamma \right) c\left(
\Omega_{p},\Gamma \right) =c\left( \Omega ,\Gamma \right) $.
\end{proof}
Our next aim is to prove the existence of a (unique) solution for the nonlinearly
coupled system of partial differential equations~\eqref{poisson},~\eqref{e_red_stat},~\eqref{n_red_stat} and~\eqref{p_red_stat}. To this end we define a
mapping $\mathbf{K}:\mathcal{B}_{R}^{+}\rightarrow \mathcal{W}_{n}^{2}\oplus
\mathcal{W}_{p}^{2}$ and prove that under suitable conditions it satisfies
the Contraction Mapping Theorem. Given the fixed point $\left( n,p\right) $
of $\mathbf{K}$, the potential $\varphi $ and the exciton concentration $e$
can be recovered as the solutions of the corresponding auxiliary problems.

\subsection{The definition}\label{sec:map_definition}
Let $\left( n,p\right) \in \mathcal{B}_{R}^{+}$. Then, the flow-chart of the map
$\left( n^{\ast },p^{\ast }\right) =\mathbf{K}\left( n,p\right)$
consists of three steps (illustrated in detail below) and
is schematically depicted in Fig.~\ref{fig:gummel_map}. The proposed solution map is a
variant of the classic Gummel iteration that is widely adopted in the
treatment of the Drift-Diffusion and Quantum-Drift-Diffusion
model for inorganic semiconductors. In this context, the Gummel map
has been subject of extensive theoretical and computational investigation,
see~\cite{markowich1986stationary,Jerome:AnalyCharTran,deFalcoSacco2005,deFalcoSacco2009}.
\begin{figure}[h!]
\centering
\includegraphics[width=1.10\textwidth,height=0.5\textwidth]{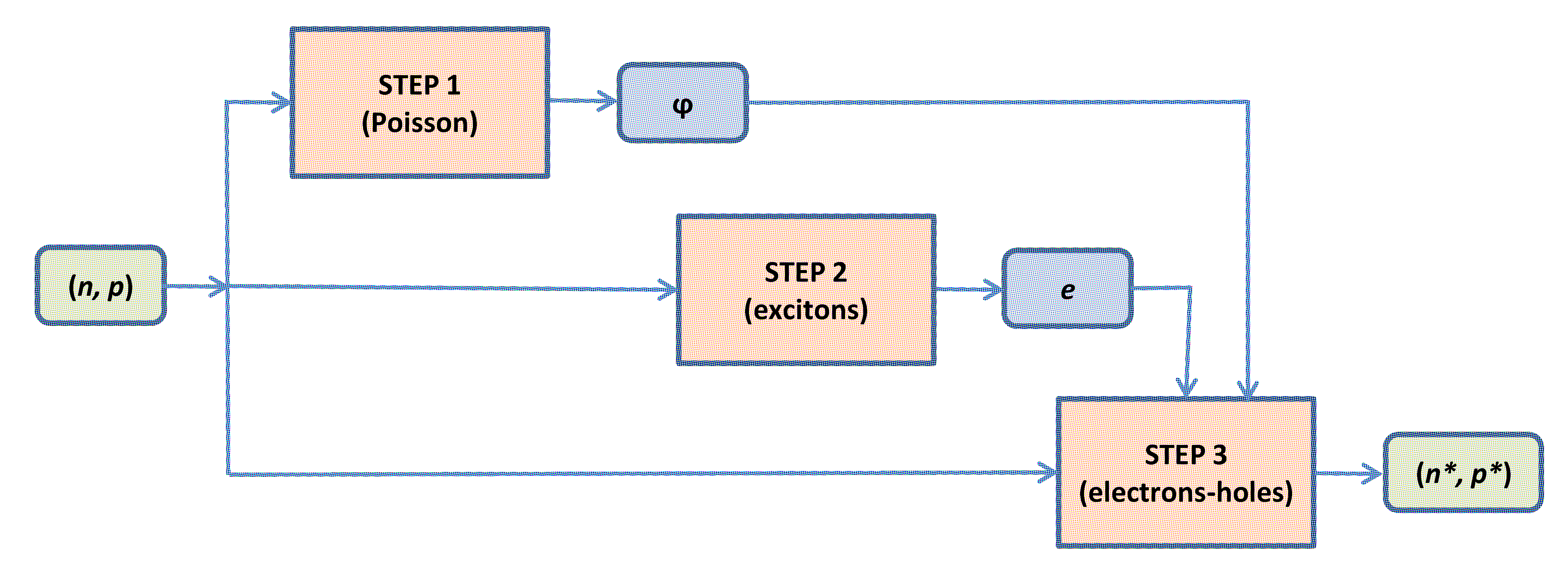}
\caption{Flow-chart of the solution map.}
\label{fig:gummel_map}
\end{figure}
\begin{description}
\item[STEP 1.] Assume that $\left\{ \Omega ,\Gamma _{D},\Gamma
_{N}\right\} $ is a $q-$admissible triple for some $q\geq 3$, $\varepsilon $ as specified in Tab.~\ref{tab:assumptions} and that there exists $\widetilde{\varphi }\in W^{1,q}\left( \Omega \right) $
whose trace on $\partial \Omega $ is equal to $\varphi _{D}$ on $\Gamma_{D}$.
Take $g\left( \cdot \right) \equiv g\left( n\left( \cdot \right) ,p\left(
\cdot \right) \right) $: then $g\left( \cdot \right) \in L^{q}\left( \Omega
\right) $ by Lemma \ref{Lemma g} so that there exists a unique weak solution
$\varphi $ to the auxiliary Poisson problem (\ref{eq:aux poisson}). Moreover
$\varphi \in W^{1,q}\left( \Omega \right) $ and for $i=n,p$,
applying~\eqref{grad fi estim} and~\eqref{eq:g}, we have
\begin{equation}
\left\Vert \nabla \varphi \right\Vert _{L^{q}\left( \Omega _{i}\right) }\leq
c\left( q,\Omega ,\Gamma ,\varepsilon \right) \left( R+\left\Vert \nabla
\widetilde{\varphi }\right\Vert _{L^{q}\left( \Omega \right) }\right).
\label{step 1}
\end{equation}

\item[STEP 2.] Assume $D_{e}$, $Q$, $H$, $\gamma $, $\tau _{e}$, $\tau _{d}$%
, $\eta $, $k_{d}$, $k_{r}$ to be as specified in Tab.~\ref{tab:assumptions}.
Take $f\left( \cdot \right) \equiv n\left( \cdot \right) p\left( \cdot
\right) \geq 0$: then $f\left( \cdot \right) \in L^{2}\left( \Gamma \right) $
by Lemma \ref{Lemma np}. Let $e$ be the unique and nonnegative weak solution
to the auxiliary exciton problem (\ref{e_red_stat1}). Then, using~\eqref{e estimate},~\eqref{eq:f} and the fact that $\overline{\beta }=2\overline{h}\overline{\gamma }\eta$, we get
\begin{equation}
\left\Vert e\right\Vert _{\mathcal{W}^{2}}\leq \frac{c\left( \Omega ,\Gamma
\right) }{\underline{d}_{e}}\left( \left\Vert Q\right\Vert _{L^{2}\left(
\Omega \right) }+\overline{h}\overline{\gamma }\eta R^{2}\right).
\label{step 2}
\end{equation}

\item[STEP 3.] Assume $D_{i}$, $\mu _{i}$, $i=n,p$, to be as specified in
Tab.~\ref{tab:assumptions}. Consider the auxiliary
electron problem~\eqref{n_red_stat1} where we
take $\mu _{n}\left( \cdot \right) \equiv \mu _{n}\left( \cdot ,\left\vert
\nabla \varphi \left( \cdot \right) \right\vert \right) $, $h\left( \cdot
\right) \equiv h\left( \cdot ,p\left( \cdot \right) \right) \geq 0$, $%
h_{e}\left( \cdot \right) \equiv h_{e}\left( \cdot ,e\left( \cdot \right)
\right) \geq 0$. In particular it is $h\left( \cdot \right) ,h_{e}\left(
\cdot \right) \in L^{2}\left( \Gamma \right) $ since $\beta ,\omega \in
L^{\infty }\left( \Gamma \right) $. Let $n^{\ast }$ be the unique and
nonnegative weak solution to (\ref{n_red_stat1}). Then, using~\eqref{n estimate},~\eqref{sup omega} and~\eqref{v estim} we get%
\begin{equation*}
\left\Vert n^{\ast }\right\Vert _{\mathcal{W}_{n}^{2}}\leq \frac{c\left(
\Omega _{n}\right) }{\Lambda _{n}}\left\Vert h_{e}\right\Vert _{L^{2}\left(
\Gamma \right) }\leq \frac{\overline{h}c\left( \Omega _{n}\right) }{\tau
_{d}\Lambda _{n}}\left\Vert e\right\Vert _{L^{2}\left( \Gamma \right) }\leq
\frac{\overline{h}c\left( \Omega _{n}\right) }{\tau _{d}\Lambda _{n}}%
\left\Vert e\right\Vert _{\mathcal{W}^{2}}
\end{equation*}%
and therefore, by~\eqref{step 2},
\begin{equation}
\left\Vert n^{\ast }\right\Vert _{\mathcal{W}_{n}^{2}}\leq c\left( \Omega
,\Gamma \right) \frac{\overline{h}}{\underline{d}_{e}\tau _{d}\Lambda _{n}}%
\left( \left\Vert Q\right\Vert _{L^{2}\left( \Omega \right) }+\overline{h}%
\overline{\gamma }\eta R^{2}\right).   \label{step 3}
\end{equation}%
Applying similar arguments to the auxiliary hole problem~\eqref{p_red_stat1}
we obtain the following estimate for the function $p^{\ast }\geq 0$, unique
and nonnegative weak solution to (\ref{p_red_stat1}):
\begin{equation}
\left\Vert p^{\ast }\right\Vert _{\mathcal{W}_{p}^{2}}\leq c\left( \Omega
,\Gamma \right) \frac{\overline{h}}{\underline{d}_{e}\tau _{d}\Lambda _{p}}%
\left\{ \left\Vert Q\right\Vert _{L^{2}\left( \Omega \right) }+\overline{h}%
\overline{\gamma }\eta R^{2}\right\} .  \label{step 4}
\end{equation}
\end{description}

\begin{remark}
From Lemmas \ref{Lemma Electrons} and \ref{Lemma Holes} we know that $%
n^{\ast }$ and $p^{\ast }$ exist provided $\left\Vert \nabla \varphi
\right\Vert _{L^{q}\left( \Omega_{i}\right) }<\delta $ ($i=n,p$) for a
small enough $\delta $ or, equivalently, if $\Lambda _{n}>0$ and $\Lambda
_{p}>0$.
\end{remark}

\subsection{The invariant set}

In this section we seek a sufficient condition for $\mathbf{K}$ to act
invariantly upon $\mathcal{B}_{R}^{+}$, i.e.%
\begin{equation}
\sqrt{\left\Vert n^{\ast }\right\Vert _{\mathcal{W}_{n}^{2}}^{2}+\left\Vert
p^{\ast }\right\Vert _{\mathcal{W}_{p}^{2}}^{2}}\leq R.
\label{eq:invariant_set_condition}
\end{equation}%
Using~(\ref{step 1}) in~(\ref{Lambda_n}) we get
\begin{equation*}
\Lambda _{n}\geq \underline{d}_{n}-\overline{\mu }_{n}c\left( q,\Omega
,\Gamma ,\varepsilon \right) \left( R+\left\Vert \nabla \widetilde{\varphi }%
\right\Vert _{L^{q}\left( \Omega \right) }\right) .
\end{equation*}%
Set, for notational simplicity,%
\begin{align*}
& c_{0}:=c\left( q,\Omega ,\Gamma ,\varepsilon \right)  \\
& \underline{d}:=\min \left\{ \underline{d}_{n};\underline{d}_{p}\right\}  \\
& \overline{\mu }:=\max \left\{ \overline{\mu }_{n},\overline{\mu }%
_{p}\right\}  \\
& \overline{R}:=\frac{\underline{d}}{c_{0}\overline{\mu }}-\left\Vert \nabla
\widetilde{\varphi }\right\Vert _{L^{q}\left( \Omega \right) }
\end{align*}
Then
\begin{equation}\label{Lambda_n_2}
\Lambda _{n}\geq \underline{d}-c_{0}\overline{\mu }\left( R+\left\Vert
\nabla \widetilde{\varphi }\right\Vert _{L^{q}\left( \Omega \right) }\right)
=c_{0}\overline{\mu }\left( \overline{R}-R\right) .
\end{equation}%
Now assume $0<R<\overline{R}$ (so that $\Lambda _{n}>0$). Then by (\ref{step
3})
\begin{equation*}
\left\Vert n^{\ast }\right\Vert _{\mathcal{W}_{n}^{2}}\leq \frac{c\left(
\Omega ,\Gamma \right) }{c_{0}}\frac{\overline{h}}{\underline{d}_{e}\tau _{d}%
\overline{\mu }}\frac{\left\Vert Q\right\Vert _{L^{2}\left( \Omega \right) }+%
\overline{h}\overline{\gamma }\eta R^{2}}{\overline{R}-R}.
\end{equation*}%
A similar estimate holds for $\left\Vert p^{\ast }\right\Vert _{\mathcal{W}%
_{p}^{2}}$ so that%
\begin{equation*}
\sqrt{\left\Vert n^{\ast }\right\Vert _{\mathcal{W}_{n}^{2}}^{2}+\left\Vert
p^{\ast }\right\Vert _{\mathcal{W}_{p}^{2}}^{2}}\leq c_{1}\frac{\overline{h}%
}{\underline{d}_{e}\tau _{d}\overline{\mu }}\frac{\left\Vert Q\right\Vert
_{L^{2}\left( \Omega \right) }+\overline{h}\overline{\gamma }\eta R^{2}}{%
\overline{R}-R}
\end{equation*}%
where $c_{1}=c_{1}\left( q,\Omega ,\Gamma ,\varepsilon \right) $. To satisfy~%
\eqref{eq:invariant_set_condition} we have to require that
\begin{equation*}
c_{1}\frac{\overline{h}}{\underline{d}_{e}\tau _{d}\overline{\mu }}\frac{%
\left\Vert Q\right\Vert _{L^{2}\left( \Omega \right) }+\overline{h}\overline{%
\gamma }\eta R^{2}}{\overline{R}-R}\leq R.
\end{equation*}%
Write the above inequality as%
\begin{equation}
R^{2}-a\overline{R}R+ab\leq 0  \label{R eq}
\end{equation}%
where
\begin{equation}\label{ab}
a:=\left( 1+c_{1}\frac{\overline{h}^{2}\overline{\gamma }\eta }{\underline{d}%
_{e}\tau _{d}\overline{\mu }}\right) ^{-1}\qquad ;\qquad b:=c_{1}\frac{%
\overline{h}}{\underline{d}_{e}\tau _{d}\overline{\mu }}\left\Vert
Q\right\Vert _{L^{2}\left( \Omega \right) }
\end{equation}%
Inequality~\eqref{R eq} is solvable iff%
\begin{equation*}
\overline{R}\geq \sqrt{\frac{4b}{a}}
\end{equation*}%
with solutions%
\begin{equation}
0<\underset{R_{1}}{\underbrace{\frac{a\overline{R}-\sqrt{a^{2}\overline{R}%
^{2}-4ab}}{2}}}\text{ }\leq R\leq \text{ }\underset{R_{2}}{\underbrace{\frac{%
a\overline{R}+\sqrt{a^{2}\overline{R}^{2}-4ab}}{2}}}.
\label{eq:values_or_R_1_and_R_2}
\end{equation}%
In conclusion, the map $\mathbf{K}$ acts invariantly upon $\mathcal{B}%
_{R}^{+}$ (i.e., $\mathbf{K}\mathcal{B}_{R}^{+}\subset \mathcal{B}_{R}^{+}$)
only for all the values of $R$ satisfying the set of conditions:
\begin{subequations}
\begin{align}
& \overline{R}\geq \sqrt{\dfrac{4b}{a}}  \label{eq:cond_1} \\
& 0<R<\overline{R} \label{eq:cond_2}\\
& R_{1}\leq R\leq R_{2}. \label{eq:cond_3}
\end{align}
\end{subequations}%
Condition~\eqref{eq:cond_1} reads explicitly%
\begin{equation}
\left\Vert \nabla \widetilde{\varphi }\right\Vert _{L^{q}\left( \Omega
\right) }+\sqrt{\dfrac{4c_{1}\overline{h}}{\underline{d}_{e}\tau _{d}%
\overline{\mu }}\left( 1+\dfrac{c_{1}\overline{h}^{2}\overline{\gamma }\eta
}{\underline{d}_{e}\tau _{d}\overline{\mu }}\right) \left\Vert Q\right\Vert
_{L^{2}\left( \Omega \right) }}\leq \dfrac{\underline{d}}{c_{0}\overline{\mu
}}  \label{cond2}
\end{equation}%
so it is certainly satisfied provided that
\begin{subequations}\label{cond2bis}
\begin{align}
& \left\Vert \nabla \widetilde{\varphi }\right\Vert _{L^{q}\left( \Omega
\right) }\text{ and }\dfrac{\overline{h}}{\underline{d}_{e}\tau _{d}%
\overline{\mu }}\left\Vert Q\right\Vert _{L^{2}\left( \Omega \right) }\text{
are small enough, or} & \label{cond2bis_1} \\
& \dfrac{\, \underline{d}\,}{\overline{\mu }}\text{ is large enough.} & \label{cond2bis_2}
\end{align}
\end{subequations}

\noindent From~\eqref{R eq} and the fact that $0<a<1$ it follows that
\begin{equation*}
0<R_{1}\leq R_{2}<R_{1}+R_{2}=a\overline{R}<\overline{R}
\end{equation*}%
which implies that~\eqref{eq:cond_3} is more restrictive than~\eqref{eq:cond_2}. We have thus proved the following result.

\begin{proposition}[Existence of an invariant set for $\mathbf{K}$]
\label{Proposition invariant set} Let $\mathbf{K}:\mathcal{B}%
_{R}^{+}\longrightarrow \mathcal{W}_{n}^{2}\oplus \mathcal{W}_{p}^{2}$ be
the map $\left( n^{\ast },p^{\ast }\right) =\mathbf{K}\left( n,p\right) $
defined through Steps 1-3. Assume that (\ref{cond2bis}) holds. Then $\mathbf{K}%
\mathcal{B}_{R}^{+}\subset \mathcal{B}_{R}^{+}$ for all $R$ satisfying $%
R_{1}\leq R\leq R_{2}$, where the values of $R_{1}$ and $R_{2}$ are given in~%
\eqref{eq:values_or_R_1_and_R_2}.
\end{proposition}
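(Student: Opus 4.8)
The plan is to assemble the a~priori estimates generated along Steps~1--3 into the single scalar inequality~\eqref{R eq} and then read off the admissible values of $R$. First I would fix $(n,p)\in\mathcal{B}_R^+$, so that $\|n\|_{\mathcal{W}_n^2}\le R$, $\|p\|_{\mathcal{W}_p^2}\le R$ and $n,p\ge 0$. Step~1 (Lemma~\ref{Lemma g} combined with Lemma~\ref{Lemma Poisson 2} and~\eqref{grad fi estim}) delivers the potential $\varphi$ and the bound~\eqref{step 1} on $\|\nabla\varphi\|_{L^q(\Omega_i)}$; inserting this into~\eqref{Lambda_n} and~\eqref{Lambda_p} yields $\Lambda_i\ge c_0\overline{\mu}(\overline{R}-R)$ as in~\eqref{Lambda_n_2}, which is strictly positive exactly when $R<\overline{R}$. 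Since that positivity is precisely what makes Lemmas~\ref{Lemma Electrons} and~\ref{Lemma Holes} applicable, I would carry $0<R<\overline{R}$ as a standing requirement through the rest of the argument.

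Next I would run Step~2: with $f=np\ge 0$, Lemma~\ref{Lemma np} gives $\|f\|_{L^2(\Gamma)}\le cR^2$, whence Lemma~\ref{Lemma Excitons} produces a unique nonnegative exciton concentration $e$ obeying~\eqref{step 2}. Feeding $h=\beta\eta^{-1}p\ge 0$ and $h_e=\omega e\ge 0$ into Step~3, Lemmas~\ref{Lemma Electrons} and~\ref{Lemma Holes} give unique nonnegative $n^\ast,p^\ast$ satisfying~\eqref{step 3}--\eqref{step 4}. Combining these two estimates, with $\Lambda_n,\Lambda_p$ bounded below by~\eqref{Lambda_n_2}, produces
\[
\sqrt{\|n^\ast\|_{\mathcal{W}_n^2}^2+\|p^\ast\|_{\mathcal{W}_p^2}^2}\le c_1\frac{\overline{h}}{\underline{d}_e\tau_d\overline{\mu}}\,\frac{\|Q\|_{L^2(\Omega)}+\overline{h}\overline{\gamma}\eta R^2}{\overline{R}-R}.
\]
Imposing that the right-hand side be $\le R$ and clearing the positive denominator $\overline{R}-R$ gives exactly~\eqref{R eq} with $a,b$ as in~\eqref{ab}. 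Because $n^\ast,p^\ast\ge 0$ by construction, landing in the ball $\mathcal{B}_R$ automatically means landing in $\mathcal{B}_R^+$, so invariance reduces to solvability of~\eqref{R eq}.

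Finally I would analyse~\eqref{R eq} as a quadratic in $R$: its leading coefficient is $1>0$ and $ab>0$, so it has real solutions $R\in[R_1,R_2]$ iff its discriminant $a^2\overline{R}^2-4ab$ is nonnegative, i.e.\ iff $\overline{R}\ge\sqrt{4b/a}$, which is~\eqref{eq:cond_1}; spelling this out in terms of the data is~\eqref{cond2}, and~\eqref{cond2bis} is a sufficient set of hypotheses for it. It remains only to confirm compatibility with the standing bound $R<\overline{R}$: since $0<a<1$ one has $R_1\le R_2<R_1+R_2=a\overline{R}<\overline{R}$, so~\eqref{eq:cond_3} is more restrictive than~\eqref{eq:cond_2} and no separate check is needed, and $\mathbf{K}$ is well defined throughout $[R_1,R_2]$. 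Hence for every such $R$ we get $\mathbf{K}\mathcal{B}_R^+\subset\mathcal{B}_R^+$, as claimed. I do not expect a genuine obstacle: the only delicate point is the coupling bookkeeping — one must keep track that the \emph{same} radius $R$ bounding the input $(n,p)$ is what controls $\|\nabla\varphi\|_{L^q}$, hence $\Lambda_n,\Lambda_p$, hence the output norms, so that a single scalar inequality governs invariance.
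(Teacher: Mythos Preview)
Your proposal is correct and follows essentially the same route as the paper: you chain the bounds~\eqref{step 1}--\eqref{step 4} through the lower bound~\eqref{Lambda_n_2} on $\Lambda_i$ to reduce invariance to the quadratic inequality~\eqref{R eq}, and then resolve the constraints exactly as the paper does, including the observation that $0<a<1$ forces $R_2<a\overline{R}<\overline{R}$ so that~\eqref{eq:cond_3} subsumes~\eqref{eq:cond_2}. The only additional remark worth making is that you explicitly note $n^\ast,p^\ast\ge 0$ ensures the image lies in the \emph{positive} cone $\mathcal{B}_R^+$ and not merely in $\mathcal{B}_R$, a point the paper leaves implicit.
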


\subsection{Fixed-point by contraction}

The goal of this section is to prove that $\mathbf{K}$ is a strict contraction mapping of
$\mathcal{B}_{R}^{+}$ into itself. This ensures that $\mathbf{K}$ admits a unique fixed point.
Let $\left( n_{i},p_{i}\right) \in
\mathcal{B}_{R}^{+}$ and $\left( n_{i}^{\ast },p_{i}^{\ast }\right) =
\mathbf{K}\left( n_{i},p_{i}\right) $ ($i=1,2$): then we seek a constant $\lambda
\in \left( 0,1\right) $ such that%
\begin{equation}
\sqrt{\left\Vert n_{2}^{\ast }-n_{1}^{\ast }\right\Vert _{\mathcal{W}%
_{n}^{2}}^{2}+\left\Vert p_{2}^{\ast }-p_{1}^{\ast }\right\Vert _{\mathcal{W}%
_{p}^{2}}^{2}}\leq \lambda \sqrt{\left\Vert n_{2}-n_{1}\right\Vert _{%
\mathcal{W}_{n}^{2}}^{2}+\left\Vert p_{2}-p_{1}\right\Vert _{\mathcal{W}%
_{p}^{2}}^{2}}.
\label{contraction}
\end{equation}

\begin{lemma}
\label{Lemma g2-g1}Let $\left( n_{i},p_{i}\right) \in \mathcal{B}_{R}^{+}$
and%
\begin{equation}\label{eq: g2-g1}
g\left( \cdot \right) =g\left( n_{2},p_{2}\right) -g\left(
n_{1},p_{1}\right) =\left\{
\begin{array}{ll}
-\left( n_{2}-n_{1}\right) & \text{in \ }\Omega_{n} \\
p_{2}-p_{1} & \text{in }\Omega_{p}.%
\end{array}%
\right.
\end{equation}%
Then $g\in L^{q}\left( \Omega \right) $ for $2\leq q\leq 6$ and there exists
a constant $c=c\left( q,\Omega ,\Gamma \right) $ such that%
\begin{equation}\label{eq: g2-g1_est}
\left\Vert g\right\Vert _{L^{q}\left( \Omega \right) }\leq c\sqrt{\left\Vert
n_{2}-n_{1}\right\Vert _{\mathcal{W}_{n}^{2}}^{2}+\left\Vert
p_{2}-p_{1}\right\Vert _{\mathcal{W}_{p}^{2}}^{2}}.
\end{equation}
\end{lemma}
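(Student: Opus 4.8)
The plan is to mimic the proof of Lemma~\ref{Lemma g} essentially word for word, the only extra ingredient being the passage from a sum of $q$-th powers to the single Euclidean norm appearing on the right-hand side of~\eqref{eq: g2-g1_est}.

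First I would note that, since $\mathcal{W}_n^2$ and $\mathcal{W}_p^2$ are linear spaces and $(n_i,p_i)\in\mathcal{B}_R^+$, the differences satisfy $n_2-n_1\in\mathcal{W}_n^2$ and $p_2-p_1\in\mathcal{W}_p^2$. By the Sobolev Imbedding Theorem one has $\mathcal{W}_n^2\hookrightarrow L^q(\Omega_n)$ and $\mathcal{W}_p^2\hookrightarrow L^q(\Omega_p)$ for $2\le q\le 6$, so that
\[
\left\Vert n_2-n_1\right\Vert_{L^q(\Omega_n)}\le c(q,\Omega_n)\left\Vert n_2-n_1\right\Vert_{\mathcal{W}_n^2},\qquad
\left\Vert p_2-p_1\right\Vert_{L^q(\Omega_p)}\le c(q,\Omega_p)\left\Vert p_2-p_1\right\Vert_{\mathcal{W}_p^2}.
\]
Since $\Gamma$ has zero three-dimensional measure and, by~\eqref{eq: g2-g1}, $g$ equals $-(n_2-n_1)$ on $\Omega_n$ and $p_2-p_1$ on $\Omega_p$, it follows that $\left\Vert g\right\Vert_{L^q(\Omega)}^q=\left\Vert n_2-n_1\right\Vert_{L^q(\Omega_n)}^q+\left\Vert p_2-p_1\right\Vert_{L^q(\Omega_p)}^q<\infty$, which already yields $g\in L^q(\Omega)$ for $2\le q\le 6$.

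Next I would carry out the recombination into the Euclidean norm. Writing $c_\ast:=\max\{c(q,\Omega_n),c(q,\Omega_p)\}$, $A:=\left\Vert n_2-n_1\right\Vert_{\mathcal{W}_n^2}$ and $B:=\left\Vert p_2-p_1\right\Vert_{\mathcal{W}_p^2}$, the identity above gives $\left\Vert g\right\Vert_{L^q(\Omega)}^q\le c_\ast^q\,(A^q+B^q)$. Because $q\ge 2$, the elementary scalar inequality $A^q+B^q\le(A^2+B^2)^{q/2}$ holds (it is just the monotonicity in $q$ of the $\ell^q$ norm on $\mathbb{R}^2$), hence $\left\Vert g\right\Vert_{L^q(\Omega)}\le c_\ast\sqrt{A^2+B^2}$, which is precisely~\eqref{eq: g2-g1_est} with $c=c_\ast=c(q,\Omega,\Gamma)$, the last identification being the same as in Lemma~\ref{Lemma g}.

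No real obstacle is expected here: the bound is a direct consequence of linearity and of the Sobolev embedding already invoked repeatedly above, and the only point deserving an explicit line is the passage $A^q+B^q\le(A^2+B^2)^{q/2}$ used to express the estimate through $\sqrt{\left\Vert n_2-n_1\right\Vert_{\mathcal{W}_n^2}^2+\left\Vert p_2-p_1\right\Vert_{\mathcal{W}_p^2}^2}$.
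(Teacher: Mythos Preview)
Your proof is correct and follows essentially the same route as the paper: apply the Sobolev embedding $\mathcal{W}_i^2\hookrightarrow L^q(\Omega_i)$ to each piece exactly as in Lemma~\ref{Lemma g}, then conclude via the elementary inequality $(a^q+b^q)^{1/q}\le (a^2+b^2)^{1/2}$ for $q\ge 2$. The paper's own proof is the same two-line argument, so there is nothing to add.
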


\begin{proof}
Proceeding as in the proof of Lemma~\ref{Lemma g} we have%
\begin{eqnarray*}
\left\Vert g\right\Vert _{L^{q}\left( \Omega \right) }^{q} &\leq &c\left(
q,\Omega_{n}\right) \left\Vert n_{2}-n_{1}\right\Vert _{\mathcal{W}%
_{n}^{2}}^{q}+c\left( q,\Omega_{p}\right) \left\Vert p_{2}-p_{1}\right\Vert
_{\mathcal{W}_{p}^{2}}^{q} \\
&\leq &c\left( q,\Omega ,\Gamma \right) \left( \left\Vert
n_{2}-n_{1}\right\Vert _{\mathcal{W}_{n}^{2}}^{q}+\left\Vert
p_{2}-p_{1}\right\Vert _{\mathcal{W}_{p}^{2}}^{q}\right).
\end{eqnarray*}%
Then the assertion follows from the inequality
\begin{equation*}
\left( a^{q}+b^{q}\right) ^{1/q}\leq \left( a^{2}+b^{2}\right) ^{1/2}
\end{equation*}
with $a>0$, $b>0$, $q\geq 2$.
\end{proof}

\begin{lemma}
\label{Lemma n2-n1}Let $\left( n_{i},p_{i}\right) \in \mathcal{B}_{R}^{+}$
and%
\begin{equation}\label{eq:Lemma n2-n1}
f\left( \cdot \right) =n_{2}p_{2}-n_{1}p_{1}
\end{equation}
Then $f\in L^{2}\left( \Gamma \right) $ and there exists a constant $%
c=c\left( \Omega ,\Gamma \right) $ such that%
\begin{equation}\label{eq:Lemma n2-n1_est}
\left\Vert f\right\Vert _{L^{2}\left( \Gamma \right)
}\leq cR\sqrt{\left\Vert n_{2}-n_{1}\right\Vert _{\mathcal{W}%
_{n}^{2}}^{2}+\left\Vert p_{2}-p_{1}\right\Vert _{\mathcal{W}_{p}^{2}}^{2}}
\end{equation}
\end{lemma}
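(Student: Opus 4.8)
The plan is to reduce the estimate to the bilinear trace inequalities already established in the proof of Lemma~\ref{Lemma np}. First I would split the difference by adding and subtracting a mixed product,
\[
f = n_2 p_2 - n_1 p_1 = n_2\,(p_2 - p_1) + (n_2 - n_1)\,p_1 ,
\]
where, as throughout the paper, all products are understood as products of traces on $\Gamma$, the $n$-factors being taken from $\Omega_n$ and the $p$-factors from $\Omega_p$. Applying H\"older's inequality $\|gh\|_{L^2(\Gamma)}\leq \|g\|_{L^4(\Gamma)}\|h\|_{L^4(\Gamma)}$ to each of the two terms gives
\[
\|f\|_{L^2(\Gamma)} \leq \|n_2\|_{L^4(\Gamma)}\,\|p_2-p_1\|_{L^4(\Gamma)} + \|n_2-n_1\|_{L^4(\Gamma)}\,\|p_1\|_{L^4(\Gamma)} .
\]

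Next I would invoke the continuity of trace and Sobolev embedding exactly as in the proof of Lemma~\ref{Lemma np}, i.e.\ estimate~\eqref{v estim} with $q=4$ applied on $\Omega_n$ and on $\Omega_p$: there are constants $c(\Omega_n,\Gamma)$ and $c(\Omega_p,\Gamma)$ with $\|w\|_{L^4(\Gamma)}\leq c(\Omega_n,\Gamma)\|w\|_{\mathcal{W}_n^2}$ for $w\in\mathcal{W}_n^2$ and $\|w\|_{L^4(\Gamma)}\leq c(\Omega_p,\Gamma)\|w\|_{\mathcal{W}_p^2}$ for $w\in\mathcal{W}_p^2$. Since $n_2-n_1\in\mathcal{W}_n^2$, $p_2-p_1\in\mathcal{W}_p^2$, and since $(n_i,p_i)\in\mathcal{B}_R^{+}$ forces $\|n_2\|_{\mathcal{W}_n^2}\leq R$ and $\|p_1\|_{\mathcal{W}_p^2}\leq R$, we get
\[
\|f\|_{L^2(\Gamma)} \leq c(\Omega,\Gamma)\,R\,\bigl(\|n_2-n_1\|_{\mathcal{W}_n^2} + \|p_2-p_1\|_{\mathcal{W}_p^2}\bigr),
\]
with $c(\Omega,\Gamma)=c(\Omega_n,\Gamma)\,c(\Omega_p,\Gamma)$.

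Finally I would pass from the sum to the Euclidean norm via the elementary inequality $a+b\leq\sqrt{2}\,\sqrt{a^2+b^2}$ for $a,b\geq 0$, absorbing the factor $\sqrt{2}$ into the constant; this yields~\eqref{eq:Lemma n2-n1_est} and, in particular, $f\in L^2(\Gamma)$. There is no real obstacle here: the only point requiring a little care is the bookkeeping of which trace (from $\Omega_n$ or from $\Omega_p$) each factor carries, so that the embeddings into $L^4(\Gamma)$ are applied in the correct space. Otherwise the chain of inequalities is essentially the one already carried out for Lemma~\ref{Lemma np}.
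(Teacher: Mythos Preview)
Your proof is correct and follows essentially the same approach as the paper. The only cosmetic difference is that the paper uses the decomposition $n_2p_2-n_1p_1=p_2(n_2-n_1)+n_1(p_2-p_1)$ rather than your $n_2(p_2-p_1)+(n_2-n_1)p_1$, and passes to the Euclidean norm directly (implicitly via Cauchy--Schwarz) rather than first bounding both coefficients by $R$ and then using $a+b\leq\sqrt{2}\sqrt{a^2+b^2}$; either way a harmless $\sqrt{2}$ is absorbed into $c(\Omega,\Gamma)$.
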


\begin{proof}
We have
\begin{equation*}
n_{2}p_{2}-n_{1}p_{1}=p_{2}\left( n_{2}-n_{1}\right) +n_{1}\left(
p_{2}-p_{1}\right)
\end{equation*}%
so that, by~\eqref{v estim} (where $\mathcal{W}^{2}$ is substituted by $%
\mathcal{W}_{p}^{2}$ and $\mathcal{W}_{n}^{2}$), we obtain
\begin{eqnarray*}
\left\Vert f\right\Vert _{L^{2}\left( \Gamma \right) } &\leq &\left\Vert
p_{2}\left( n_{2}-n_{1}\right) \right\Vert _{L^{2}\left( \Gamma \right)
}+\left\Vert n_{1}\left( p_{2}-p_{1}\right) \right\Vert _{L^{2}\left( \Gamma
\right) } \\
&\leq &\left\Vert p_{2}\right\Vert _{L^{4}\left( \Gamma \right) }\left\Vert
n_{2}-n_{1}\right\Vert _{L^{4}\left( \Gamma \right) }+\left\Vert
n_{1}\right\Vert _{L^{4}\left( \Gamma \right) }\left\Vert
p_{2}-p_{1}\right\Vert _{L^{4}\left( \Gamma \right) } \\
&\leq &c\left( \Omega ,\Gamma \right) \left( \left\Vert p_{2}\right\Vert _{%
\mathcal{W}_{p}^{2}}\left\Vert n_{2}-n_{1}\right\Vert _{\mathcal{W}%
_{n}^{2}}+\left\Vert n_{1}\right\Vert _{\mathcal{W}_{n}^{2}}\left\Vert
p_{2}-p_{1}\right\Vert _{\mathcal{W}_{p}^{2}}\right)  \\
&\leq &c\left( \Omega ,\Gamma \right) R\sqrt{\left\Vert
n_{2}-n_{1}\right\Vert _{\mathcal{W}_{n}^{2}}^{2}+\left\Vert
p_{2}-p_{1}\right\Vert _{\mathcal{W}_{p}^{2}}^{2}}.
\end{eqnarray*}
\end{proof}

Given $\left( n_{i},p_{i}\right) \in \mathcal{B}_{R}^{+}$, $i=1,2$, we call
$\varphi_{i}=u_{i}+\widetilde{\varphi }$ and $e_{i}$ the corresponding functions
computed by Steps 1 and 2, respectively.
Each of the $u_{i}$'s satisfies problem (\ref{Poisson_weak})
so that, taking the difference, we obtain
\begin{equation}
a\left( u_{2}-u_{1},v\right) =\int_{\Omega }\left( g\left(
n_{2},p_{2}\right) -g\left( n_{1},p_{1}\right) \right) v\,dx
\label{Poisson 2-1}
\end{equation}%
Problem~\eqref{Poisson 2-1} looks like the auxiliary Poisson problem (see
Lemma~\ref{Lemma Poisson 2} where $g$ is given by (\ref{eq: g2-g1}) and $\widetilde{\varphi }=0$), hence by
(\ref{u estimate1}) and (\ref{eq: g2-g1_est}) we have%
\begin{equation*}
\left\Vert u_{2}-u_{1}\right\Vert _{\mathcal{W}^{q}}\leq c\left( q,\Omega
,\Gamma ,\varepsilon \right) \sqrt{\left\Vert n_{2}-n_{1}\right\Vert _{%
\mathcal{W}_{n}^{2}}^{2}+\left\Vert p_{2}-p_{1}\right\Vert _{\mathcal{W}%
_{p}^{2}}^{2}}.
\end{equation*}%
In particular: $u_{2}-u_{1}=\left( \varphi _{2}-\widetilde{\varphi }\right)
-\left( \varphi _{1}-\widetilde{\varphi }\right) =\varphi _{2}-\varphi _{1}$
and%
\begin{equation}
\left\Vert \nabla \varphi _{2}-\nabla \varphi _{1}\right\Vert _{L^{q}\left(
\Omega \right) }\leq c\left( q,\Omega ,\Gamma ,\varepsilon \right) \sqrt{%
\left\Vert n_{2}-n_{1}\right\Vert _{\mathcal{W}_{n}^{2}}^{2}+\left\Vert
p_{2}-p_{1}\right\Vert _{\mathcal{W}_{p}^{2}}^{2}}.  \label{estimate fi2-fi1}
\end{equation}
Let us now consider the $e_{i}$'s. Each of them satisfies problem (\ref%
{e_weak}), so that
\begin{equation}
b\left( e_{2}-e_{1},v\right) =\int_{\Gamma }\beta \left( \cdot \right)
\left( n_{2}p_{2}-n_{1}p_{1}\right) v\,d\sigma .  \label{excitons 2-1}
\end{equation}%
Problem (\ref{excitons 2-1}) looks like the auxiliary exciton problem (see
Lemma \ref{Lemma Excitons} where $f$ is given by (\ref{eq:Lemma n2-n1}) and $Q=0$), hence by (%
\ref{e estimate}) and (\ref{eq:Lemma n2-n1_est}) we have%
\begin{equation}
\left\Vert e_{2}-e_{1}\right\Vert _{\mathcal{W}^{2}}\leq \frac{c\left(
\Omega ,\Gamma \right) \overline{\beta }R}{\underline{d}_{e}}\sqrt{%
\left\Vert n_{2}-n_{1}\right\Vert _{\mathcal{W}_{n}^{2}}^{2}+\left\Vert
p_{2}-p_{1}\right\Vert _{\mathcal{W}_{p}^{2}}^{2}}.  \label{estimate e2-e1}
\end{equation}

Now we need the following assumption: the drift velocities $\mu _{i}\left(
\cdot ,E\right) \mathbf{E}$ ($i=n,p$) are Lipschitzian with respect to $%
\mathbf{E}$, namely%
\begin{subequations}\label{Assumption mu}
\begin{equation}
\left\vert \mu _{i}\left( \cdot ,E_{2}\right) \mathbf{E}_{2}-\mu _{i}\left(
\cdot ,E_{1}\right) \mathbf{E}_{1}\right\vert \leq \mu _{i,0}\left( \cdot
\right) \left\vert \mathbf{E}_{2}-\mathbf{E}_{1}\right\vert
\end{equation}%
where%
\begin{equation}
0\leq \mu _{i,0}\left( \cdot \right) \in L^{\infty }\left( \Omega
_{i}\right).
\end{equation}
\end{subequations}

\begin{remark}
Assumption (\ref{Assumption mu}) is trivially satisfied if $\mu
_{i}\left( \cdot ,E\right) \equiv \mu _{i,0}\left( \cdot \right) \in
L^{\infty }\left( \Omega_{i}\right) $. It is also satisfied by the model
proposed in~\cite{MEN2008}, i.e. the functions $\mu _{i}\left( \mathbf{x},\cdot
\right) $ enjoy the conditions stated in Tab.~\ref{tab:assumptions} and, in
addition, they are Lipschitzian ($L_{i}\in L^{\infty }\left( \Omega
_{i}\right) $)%
\begin{equation*}
\left\vert \mu _{i}\left( \cdot ,E_{2}\right) -\mu _{i}\left( \cdot
,E_{1}\right) \right\vert \leq L_{i}\left( \cdot \right) \left\vert
E_{2}-E_{1}\right\vert
\end{equation*}%
and there exists a cutoff field $E^{\ast }$ above which $\mu _{i}\left(
\mathbf{x},\cdot \right) \equiv \mu _{i,0}\left( \mathbf{x}\right) $.
As a matter of fact, for $0\leq E_{1}<E_{2}$, we have
\begin{eqnarray*}
\left\vert \mu _{i}\left( \cdot ,E_{2}\right) \mathbf{E}_{2}-\mu _{i}\left(
\cdot ,E_{1}\right) \mathbf{E}_{1}\right\vert  &=&\left\vert \mu _{i}\left(
\cdot ,E_{2}\right) \left( \mathbf{E}_{2}-\mathbf{E}_{1}\right) +\left( \mu
_{i}\left( \cdot ,E_{2}\right) -\mu _{i}\left( \cdot ,E_{1}\right) \right)
\mathbf{E}_{1}\right\vert  \\
&\leq &\mu _{i}\left( \cdot ,E_{2}\right) \left\vert \mathbf{E}_{2}-\mathbf{E%
}_{1}\right\vert +\left\vert \mu _{i}\left( \cdot ,E_{2}\right) -\mu
_{i}\left( \cdot ,E_{1}\right) \right\vert E_{1}.
\end{eqnarray*}%
Let $E^{\ast }\leq E_{1}$. Then $\mu _{i}\left( \cdot ,E_{1}\right) =\mu
_{i}\left( \cdot ,E_{2}\right) =\mu _{i,0}\left( \cdot \right) $ so that
(\ref{Assumption mu}) are obtained. On the other hand, if $E_{1}\leq E^{\ast }$,
we have%
\begin{eqnarray*}
\left\vert \mu _{i}\left( \cdot ,E_{2}\right) \mathbf{E}_{2}-\mu _{i}\left(
\cdot ,E_{1}\right) \mathbf{E}_{1}\right\vert  &\leq &\overline{\mu }%
_{i}\left\vert \mathbf{E}_{2}-\mathbf{E}_{1}\right\vert +L_{i}\left( \cdot
\right) \left\vert E_{2}-E_{1}\right\vert E^{\ast } \\
&\leq &\left( \overline{\mu }_{i}+L_{i}\left( \cdot \right) \right)
\left\vert \mathbf{E}_{2}-\mathbf{E}_{1}\right\vert
\end{eqnarray*}%
since $\left\vert E_{2}-E_{1}\right\vert \leq \left\vert \mathbf{E}_{2}-\mathbf{E}_{1}\right\vert $,
and (\ref{Assumption mu}) are again obtained.
\end{remark}

Let us now consider the outputs of the solution map, $\left( n_{i}^{\ast
},p_{i}^{\ast }\right) $, $i=1,2$. Each of the $n_{i}^{\ast }$'s satisfies
problem (\ref{n_weak}), so that
\begin{equation*}
\int_{\Omega _{n}}D_{n}\left( \cdot \right) \nabla n_{i}^{\ast }\cdot \nabla
vdx-\int_{\Omega _{n}}\mu _{n}\left( \cdot ,\left\vert \nabla \varphi
_{i}\right\vert \right) n_{i}^{\ast }\nabla \varphi _{i}\cdot \nabla
vdx+\eta ^{-1}\int_{\Gamma }\beta \left( \cdot \right) p_{i}n_{i}^{\ast
}vd\sigma =\int_{\Gamma }\omega \left( \cdot \right) e_{i}vd\sigma
\end{equation*}%
Setting $\mu _{i}\left( \cdot \right) \equiv \mu _{n}\left( \cdot
,\left\vert \nabla \varphi _{i}\right\vert \right) $ for brevity and
subtracting $i=1$ from $i=2$, we obtain
\begin{eqnarray*}
&&\int_{\Omega _{n}}D_{n}\left( \cdot \right) \nabla \left( n_{2}^{\ast
}-n_{1}^{\ast }\right) \cdot \nabla vdx+\eta ^{-1}\int_{\Gamma }\beta \left(
\cdot \right) \left( p_{2}n_{2}^{\ast }-p_{1}n_{1}^{\ast }\right) vd\sigma
\\
&&=\int_{\Omega _{n}}\left( n_{2}^{\ast }\mu _{2}\left( \cdot \right) \nabla
\varphi _{2}-n_{1}^{\ast }\mu _{1}\left( \cdot \right) \nabla \varphi
_{1}\right) \cdot \nabla vdx+\int_{\Gamma }\omega \left( \cdot \right)
\left( e_{2}-e_{1}\right) vd\sigma .
\end{eqnarray*}%
Choose $v=n_{2}^{\ast }-n_{1}^{\ast }$ and use the identity%
\begin{equation*}
p_{2}n_{2}^{\ast }-p_{1}n_{1}^{\ast }=p_{2}\left( n_{2}^{\ast }-n_{1}^{\ast
}\right) +\left( p_{2}-p_{1}\right) n_{1}^{\ast }.
\end{equation*}%
Then%
\begin{eqnarray*}
&&\int_{\Omega _{n}}D_{n}\left( \cdot \right) \left\vert \nabla \left(
n_{2}^{\ast }-n_{1}^{\ast }\right) \right\vert ^{2}dx+\eta ^{-1}\int_{\Gamma
}\beta \left( \cdot \right) p_{2}\left( n_{2}^{\ast }-n_{1}^{\ast }\right)
^{2}d\sigma  \\
&=&\int_{\Omega _{n}}\left( n_{2}^{\ast }\mu _{2}\left( \cdot \right) \nabla
\varphi _{2}-n_{1}^{\ast }\mu _{1}\left( \cdot \right) \nabla \varphi
_{1}\right) \cdot \nabla \left( n_{2}^{\ast }-n_{1}^{\ast }\right) dx \\
&&-\eta ^{-1}\int_{\Gamma }\beta \left( \cdot \right) \left(
p_{2}-p_{1}\right) n_{1}^{\ast }\left( n_{2}^{\ast }-n_{1}^{\ast }\right)
d\sigma +\int_{\Gamma }\omega \left( \cdot \right) \left( e_{2}-e_{1}\right)
\left( n_{2}^{\ast }-n_{1}^{\ast }\right) d\sigma
\end{eqnarray*}%
from which it follows%
\begin{eqnarray}
\underline{d}\left\Vert n_{2}^{\ast }-n_{1}^{\ast }\right\Vert _{\mathcal{W}%
_{n}^{2}}^{2} &\leq &\int_{\Omega _{n}}D_{n}\left( \cdot \right) \left\vert
\nabla \left( n_{2}^{\ast }-n_{1}^{\ast }\right) \right\vert ^{2}dx+\eta
^{-1}\int_{\Gamma }\beta \left( \cdot \right) p_{2}\left( n_{2}^{\ast
}-n_{1}^{\ast }\right) ^{2}d\sigma   \notag \\
&\leq &\int_{\Omega _{n}}\left\vert n_{2}^{\ast }\mu _{2}\left( \cdot
\right) \nabla \varphi _{2}-n_{1}^{\ast }\mu _{1}\left( \cdot \right) \nabla
\varphi _{1}\right\vert \left\vert \nabla \left( n_{2}^{\ast }-n_{1}^{\ast
}\right) \right\vert dx  \notag \\
&&+\eta ^{-1}\overline{\beta }\int_{\Gamma }\left\vert \left(
p_{2}-p_{1}\right) n_{1}^{\ast }\left( n_{2}^{\ast }-n_{1}^{\ast }\right)
\right\vert d\sigma +\overline{\alpha }\int_{\Gamma }\left\vert \left(
e_{2}-e_{1}\right) \left( n_{2}^{\ast }-n_{1}^{\ast }\right) \right\vert
d\sigma   \notag \\
&=&I_{1}+\eta ^{-1}\overline{\beta }I_{2}+\overline{\alpha }I_{3}.
\label{n1}
\end{eqnarray}%
Use of the identity%
\begin{equation*}
n_{2}^{\ast }\mu _{2}\nabla \varphi _{2}-n_{1}^{\ast }\mu _{1}\nabla \varphi
_{1}=n_{2}^{\ast }\left( \mu _{2}\nabla \varphi _{2}-\mu _{1}\nabla \varphi
_{1}\right) +\left( n_{2}^{\ast }-n_{1}^{\ast }\right) \mu _{1}\nabla
\varphi _{1}
\end{equation*}%
gives%
\begin{eqnarray*}
I_{1} &\leq &\int_{\Omega _{n}}\left\vert n_{2}^{\ast }\left( \mu _{2}\nabla
\varphi _{2}-\mu _{1}\nabla \varphi _{1}\right) \right\vert \left\vert
\nabla \left( n_{2}^{\ast }-n_{1}^{\ast }\right) \right\vert dx \\
&+&\int_{\Omega _{n}}\left\vert \left( n_{2}^{\ast }-n_{1}^{\ast }\right)
\mu _{1}\nabla \varphi _{1}\right\vert \left\vert \nabla \left( n_{2}^{\ast
}-n_{1}^{\ast }\right) \right\vert dx=J_{1}+J_{2}
\end{eqnarray*}%
But%
\begin{equation*}
J_{1}\leq \left\Vert n_{2}^{\ast }\right\Vert _{L^{r}\left( \Omega
_{n}\right) }\left\Vert \mu _{2}\nabla \varphi _{2}-\mu _{1}\nabla \varphi
_{1}\right\Vert _{L^{q}\left( \Omega _{n}\right) }\left\Vert \nabla \left(
n_{2}^{\ast }-n_{1}^{\ast }\right) \right\Vert _{L^{2}\left( \Omega
_{n}\right) }
\end{equation*}%
where $1/r+1/q=1/2$ (see~\eqref{3holder}). We have
\begin{equation*}
\left\Vert n_{2}^{\ast }\right\Vert _{L^{r}\left( \Omega _{n}\right) }\leq c\left( q,\Omega _{n}\right)
\left\Vert n_{2}^{\ast }\right\Vert _{\mathcal{W}_{n}^{2}}\leq c\left(
q,\Omega _{n}\right) R
\end{equation*}%
(see~\eqref{emb_est}) and, by (\ref{Assumption mu}) and~\eqref{estimate
fi2-fi1},
\begin{eqnarray*}
\left\Vert \mu _{2}\nabla \varphi _{2}-\mu _{1}\nabla \varphi
_{1}\right\Vert _{L^{q}\left( \Omega _{n}\right) } &\leq &\overline{\mu }%
\left\Vert \nabla \varphi _{2}-\nabla \varphi _{1}\right\Vert _{L^{q}\left(
\Omega _{n}\right) } \\
&\leq &c\left( q,\Omega ,\Gamma ,\varepsilon \right) \overline{\mu }\sqrt{%
\left\Vert n_{2}-n_{1}\right\Vert _{\mathcal{W}_{n}^{2}}^{2}+\left\Vert
p_{2}-p_{1}\right\Vert _{\mathcal{W}_{p}^{2}}^{2}}.
\end{eqnarray*}%
Finally, the application of~\eqref{eq:norm_on_Wiq} with $i=n$ and $q=2$ gives%
\[
\left\Vert \nabla \left( n_{2}^{\ast }-n_{1}^{\ast }\right) \right\Vert
_{L^{2}\left( \Omega _{n}\right) }=\left\Vert n_{2}^{\ast }-n_{1}^{\ast
}\right\Vert _{\mathcal{W}_{n}^{2}}.
\]
Collecting the above estimates, we obtain
\begin{equation*}
J_{1}\leq c\left( q,\Omega ,\Gamma ,\varepsilon \right) \overline{\mu }R%
\sqrt{\left\Vert n_{2}-n_{1}\right\Vert _{\mathcal{W}_{n}^{2}}^{2}+\left%
\Vert p_{2}-p_{1}\right\Vert _{\mathcal{W}_{p}^{2}}^{2}}\left\Vert
n_{2}^{\ast }-n_{1}^{\ast }\right\Vert _{\mathcal{W}_{n}^{2}}.
\end{equation*}
Similarly,%
\begin{eqnarray*}
J_{2} &\leq &\overline{\mu }\int_{\Omega _{n}}\left\vert \left( n_{2}^{\ast
}-n_{1}^{\ast }\right) \nabla \varphi _{1}\right\vert \left\vert \nabla
\left( n_{2}^{\ast }-n_{1}^{\ast }\right) \right\vert dx \\
&\leq &\overline{\mu }\left\Vert n_{2}^{\ast }-n_{1}^{\ast }\right\Vert
_{L^{r}\left( \Omega _{n}\right) }\left\Vert \nabla \varphi _{1}\right\Vert
_{L^{q}\left( \Omega _{n}\right) }\left\Vert \nabla \left( n_{2}^{\ast
}-n_{1}^{\ast }\right) \right\Vert _{L^{2}\left( \Omega _{n}\right) } \\
&\leq &\overline{\mu }c\left( q,\Omega _{n}\right) \left\Vert \nabla \varphi
_{1}\right\Vert _{L^{q}\left( \Omega _{n}\right) }\left\Vert n_{2}^{\ast
}-n_{1}^{\ast }\right\Vert _{\mathcal{W}_{n}^{2}}^{2}.
\end{eqnarray*}
Moreover, we have
\begin{eqnarray*}
I_{2} &=&\int_{\Gamma }\left\vert \left( p_{2}-p_{1}\right) n_{1}^{\ast
}\left( n_{2}^{\ast }-n_{1}^{\ast }\right) \right\vert d\sigma  \\
&\leq &\left\Vert n_{1}^{\ast }\right\Vert _{L^{4}\left( \Gamma \right)
}\left\Vert p_{2}-p_{1}\right\Vert _{L^{4}\left( \Gamma \right) }\left\Vert
n_{2}^{\ast }-n_{1}^{\ast }\right\Vert _{L^{2}\left( \Gamma \right) } \\
&\leq &\left\Vert n_{1}^{\ast }\right\Vert _{L^{4}\left( \partial \Omega
_{n}\right) }\left\Vert p_{2}-p_{1}\right\Vert _{L^{4}\left( \partial \Omega
_{p}\right) }\left\Vert n_{2}^{\ast }-n_{1}^{\ast }\right\Vert _{L^{2}\left(
\partial \Omega _{n}\right) } \\
&\leq &c\left( \Omega ,\Gamma \right) \left\Vert n_{1}^{\ast }\right\Vert
_{H^{1}\left( \Omega _{n}\right) }\left\Vert p_{2}-p_{1}\right\Vert
_{H^{1}\left( \Omega _{p}\right) }\left\Vert n_{2}^{\ast }-n_{1}^{\ast
}\right\Vert _{H^{1}\left( \Omega _{n}\right) } \\
&\leq &c\left( \Omega ,\Gamma \right) \left\Vert n_{1}^{\ast }\right\Vert _{%
\mathcal{W}_{n}^{2}}\left\Vert p_{2}-p_{1}\right\Vert _{\mathcal{W}%
_{p}^{2}}\left\Vert n_{2}^{\ast }-n_{1}^{\ast }\right\Vert _{\mathcal{W}%
_{n}^{2}} \\
&\leq &c\left( \Omega ,\Gamma \right) R\sqrt{\left\Vert
n_{2}-n_{1}\right\Vert _{\mathcal{W}_{n}^{2}}^{2}+\left\Vert
p_{2}-p_{1}\right\Vert _{\mathcal{W}_{p}^{2}}^{2}}\left\Vert n_{2}^{\ast
}-n_{1}^{\ast }\right\Vert _{\mathcal{W}_{n}^{2}}
\end{eqnarray*}%
and, using~(\ref{sup omega}),~(\ref{v estim}) and~(\ref{estimate e2-e1}),
\begin{eqnarray*}
I_{3} &=&\int_{\Gamma }\left\vert \left( e_{2}-e_{1}\right) \left(
n_{2}^{\ast }-n_{1}^{\ast }\right) \right\vert d\sigma \leq
\left\Vert e_{2}-e_{1}\right\Vert _{L^{2}\left( \Gamma \right)
}\left\Vert n_{2}^{\ast }-n_{1}^{\ast }\right\Vert _{L^{2}\left( \Gamma
\right) } \\
&\leq &c\left( \Omega ,\Gamma \right) \left\Vert e_{2}-e_{1}\right\Vert _{%
\mathcal{W}^{2}}\left\Vert n_{2}^{\ast }-n_{1}^{\ast }\right\Vert _{\mathcal{%
W}_{n}^{2}} \\
&\leq &\frac{c\left( \Omega ,\Gamma \right) \overline{\beta }R}{\underline{d}%
_{e}}\sqrt{\left\Vert n_{2}-n_{1}\right\Vert _{\mathcal{W}%
_{n}^{2}}^{2}+\left\Vert p_{2}-p_{1}\right\Vert _{\mathcal{W}_{p}^{2}}^{2}}%
\left\Vert n_{2}^{\ast }-n_{1}^{\ast }\right\Vert _{\mathcal{W}_{n}^{2}}.
\end{eqnarray*}
Inserting the obtained estimates for $I_{1}$, $I_{2}$ and $I_{3}$ into (\ref%
{n1}) yields%
\begin{align*}
& \left( \underline{d}-c\left( q,\Omega ,\Gamma ,\varepsilon \right)
\overline{\mu }\left\Vert \nabla \varphi _{1}\right\Vert _{L^{q}\left(
\Omega _{n}\right) }\right) \left\Vert n_{2}^{\ast }-n_{1}^{\ast
}\right\Vert _{\mathcal{W}_{n}^{2}} \\
& \leq \left( \overline{\mu }c\left( q,\Omega ,\Gamma ,\varepsilon \right)
+\left( \frac{\overline{\beta }}{\eta }+\frac{\overline{\alpha }\overline{%
\beta }}{\underline{d}_{e}}\right) c\left( \Omega ,\Gamma \right) \right) R%
\sqrt{\left\Vert n_{2}-n_{1}\right\Vert _{\mathcal{W}_{n}^{2}}^{2}+\left%
\Vert p_{2}-p_{1}\right\Vert _{\mathcal{W}_{p}^{2}}^{2}}
\end{align*}%
But using~(\ref{step 1}) gives
\begin{equation*}
\underline{d}-\overline{\mu }c\left( q,\Omega ,\Gamma ,\varepsilon \right)
\left\Vert \nabla \varphi _{1}\right\Vert _{L^{q}\left( \Omega _{n}\right)
}\geq \underline{d}-\overline{\mu }c\left( q,\Omega ,\Gamma ,\varepsilon
\right) \left( R+\left\Vert \nabla \widetilde{\varphi }\right\Vert
_{L^{q}\left( \Omega \right) }\right) =c_{0}\overline{\mu }\left( \overline{R%
}-R\right)
\end{equation*}%
where $c_{0}$ can be chosen as in (\ref{Lambda_n}) without loss of generality. Then%
\begin{equation*}
\left\Vert n_{2}^{\ast }-n_{1}^{\ast }\right\Vert _{\mathcal{W}_{n}^{2}}\leq
\left( 1+\left( \frac{\overline{\beta }}{\overline{\mu }\eta }+\frac{%
\overline{\alpha }\overline{\beta }}{\overline{\mu }\underline{d}_{e}}%
\right) c_{1}\right) \frac{R}{\overline{R}-R}\sqrt{\left\Vert
n_{2}-n_{1}\right\Vert _{\mathcal{W}_{n}^{2}}^{2}+\left\Vert
p_{2}-p_{1}\right\Vert _{\mathcal{W}_{p}^{2}}^{2}}
\end{equation*}%
where $c_{1}$ can be chosen as in (\ref{ab}). A similar estimate can be proved
to hold also for $\left\Vert p_{2}^{\ast }-p_{1}^{\ast }\right\Vert _{%
\mathcal{W}_{p}^{2}}$, so that, in conclusion, we get (\ref{contraction}%
) where%
\begin{equation*}
\lambda =\frac{\widehat{c}R}{\overline{R}-R}
\end{equation*}%
having set
\begin{equation*}
\widehat{c}:=\sqrt{2}\left( 1+c_{1}\frac{\overline{\beta }}{\overline{\mu }}%
\left( \frac{1}{\eta }+\frac{\overline{\alpha }}{\underline{d}_{e}}\right)
\right).
\end{equation*}%
By Proposition \ref{Proposition invariant set} we know that $0<R_{1} \leq R \leq R_{2}<%
\overline{R}$. Now, it is $0<\lambda <1$ if and only if%
\begin{equation*}
0<R<\frac{\overline{R}}{1+\widehat{c}}<\overline{R}
\end{equation*}%
so that the map $\mathbf{K}$ is a contraction provided that
\begin{equation*}
R_{1}<\frac{\overline{R}}{1+\widehat{c}}
\end{equation*}%
i.e.%
\begin{equation}
\underset{=\text{ }\widehat{a}}{\underbrace{\left( a-\frac{2}{1+\widehat{c}}%
\right) }}\,\overline{R}<\sqrt{a^{2}\overline{R}^{2}-4ab}.  \label{a^}
\end{equation}%
Two cases are in order:

\begin{itemize}
\item if $\widehat{a}\leq 0$, then condition (\ref{a^}) is satisfied, hence
under condition (\ref{cond2}) the map $\mathbf{K}:\mathcal{B}%
_{R}^{+}\longrightarrow \mathcal{B}_{R}^{+}$ is a contraction for all $R$
satisfying $R_{1} \leq R<\min \left\{ R_{2};\frac{\overline{R}}{1+\widehat{c}}%
\right\} $;

\item if $\widehat{a}>0$, then condition (\ref{a^}) reads%
\begin{equation}
\overline{R}>\sqrt{\frac{4ab}{a^{2}-\widehat{a}^{2}}}=\frac{1}{\sqrt{%
1-\left( \dfrac{\widehat{a}}{a}\right) ^{2}}}\sqrt{\dfrac{4b}{a}}
\label{cond3}
\end{equation}%
which is stronger than condition (\ref{eq:cond_1}) since $\widehat{a}<a$.
Moreover, in this case it is always $\dfrac{\overline{R}}{1+\widehat{c}}%
<R_{2}$. Therefore: under condition (\ref{cond3}) the map $\mathbf{K}:%
\mathcal{B}_{R}^{+}\longrightarrow \mathcal{B}_{R}^{+}$ is a contraction for
all $R$ satisfying $R_{1}\leq R<\frac{\overline{R}}{1+\widehat{c}}$.
\end{itemize}

\noindent Condition (\ref{cond3}) can be written as $\overline{R}>\sigma \sqrt{\dfrac{%
4b}{a}}$ with $\sigma >1$ and it reads explicitly%
\begin{equation}
\left\Vert \nabla \widetilde{\varphi }\right\Vert _{L^{q}\left( \Omega
\right) }+\sigma \sqrt{\dfrac{4c_{1}\overline{h}}{\underline{d}_{e}\tau _{d}%
\overline{\mu }}\left( 1+\dfrac{c_{1}\overline{h}^{2}\overline{\gamma }\eta
}{\underline{d}_{e}\tau _{d}\overline{\mu }}\right) \left\Vert Q\right\Vert
_{L^{2}\left( \Omega \right) }}<\dfrac{\underline{d}}{c_{0}\overline{\mu }}
\label{cond2ter}
\end{equation}%
to be compared with (\ref{cond2}). Both (\ref{cond2}) and (\ref{cond2ter})
are satisfied under condition (\ref{cond2bis}).

We have thus proved the following result.

\begin{theorem}
\label{Main Theorem} Let $\mathbf{K}:\mathcal{B}_{R}^{+}\longrightarrow
\mathcal{W}_{n}^{2}\oplus \mathcal{W}_{p}^{2}$ be
the map $\left( n^{\ast },p^{\ast }\right) =\mathbf{K}\left( n,p\right) $
defined through Steps 1-3. In addition, assume that \eqref{cond2bis} and \eqref{Assumption mu}
hold. Then there exist $R_{2}>R_{1}>0$ such that $\mathbf{K}\mathcal{\ }$is a strict
contraction on $\mathcal{B}_{R}^{+}$ for all $R$ satisfying $R_{1}<R<R_{2}$.
Thus, $\mathbf{K}$ admits a unique fixed point in $\mathcal{B}_{R}^{+}$.
\end{theorem}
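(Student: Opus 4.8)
The plan is to recognise that essentially all of the analytic work has already been carried out in the preceding subsections, so the proof reduces to assembling the pieces and invoking the Banach Contraction Mapping Theorem on the complete metric space $\mathcal{B}_R^+$. First I would observe that $\mathcal{B}_R^+$, being a closed, bounded, convex subset of the Hilbert space $\mathcal{W}_n^2\oplus\mathcal{W}_p^2$, is a complete metric space under the induced distance, and that by Proposition~\ref{Proposition invariant set} assumption~\eqref{cond2bis} guarantees $\mathbf{K}(\mathcal{B}_R^+)\subset\mathcal{B}_R^+$ for every $R\in[R_1,R_2]$. It therefore only remains to produce a Lipschitz constant $\lambda\in(0,1)$ for $\mathbf{K}$ on $\mathcal{B}_R^+$.

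For the contraction estimate I would take $(n_i,p_i)\in\mathcal{B}_R^+$, $i=1,2$, run Steps~1--3, and control the increments successively. Subtracting the weak Poisson problems and combining Lemma~\ref{Lemma g2-g1} with Lemma~\ref{Lemma Poisson 2} (applied to the difference problem with $\widetilde{\varphi}=0$) gives the bound~\eqref{estimate fi2-fi1} on $\|\nabla\varphi_2-\nabla\varphi_1\|_{L^q(\Omega)}$; subtracting the weak exciton problems and combining Lemma~\ref{Lemma n2-n1} with Lemma~\ref{Lemma Excitons} (with $Q=0$) gives~\eqref{estimate e2-e1} on $\|e_2-e_1\|_{\mathcal{W}^2}$. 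Then I would subtract the weak electron formulations~\eqref{n_weak} for $n_1^\ast$ and $n_2^\ast$, test with $v=n_2^\ast-n_1^\ast$, and split the right-hand side into a bulk drift term $I_1=J_1+J_2$ and two interface terms $I_2,I_3$; the Lipschitz hypothesis~\eqref{Assumption mu} controls the factor $\mu_2\nabla\varphi_2-\mu_1\nabla\varphi_1$ entering $J_1$, while trace and Sobolev embeddings together with the a priori bounds $\|n_i^\ast\|_{\mathcal{W}_n^2}\le R$ dispose of the remaining contributions. Bounding the coercivity defect below by $\underline{d}-\overline{\mu}\,c_0(R+\|\nabla\widetilde{\varphi}\|_{L^q(\Omega)})=c_0\overline{\mu}(\overline{R}-R)$ via~\eqref{step 1} yields $\|n_2^\ast-n_1^\ast\|_{\mathcal{W}_n^2}\le\frac{\widehat{c}R}{\overline{R}-R}\sqrt{\|n_2-n_1\|_{\mathcal{W}_n^2}^2+\|p_2-p_1\|_{\mathcal{W}_p^2}^2}$ with $\widehat{c}=\sqrt{2}\bigl(1+c_1\tfrac{\overline{\beta}}{\overline{\mu}}(\tfrac1\eta+\tfrac{\overline{\alpha}}{\underline{d}_e})\bigr)$; the symmetric argument for the hole problem gives the same estimate for $\|p_2^\ast-p_1^\ast\|_{\mathcal{W}_p^2}$, hence~\eqref{contraction} with $\lambda=\widehat{c}R/(\overline{R}-R)$.

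It then remains to check that $\lambda<1$ can be achieved simultaneously with the invariance constraint $R_1\le R\le R_2$. Since $\lambda<1$ is equivalent to $R<\overline{R}/(1+\widehat{c})$, the contraction property holds as soon as $R_1<\overline{R}/(1+\widehat{c})$; writing this as~\eqref{a^} with $\widehat{a}=a-2/(1+\widehat{c})$, I would split into the case $\widehat{a}\le0$, where~\eqref{a^} holds automatically once~\eqref{cond2} is in force, and the case $\widehat{a}>0$, where~\eqref{a^} is equivalent to the sharper condition~\eqref{cond3}, i.e.~\eqref{cond2ter}. Both~\eqref{cond2} and~\eqref{cond2ter} are implied by~\eqref{cond2bis}, so in every case there is a nonempty interval $(R_1,R_2)$ of admissible radii on which $\mathbf{K}$ is a strict contraction mapping $\mathcal{B}_R^+$ into itself, and the Banach Contraction Mapping Theorem furnishes the unique fixed point.

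The one genuinely delicate step is the electron/hole increment estimate: it is there that assumption~\eqref{Assumption mu} is indispensable, and there that the three-function H\"{o}lder inequality with $1/r+1/q=1/2$ and $q\ge 3$ (so that $2\le r\le 6$ and $H^1\hookrightarrow L^r$ applies on $\Omega_i$) must be used exactly as in~\eqref{3holder}. The structural subtlety is that the same smallness of $\|\nabla\varphi_i\|_{L^q(\Omega_i)}$ that made the auxiliary electron and hole problems well posed reappears here as the denominator $\overline{R}-R$, so invariance and contraction cannot be decoupled --- reconciling the two simultaneously is precisely what forces the data restriction~\eqref{cond2bis}.
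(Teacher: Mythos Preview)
Your proposal is correct and follows essentially the same route as the paper's own argument: the same increment estimates~\eqref{estimate fi2-fi1} and~\eqref{estimate e2-e1}, the same decomposition $I_1=J_1+J_2$, $I_2$, $I_3$ of the subtracted electron weak form, the same use of~\eqref{Assumption mu} for $J_1$, and the same case analysis on~$\widehat{a}$ to reconcile contraction with invariance. The only slip is cosmetic: the bound you state for the single increment $\|n_2^\ast-n_1^\ast\|_{\mathcal{W}_n^2}$ already carries the full constant~$\widehat{c}$, whereas in the paper the factor~$\sqrt{2}$ in~$\widehat{c}$ enters only after combining the electron and hole estimates into~\eqref{contraction}.
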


\section{Solution map validation through numerical simulation}\label{sec:numerical_simulations}
In this section we carry out a computational validation of the theoretical
properties of the fixed-point map introduced and analyzed in Sect.~\ref{sec:fixed_point_map}.
To this purpose, we consider the realistic three-dimensional solar cell geometry
shown in Fig.~\ref{fig:simulation_domain} which represents the unit cell of an ideal
lattice of chessboard-shaped nanostructures of donor and acceptor materials.
\begin{figure}[h!]
	\centering
	\subfigure[]{
		\scalebox{1}{\begin{tikzpicture}[scale=1.4]

	\draw[line width=0.3mm, gray] (0, -0.5) -- (0, 0);
	\draw[line width=0.3mm, gray] (0.75, -0.5) -- (0.75, 0);
	\draw[line width=0.3mm, gray] (2.25, -0.5) -- (2.25, 0);
	\draw[line width=0.3mm, gray] (3, -0.5) -- (3, 0);
	\draw[line width=0.3mm, gray] (-0.5, 0) -- (0, 0);
	\draw[line width=0.3mm, gray] (-0.5, 1.5) -- (0, 1.5);
	
	\draw[line width=0.6mm] (0,0) rectangle (3,1.5);
	\draw[line width=0.6mm] (0.75, 1.5) -- (0.75, 0.75)
									 	-- (2.25, 0.75)
									 	-- (2.25, 0);
										
	\draw[<->,thick, line width=0.3mm] (0, -0.4) -- (0.75, -0.4) node[sloped,midway,below] {$L_{\mathrm{don}}$};
	\draw[<->,thick, line width=0.3mm] (0.75, -0.4) -- (2.25, -0.4) node[sloped,midway,below] {$L_{\mathrm{int}}$};	\draw[<->,thick, line width=0.3mm] (2.25, -0.4) -- (3, -0.4) node[sloped,midway,below] {$L_{\mathrm{acc}}$};
    \draw[<->,thick, line width=0.3mm] (-0.45, 0) -- (-0.45, 1.5) node[sloped,midway,above] {$L_y, L_z$};
    \draw (0.5, 0.4) node {$\Omega_{n}$};
	\draw (2.5, 1.1) node {$\Omega_{p}$};
	\draw (1.5, 0.9) node {$\Gamma$};
	\draw (-0.2, 0.75) node {$\Gamma_{C}$};
	\draw (3.225, 0.75) node {$\Gamma_{A}$};
	\draw (1.5, 1.675) node {$\Gamma_{N}$};
	\draw (1.5, -0.2) node {$\Gamma_{N}$};
x\
\end{tikzpicture}}
		\label{fig:simulation_domain_schema}}
	\subfigure[]{
		\includegraphics[width=0.45\textwidth]{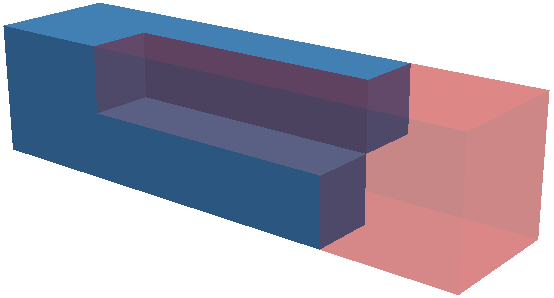}
		\label{fig:simulation_domain_3d}}
	\caption{Simulation domain scheme (a) and 3D representation (b).}
	\label{fig:simulation_domain}
\end{figure}
The whole cell domain is obtained by symmetrically repeating the module of Fig.~\ref{fig:simulation_domain_3d}
with respect to the lateral faces on the part of the boundary
denoted with $\Gamma_N = \partial\Omega \setminus (\Gamma_C \cup \Gamma_A)$
in Fig.~\ref{fig:simulation_domain_schema}. Since $\Omega$ is a convex polyhedron
and the border between $\Gamma_D$ and $\Gamma_N$ consists of a finite number of segments,
then the triple $\left\{ \Omega ,\Gamma _{D},\Gamma _{N}\right\}$
associated with the geometry depicted in Fig.~\ref{fig:simulation_domain}
is $q-$admissible for a $q >3$ (see~\cite{Dauge1992,HALLER2008,Hieber2008}).

The electrochemical behaviour of the cell can be described by
equations~\eqref{eq:excitons}-\eqref{eq:potential}
enforcing symmetry conditions on $\Gamma_N$, i.e.\ applying
zero-flux conditions.
The iterative map illustrated in Sect.~\ref{sec:fixed_point_map} is then applied
to the model equations that are numerically solved upon using a suitable finite element
discretization scheme.

According to the theoretical results of
Theorem~\ref{Main Theorem},
the map should converge to a unique fixed point
if conditions~\eqref{cond2bis} and~\eqref{Assumption mu} are met. In particular we analyzed the
behaviour of the map by systematically changing the value of:
\begin{itemize}
    \item the exciton photogeneration term $Q$;
    \item the electron and hole mobilities $\mu_n$ and $\mu_p$;
    \item the voltage $\varphi_C - \varphi_A$ applied to the electrodes.
\end{itemize}
Conditions~\eqref{cond2bis} suggest that there might be particular threshold values
for such parameters above or below which the convergence of the map is
compromised. We want to investigate whether such values exist and
to study the dependence of the convergence speed of the map on
the value
attained by such parameters.

The following assumptions are made on the functional form of the model parameters:
\begin{enumerate}
\item The light absorption is uniform through the entire cell,  i.e. $Q(%
\mathbf{x}) = Q$;

\item \label{ass:mobilities} Electron and hole mobility parameters $\mu _{n}$
and $\mu _{p}$ depend on the local electric field according to the
functional form ($i=n,p$)%
\begin{equation}
\mu _{i}\left( E\right) =\left\{
\begin{tabular}{lll}
$\mu _{i,0}\exp \left( \dfrac{\beta _{i}\sqrt{E}}{k_{B}T}\right) $ & if & $%
E<E^{\ast }$ \\
$\mu _{i,0}\exp \left( \dfrac{\beta _{i}\sqrt{E^{\ast }}}{k_{B}T}\right) $ &
if & $E\geq E^{\ast }$%
\end{tabular}%
\right.   \label{eq:mob_model}
\end{equation}%
where $E$ is the electric field intensity, $\mu _{i,0}$ is the zero-field
mobility of the charge carrier, $\beta _{i}$ is a modulation parameter, $%
K_{B}$ is the Boltzmann constant and $T$ is the temperature. Definition~%
\eqref{eq:mob_model} is a modified version of a model widely used in the
literature, see e.g.~\cite%
{barker2003,buxton2007,hwang_greenham,williams_th,williams2008}, where the
ceiling for values of $E$ above the cutoff level $E^{\ast }$ has been
introduced to be consistent with the assumptions reported in Tab.~\ref%
{tab:assumptions}.

\item Electron and hole diffusion coefficients have been assumed to be
constant consistently with the assumptions of Tab.~\ref{tab:assumptions} and
given by
\begin{equation}
D_{i}=\dfrac{k_{B}T}{q}\ \dfrac{\mu _{i}\left( 0\right) +\mu _{i}\left(
E^{\ast }\right) }{2}  \label{eq:diffusion_parameters}
\end{equation}%
where $q$ denotes the elementary electric charge. The second term at the
right-hand side of~\eqref{eq:diffusion_parameters} represents the average
between the value of the mobility at zero electric field and that at the
cutoff level $E^{\ast }$ introduced in~\eqref{eq:mob_model}.

\item The bimolecular recombination rate $\gamma $ is defined with the
formula described in \cite{porro2014,barker2003} with the dependence on the
electric field removed to be compliant with the assumption made in Tab.~\ref%
{tab:assumptions}, i.e.
\begin{equation}
\gamma =\dfrac{q}{\varepsilon ^{\ast }}\min \left\{ \mu _{n,0};\mu
_{p,0}\right\}   \label{eq:gammamodel}
\end{equation}%
where $\varepsilon ^{\ast }$ is defined as the harmonic average of the
dielectric permittivities of the acceptor and donor materials
\begin{equation}
\varepsilon ^{\ast }=\left( \dfrac{\varepsilon _{\mathrm{acc}%
}^{-1}+\varepsilon _{\mathrm{don}}^{-1}}{2}\right) ^{-1}.
\end{equation}
\end{enumerate}

In Tab.~\ref{tab:simulation_paramters} we provide a list of the values of
model parameters values used in the simulations. Numbers are in agreement with realistic
data in solar cell modeling and design (see~\cite{barker2003,buxton2007,hwang_greenham,williams_th,williams2008}).

\begin{table}[h!]
	\centering
	\begin{minipage}[t]{.4\linewidth}
		\begin{tabular}[t]{|l|l|l|}
			\hline
			\textsf{parameter} & \textsf{value} \\ \hline\hline
			$L_\mathrm{acc}$ & 25\,nm \\ \hline
			$L_\mathrm{int}$ & 50\,nm \\ \hline
			$L_\mathrm{don}$ & 25\,nm \\ \hline
			$L_\mathrm{y}$ & 25\,nm \\ \hline
			$L_\mathrm{z}$ & 25\,nm \\ \hline
			$\varepsilon_\mathrm{acc}$ & $4\varepsilon_0$ \\ \hline
			$\varepsilon_\mathrm{don}$ & $4\varepsilon_0$ \\ \hline
			$\varphi_C - \varphi_A$ & 0.4\,V or 0\,V  \\ \hline
			$D_e$ & $100\cdot 10^{-9}$\,m$^{2}$s$^{-1}$\\ \hline
			$Q$ & $10^{28}$\,m$^{-3}$s$^{-1}$\\ \hline
            $T$ & 298.16 \, K\\ \hline
		\end{tabular}
	\end{minipage}
	\begin{minipage}[t]{.4\linewidth}
		\begin{tabular}[t]{|l|l|l|}
			\hline
			\textsf{parameter} & \textsf{value} \\ \hline\hline
			$H$ & 1\,nm \\ \hline
			$\mu_{n,0}$ & $300\cdot 10^{-9}$\,m$^{2}$V$^{-1}$s$^{-1}$\\ \hline
			$\mu_{p,0}$ & $100\cdot 10^{-9}$\,m$^{2}$V$^{-1}$s$^{-1}$\\ \hline
			$\beta_{n}$ & $3 \cdot 10^{-4} K_B T$\,V$^{1/2}$m$^{-1/2}$\\ \hline
            $\beta_{p}$ & $3 \cdot 10^{-4} K_B T$\,V$^{1/2}$m$^{-1/2}$\\ \hline
            $E^{*}$ & $10^{7}$\,V\,m$^{-1}$\\ \hline
            $\tau _{d}$ & 1\,ps \\ \hline
			$\tau _{e}$ & 1\,ns \\ \hline
			$k_{d}$ & $1 \cdot 10^{9}$\,s$^{-1}$ \\ \hline
			$k_{r}$ & $0.1 \cdot 10^{9}$\,s$^{-1}$ \\ \hline
			$\eta $ & 0.25 \\ \hline
		\end{tabular}	
	\end{minipage}
	\caption{Model parameter values used in the performed simulations.}
	\label{tab:simulation_paramters}
\end{table}

For the spatial discretization of the PDE
system~\eqref{eq:excitons}-\eqref{eq:potential}
we adopt the Galerkin Finite Element Method
stabilized by an Exponential Fitting technique
(see~
\cite{deFalcoSacco2010,porro2014,bank1998,gatti1998,xu1999,cogliati2010,porro2014,Porro2014phDthesis})
implemented in the Octave package \texttt{bim}\cite{bim}
and we use the software GMSH\cite{gmsh}
and the Octave package \texttt{msh}\cite{msh}
to generate the triangulation of
the computational domain into an unstructured mesh with local refinements
in the regions close to the interface $\Gamma$.

In the implementation of the code we followed the structure
of the map presented in Sect.~\ref{sec:map_definition}
and we used the following stopping criterion on the $H^1$-norm of the
increments of the numerical solutions $n_h$ and $p_h$
\begin{equation}
\left\Vert
n_h^k - n_h^{k-1}
\right\Vert
_{H^{1}\left( \Omega_{n}\right) }
+
\left\Vert
p_h^k - p_h^{k-1}
\right\Vert
_{H^{1}\left( \Omega_{p}\right) }
< \varepsilon \qquad k \geq 1,
\label{eq:stopping_criterion}
\end{equation}
where $\varepsilon$ is the tolerance and $(\cdot)^k$ indicates the solution
obtained at the $k$-th iteration of the map.
In all the simulations we set $\varepsilon=10^{-9}$ and
\begin{equation}
n_h^0(\mathbf{x}) = 0, \
\mathbf{x} \in \Omega_{n}
\quad \text{and} \quad
p_h^0(\mathbf{x}) = 0, \
\mathbf{x} \in \Omega_{p}.
\end{equation}

\subsection{Changing the exciton generation rate $Q$}
\label{sec:num_changing_Q}
Conditions~\eqref{cond2bis} for the contractivity of the
map defined in Sect.~\ref{sec:map_definition}
state that the exciton generation rate term $Q$ has to be
\emph{small enough}, i.e.\ there is an upper limit for it
above which Theorem~\ref{Main Theorem} does not hold and the map is not guaranteed to converge to a
unique point.
Thus, progressively increasing the value of $Q$ we expect the
map to perform less and less efficiently. This means that we expect
the parameter $\lambda$ in (\ref{contraction}) to increase when approaching
the upper limit $\lambda=1$, or, equivalently, the number of iterations to
satisfy~\eqref{eq:stopping_criterion} to increase.

We consider two configurations of applied voltage that correspond to different operation modes of the solar cell:
\begin{itemize}
    \item[1] $\varphi_C - \varphi_A$ = 0.4\,V. A potential
    difference exists between the electrodes,
    whether due to the difference in work
    function of the materials or to some voltage
    applied externally.
    This configuration is representative of the typical operation mode
    of a solar cell generating electric current.

    \item[2] $\varphi_C - \varphi_A$ = 0\,V. The two electrodes
    are at the same potential. This configuration represents
    a suboptimal operation mode as the electric field in the
    cell is small, and the electric charges are not collected efficiently
    at the electrodes.
\end{itemize}

The two configurations are interesting for the analysis
of the performance of the iterative map because the different electric field
profiles, which are determined by the applied electric potential,
result in significantly different profiles for the charge carrier densities.
As the electric field in the cell is almost negligible in configuration~2,
electrons and holes move slowly towards the electrodes
and their density is expected to be high
at the interface between the donor and acceptor
materials where they are generated. As a consequence the bimolecular
second order term $2H\gamma n p$ in
Eqs.~\eqref{eq:gamma_electrons},~\eqref{eq:gamma_holes}
and~\eqref{eq:gamma_polarons}
is expected to be large and to determine a reduction of the performance of the
iterative map.

In Fig.~\ref{fig:convergence_generation} we report the number of iterations
needed by the map to converge to the fixed point in the two configurations
for increasing values of the exciton generation rate $Q$.
The results are in line with our expectations as the number of iterations
increases with $Q$, and in both cases there seems to exist a
specific threshold value such that when $Q$
approaches it the number of iterations increases exponentially
until no convergence is achieved. It is interesting to notice
that such limit value is lower in the configuration with no applied potential,
as a consequence of the different characteristics of such operation mode
discussed in the previous paragraph.
\begin{figure}[h!]
    \centering
    \includegraphics[width=0.5\textwidth]{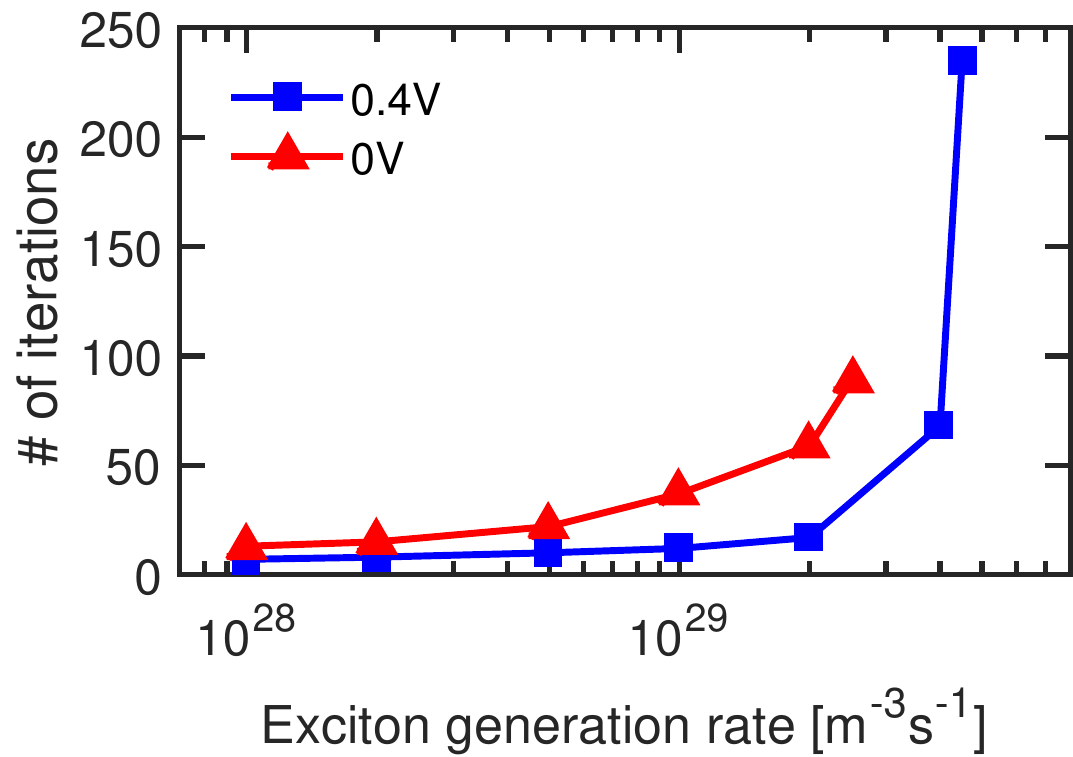}
    \caption{Number of iterations needed by the map to converge
    	changing the value of the exciton generation rate.}
    \label{fig:convergence_generation}
\end{figure}

\subsection{Changing the zero-field charge carrier mobility}
\label{sec:num_changing_mu}
Contractivity conditions~\eqref{cond2bis} depend also on the parameter $\overline{\mu}$,
the maximum between the largest values of electron and hole mobilities, but,
unlike the case of parameter $Q$ in Sect.~\ref{sec:num_changing_Q}, the role of
$\overline{\mu}$ is far less immediate to characterize because it appears
in \emph{both}~\eqref{cond2bis_1} and~\eqref{cond2bis_2}. Moreover,
based on the mechanism of photogenerated
charge transport in the acceptor and donor materials, we expect,
similarly to Sect.~\ref{sec:num_changing_Q}, that a lower limit exists for the mobility parameters
below which the map is not guaranteed to converge, and that the performance of the map
is progressively reduced when the model parameters approach such value.

For these reasons, in the present section we carry out
a numerical sensitivity analysis of $\overline{\mu}$ on the convergence of the fixed-point iteration.
As both hole and electron mobilities play a role in determining the
behaviour of the map and as the model we considered for them is parametrised
on the mobility values at zero electric field $\mu_{n,0}$ and $\mu_{p,0}$,
we decided to perform the following analyses:
\begin{itemize}
	\item decreasing $\mu_{p,0}$ and keeping $\mu_{n,0}$ fixed at the reference value;
	\item decreasing $\mu_{n,0}$ and keeping $\mu_{p,0}$ fixed at the reference value;
	\item setting $\mu_{p,0} = \mu_{n,0} = \tilde{\mu}$ and decreasing them.	
\end{itemize}
We expect the first two analyses to provide similar results
as the effect of the two charge carrier densities on the model is symmetrical whereas
in the third case we aim to assess whether the simultaneous change of
the two mobilities results in a combined effect.
Moreover, as previously done in Sect.~\ref{sec:num_changing_Q}, we consider
the same two operation modes characterised by
different values of the applied potential $\varphi_C - \varphi_A$
in order to determine whether this quantity has an impact on the convergence of the map
while changing the mobility parameter.

In Fig.~\ref{fig:iterations_mobility_bis}
we report the graphs of the number of iterations needed by the map to
converge as a function of the mobility parameter, for an applied voltage equal to
0.4\,V~(Fig.~\ref{fig:iterations_mobility_04V})
and 0\,V~(Fig.~\ref{fig:iterations_mobility_0V}) respectively.
We notice that in both cases the map performs almost similarly with the
reduction of either $\mu_{p,0}$ or $\mu_{n,0}$, requiring
a slightly larger number of iterations
for the $p$ case at 0.4\,V and for the $n$ case at 0\,V.
This asymmetry in the behaviour could be attributed to the marginally different
reference values for the hole and electron mobilities or to a difference
in the discretisation of domains $\Omega_n$ and $\Omega_p$ due to the
algorithm for the generation of the anisotropic meshes.
It is interesting though that when both mobilities are decreased simultaneously,
the convergence to the fixed point is slower and the threshold value is considerably
higher. This can be explained by the fact that by decreasing both mobilities,
the charge carrier density in the donor and acceptor materials increase and,
as highlighted in the discussion of Sect.~\ref{sec:num_changing_Q},
the second order term $2H\gamma n p$ at the interface
becomes more relevant, as both $n$ and $p$ increase.

\begin{figure}[h!]
	\centering
	\subfigure[]{
		\includegraphics[width=0.45\textwidth]{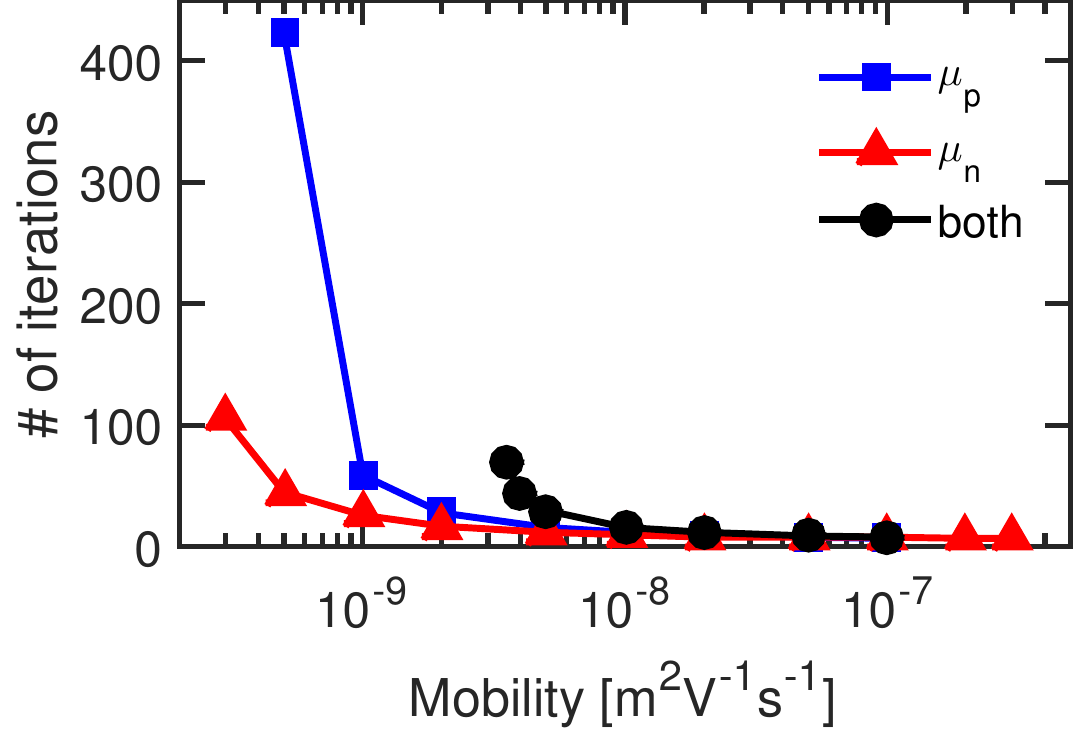}
		\label{fig:iterations_mobility_04V}
	}
	\subfigure[]{
		\includegraphics[width=0.45\textwidth]{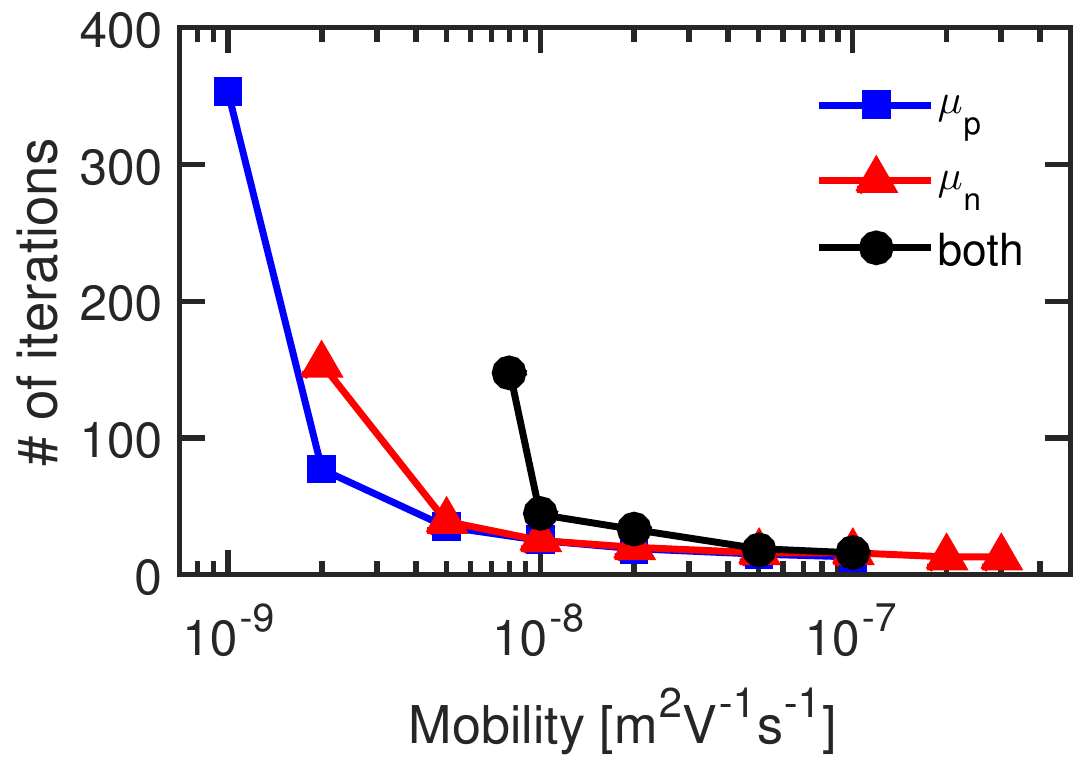}
		\label{fig:iterations_mobility_0V}
	}
	\caption{Number of iterations needed by the map to converge
		changing the value of the hole and electron mobility or
		both simultaneously with 0.4\,V (a) or 0\,V (b).}
	\label{fig:iterations_mobility_bis}
\end{figure}

\subsection{Changing the applied electric potential}
\label{sec:num_changing_applied_potential}
Finally we want to analyse the impact of the difference in the electric
potential between the electrodes $\varphi_{C} - \varphi_{A}$ on the convergence
properties of the map. The obtained analytical result states that convergence
to a unique fixed point is guaranteed if the applied voltage is
\emph{small enough}, similarly to what happens in semiconductor device modelling
using the Drift-Diffusion model~\cite{markowich1986stationary,Jerome:AnalyCharTran},
and we want to test if the map has a similar behaviour
as that observed when changing $Q$ and $\mu$.

Fig.~\ref{fig:convergence_potential} shows the number of iterations
needed by the map to satisfy~\eqref{eq:stopping_criterion} in a range of
applied voltages between $-1.5$\,V and $1.5$\,V.
Interestingly, convergence is observed for all the considered values and by
increasing the applied voltage (both for negative and positive values)
the convergence speed is enhanced.

\begin{figure}[h!]
	\centering
	\includegraphics[width=0.65\textwidth]{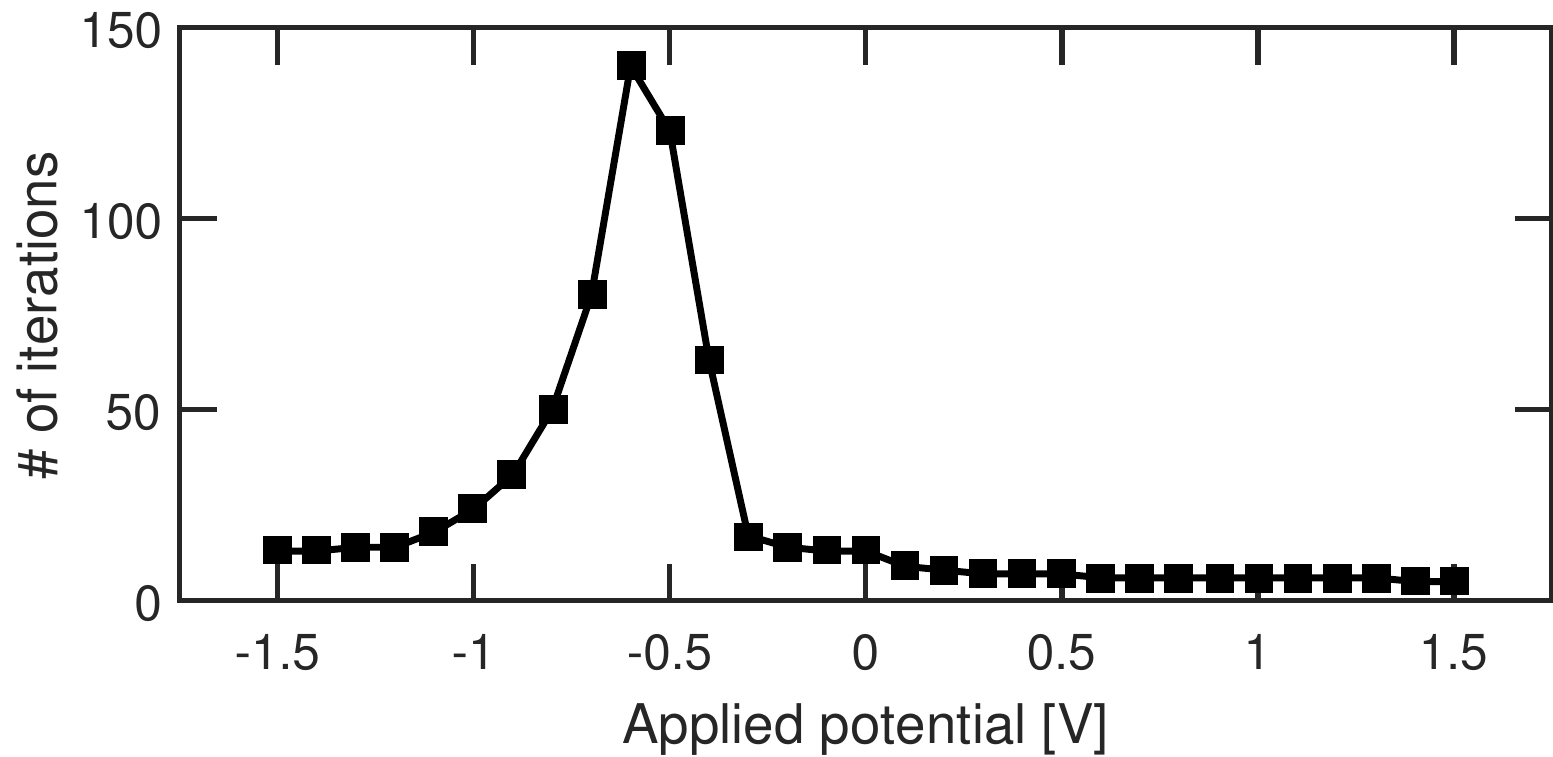}
	\caption{Number of iterations needed by the map  to converge
		changing the value of the potential difference
		$\varphi_C - \varphi_A$ at the electrodes.}
	\label{fig:convergence_potential}
\end{figure}

We already analyzed the operation mode with applied voltage equal to $0.4$\,V,
which can be assimilated to all the other configurations between $0$\,V and $1.5$\,V.
The generated electric field helps charges migrate
towards the electrodes, generating electric current.
The charge densities in device are hence relatively small and the nonlinear terms
in equations~\eqref{eq:excitons}-\eqref{eq:potential} are not dominant,
so the map does not need to perform many steps to meet the tolerance.

Convergence has been proven more difficult in the range of values
between $-0.8$\,V and $-0.4$\,V, with a distinct spike at $-0.6$\,V.
In this regime the applied voltage counteracts the potential difference
determined by the displacement of the dissociated charges generating an electric field
that tends to drive these latter back to the interface where they can recombine.
The main consequence is that the generated output current is close to zero
(\emph{open circuit} regime) and also the charge carrier
densities in the device are significantly larger than in the
current extracting operation mode, making the nonlinear
terms more important and reducing the convergence speed of the iteration map.

Furtherly decreasing the applied potential below $-0.8$\,V,
we observe again an improvement in the performance of the map.
In these configurations, the applied electric field is strong
enough to move most of the generated charge carriers back to the interface
where they recombine, reducing considerably the carrier densities and hence the
nonlinear effects.

\subsection{Further testing of map convergence: the use of Einstein's relation}
The aim of this section is to analyse the behaviour of the iterative map
in configurations where the model parameter definitions do not satisfy the
assumptions of Tab.~\ref{tab:assumptions} made in
order to prove the results of Sect.~\ref{sec:fixed_point_map}.
A significant case is that obtained by considering the mobility parameter
definition as in~\eqref{eq:mob_model} but with no ceiling for high electric
field values
\begin{equation}
\mu _{i}\left( E\right) =\mu _{i,0}\exp \left( \dfrac{\beta _{i}\sqrt{E}}{%
k_{B}T}\right)   \label{eq:mobility_model_no_ceiling}
\end{equation}%
and assuming the Einstein-Smoluchowski relation to hold~\cite%
{Jerome:AnalyCharTran}, i.e.
\begin{equation}
D_{i}^{\mathrm{einstein}}\left( E\right) =\dfrac{k_{B}T}{q}\,\mu _{i}\left(
E\right) .  \label{eq:einstein_model}
\end{equation}%
The study of this configuration is of particular interest as it is
frequently used in the literature on the topic~\cite%
{porro2014,deFalco2012,williams2008,walker2008}.

Upon changing the values for the generation term $Q$ (see Fig.~\ref%
{fig:convergence_generation_einstein}) and the zero-field mobility (see Fig.~%
\ref{fig:iterations_mobility_bis_einstein}), the map shows a performance
similar to that observed in the previous analyses. In particular, comparing
with results presented in Sections~\ref{sec:num_changing_Q} and~\ref%
{sec:num_changing_mu} the number of iterations needed by the map to converge
is generally higher in the configurations with no applied potential
(0\thinspace V). In such configuration, the electric field in the device is
close to zero, so that the diffusion coefficients are smaller than predicted
by~\eqref{eq:diffusion_parameters}, that is
\begin{equation}
D_{i}^{\mathrm{einstein}}(E\simeq 0)\simeq \dfrac{k_{B}T}{q}\mu _{i}(0)<\dfrac{%
k_{B}T}{q}\ \dfrac{\mu _{i}(0)+\mu _{i}(E^{\ast })}{2}=D_{i}^{\mathrm{const}}
\label{eq:estimate_D}
\end{equation}%
as~\eqref{eq:mob_model} is a monotonically increasing function of $E$.

\begin{figure}[h!]
	\centering
	\includegraphics[width=0.5\textwidth]{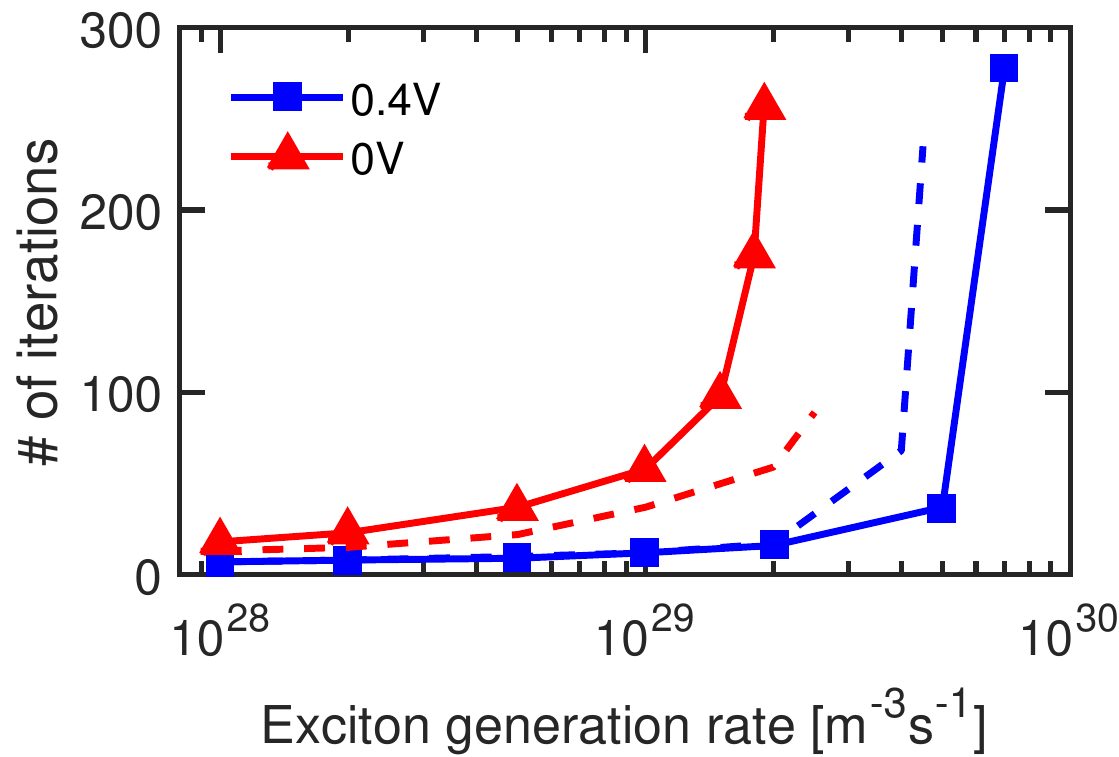}
	\caption{Number of iterations needed by the map to converge
		changing the value of the exciton generation rate.
        Results of Section~\ref{sec:num_changing_Q} are
        displayed with dotted lines.}
	\label{fig:convergence_generation_einstein}
\end{figure}

\begin{figure}[h!]
	\centering
	\subfigure[]{
		\includegraphics[width=0.45\textwidth]{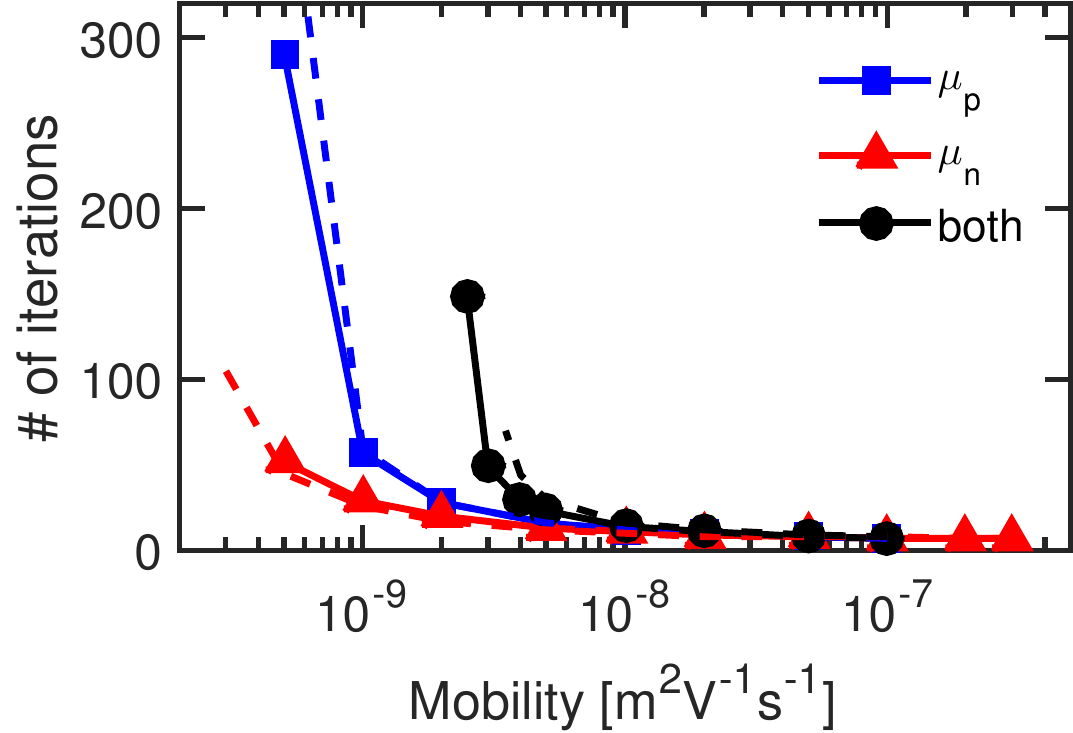}
		\label{fig:iterations_mobility_04V_einstein}
	}
	\subfigure[]{
		\includegraphics[width=0.45\textwidth]{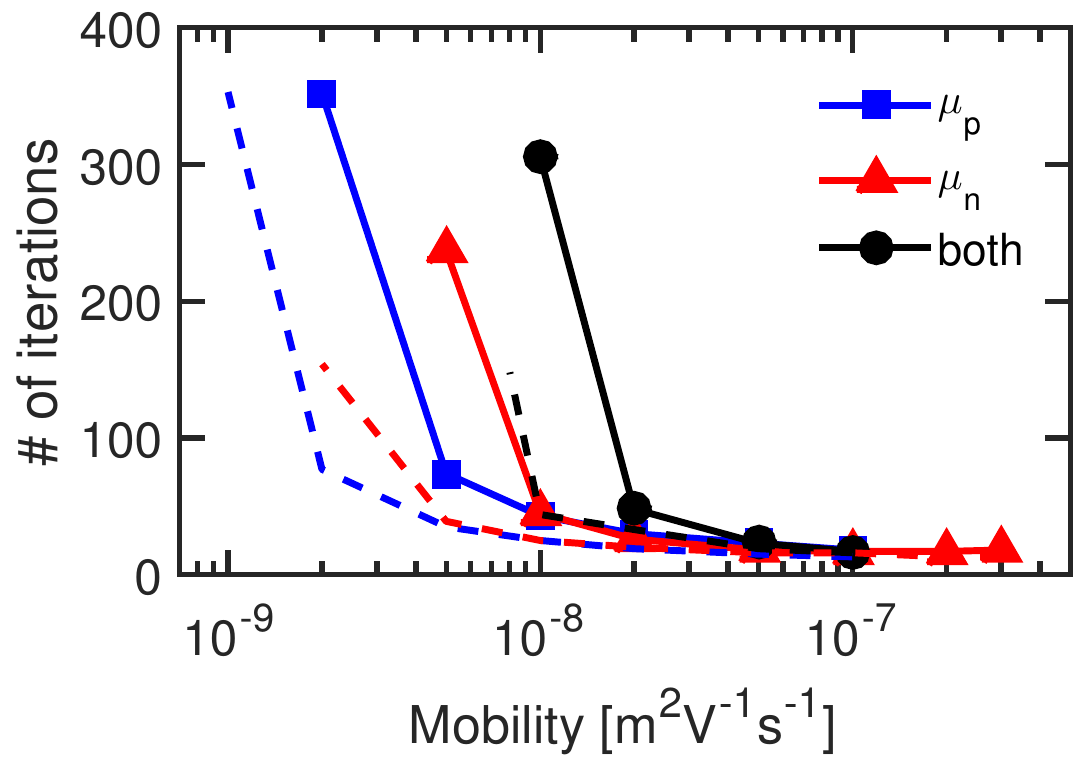}
		\label{fig:iterations_mobility_0V_einstein}
	}
	\caption{Number of iterations needed by the map to converge
		changing the value of the hole and electron mobility or
		both simultaneously with 0.4\,V (a) or 0\,V (b).
        Results of Section~\ref{sec:num_changing_mu} are
        displayed with dotted lines.}
	\label{fig:iterations_mobility_bis_einstein}
\end{figure}

Upon changing the value of the applied potential, we can observe an
interesting behaviour of the iterative map,
see Fig.~\ref{fig:convergence_potential_einstein}.
\begin{figure}[h!]
	\centering
	\includegraphics[width=0.65\textwidth]{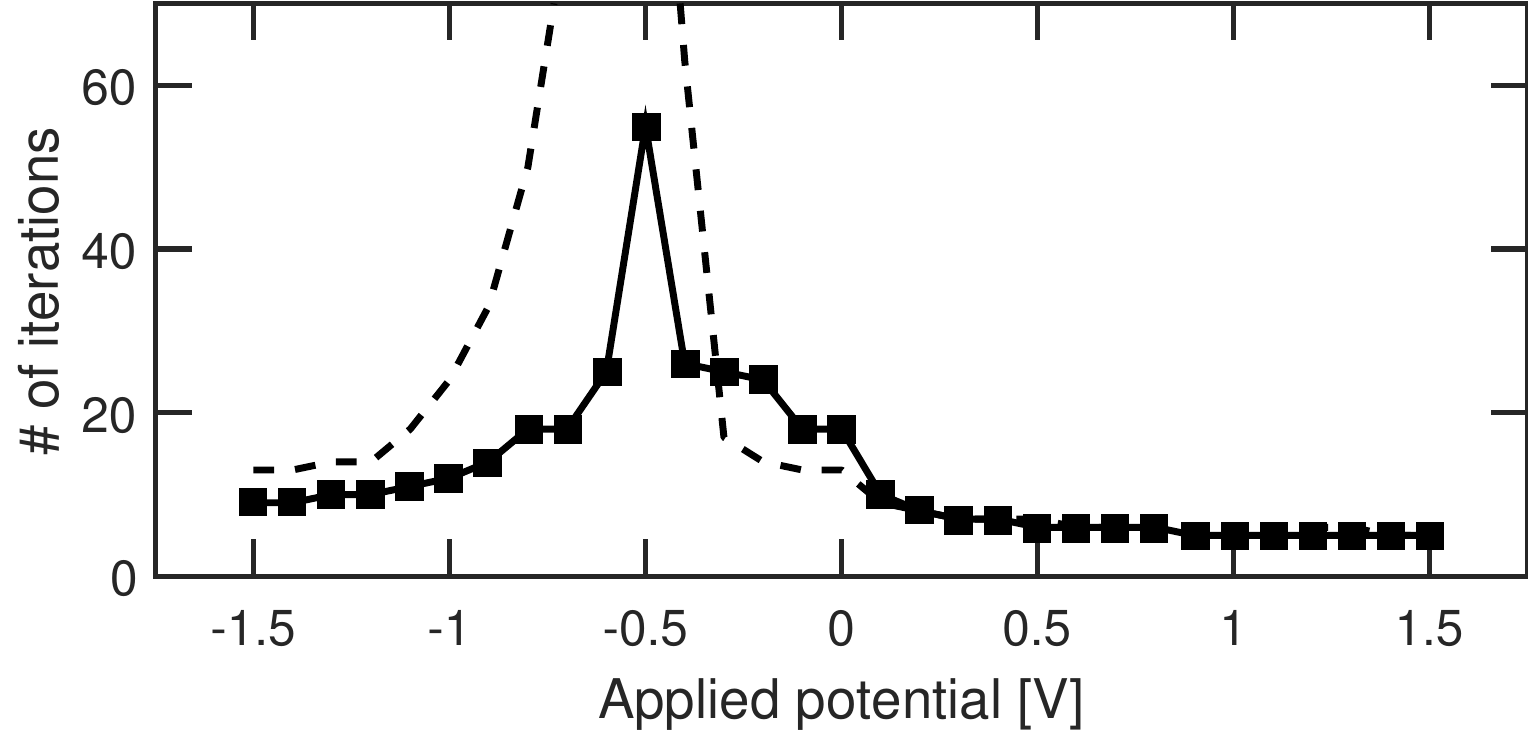}
	\caption{Number of iterations needed by the map  to converge
		changing the value of the potential difference
		$\varphi_C - \varphi_A$ at the electrodes.
        Results of Section~\ref{sec:num_changing_applied_potential} are
        displayed with the dotted line.}
	\label{fig:convergence_potential_einstein}
\end{figure}
The number of iterations needed for the map to converge are the same
as reported in Sect.~\ref{sec:num_changing_applied_potential} for
values of applied voltage strictly greater than 0\,V. This is to be ascribed to the
fact that the electric field in the device is large enough to make the drift term in
the current density for electrons and holes dominant with respect to the diffusive term,
in such a way that the different mathematical representations of the diffusion coefficient
play no role in this branch of values of the applied external electric force.
Things change in the range of values of applied potential between 0\,V and -0.3\,V.
Indeed, in this working regime the electric field is close to zero, or small, so that
relation~\eqref{eq:estimate_D} predicts
$D_{i}^{\mathrm{einstein}}<D_{i}^{\mathrm{const}}$ and thus,
consistently, photogenerated electrons and holes hardly diffuse from the interface region
making the effect of nonlinear bimolecular recombination terms more relevant and
requiring a (slightly) higher number of iterations for the map to converge.
However, furtherly increasing the (negative) value of the applied voltage
produces an increase of the strength of the electric field in the device, so that
relation~\eqref{eq:estimate_D} does no longer hold and we have that
$D_{i}^{\mathrm{einstein}}>D_{i}^{\mathrm{const}}$.
As a consequence, the diffusive term in the current density helps photogenerated charges
to detach from the interface region and move towards the contacts, this having
the effect to reduce considerably the number of iterations for the map to converge
as illustrated by the dashed line in Fig.~\ref{fig:convergence_potential_einstein}.
The sharp peak at -0.5\,V corresponds to the open circuit conditions, and this explains
the sharp increase of number of iterations: drift and diffusive current densities mutually
cancel so that charges are confined at the interface and nonlinear recombination makes
the convergence of the map to slow down. Then, for larger negative values of the applied
voltage the behaviour of the device is again dominated by drift current densities, so that
the convergence of the functional iteration becomes insensitive to the adopted model
of the diffusion coefficient.

\section{Conclusions and perspectives}\label{sec:conclusions}
In this article we have addressed the analytical study of a multidomain system of nonlinearly
coupled PDEs, with nonlinear transmission conditions at the material interface, that represents the
mathematical modeling picture of an organic solar cell. The system is constituted by
a set of conservation laws for four distinct species: excitons and polarons (electrically neutral),
and electrons and holes (negatively and positively charged). The analysis is conducted
in the stationary regime and under assumptions on the parameters and data that make
the considered problem a close representation of a realistic nanoscale device for energy photoconversion.
The resulting problem is a highly nonlinearly coupled system of advection-diffusion-reaction PDEs
for which existence and uniqueness of weak solutions, as well as nonnegativity of concentrations, is proved
via a solution map that is a variant of the Gummel iteration commonly used in the
treatment of the DD model for inorganic semiconductors. Results are established upon
assuming suitable restrictions on the data and some regularity property on the mixed boundary value problem for the Poisson equation. The main analytical conclusions are numerically validated through an extensive
sensitivity analysis devoted to characterizing the dependence of the convergence of the fixed-point iteration
on the most relevant physical parameters of the OSC. Simulation predictions are in excellent agreement
with theoretical limitations and suggest that failure to convergence may principally occur in the following
three distinct conditions:
\begin{itemize}
\item when the exciton generation rate $Q$ becomes too large;
\item when carrier mobility becomes too small;
\item when the device works close to open-circuit conditions.
\end{itemize}
We believe that such conclusions may provide useful indications to improve, on the one hand, the
development of efficient solution algorithms to be implemented in computational tools, and, on
the other hand, the search of suitable materials in view of an optimal design of a organic solar cell
of the next generation. We also believe that the functional techniques employed in the present article
may be profitably adopted in the analysis of well-posedness of the multidomain nonlinear model
in the time-dependent case. This aspect will be the object of our next investigation.

\bibliographystyle{plain}      
\bibliography{biblio}   

\begin{thebibliography}{10}

\bibitem{bank1998}
R.~E. Bank, W.~M. Coughran, Jr., and L.~C. Cowsar.
\newblock {The Finite Volume Scharfetter-Gummel method for steady convection
  diffusion equations}.
\newblock {\em Comput Vis Sci.}, 1(3):123--136, 1998.

\bibitem{barker2003}
J.A. Barker, C.M. Ramsdale, and N.C. Greenham.
\newblock Modeling the current-voltage characteristics of bilayer polymer
  photovoltaic devices.
\newblock {\em Phys. Rev. B}, 67:075205, 2003.

\bibitem{buxton2007}
G.A. Buxton and N.~Clarke.
\newblock Computer simulation of polymer solar cells.
\newblock {\em Modelling Simul. Mater. Sci. Eng.}, 15:13--26, 2007.

\bibitem{cogliati2010}
M.~Cogliati and M.~Porro.
\newblock Third generation solar cells: modeling and simulations.
\newblock Master's thesis, Politecnico di Milano, Italy, 2010.

\bibitem{BEI1996}
H.~Beirao da~Veiga.
\newblock On the semiconductor drift diffusion equations.
\newblock {\em Diff. and Int. Eqs.}, 9:729--744, 1996.

\bibitem{Dauge1992}
M.~Dauge.
\newblock Neumann and mixed problems on curvilinear polyhedra.
\newblock {\em Integ Equat Oper Th}, 15(2):227--261, 1992.

\bibitem{bim}
C.~{de Falco} and M.~Culpo.
\newblock bim octave-forge package.
\newblock http://octave.sourceforge.net/bim/index.html.

\bibitem{msh}
C.~{de Falco} and M.~Culpo.
\newblock msh octave-forge package.
\newblock http://octave.sourceforge.net/msh/index.html.

\bibitem{deFalcoSacco2009}
C.~de~Falco, J.~W. Jerome, and R.~Sacco.
\newblock Quantum-corrected drift-diffusion models: Solution fixed point map
  and finite element approximation.
\newblock {\em J. Comp. Phys.}, 228(5):1770 -- 1789, 2009.

\bibitem{deFalcoSacco2005}
C.~de~Falco, A.~L. Lacaita, E.~Gatti, and R.~Sacco.
\newblock Quantum-corrected drift-diffusion models for transport in
  semiconductor devices.
\newblock {\em J. Comp. Phys.}, 204:533--561, 2005.

\bibitem{deFalco2012}
C.~de~Falco, M.~Porro, R.~Sacco, and M.~Verri.
\newblock Multiscale modeling and simulation of organic solar cells.
\newblock {\em Comp Meth Appl Mech Engrg}, 245 - 246:102 -- 116, 2012.

\bibitem{deFalcoSacco2010}
C.~de~Falco, R.~Sacco, and M.~Verri.
\newblock Analytical and numerical study of photocurrent transients in organic
  polymer solar cells.
\newblock {\em Comp Meth Appl Mech Engrg}, 199(25-28):1722 -- 1732, 2010.

\bibitem{gatti1998}
E.~Gatti, S.~Micheletti, and R.~Sacco.
\newblock A new {G}alerkin framework for the drift-diffusion equation in
  semiconductors.
\newblock {\em East-West J. Numer. Math.}, 6:101--136, 1998.

\bibitem{gmsh}
C.~Geuzaine and J.-F. Remacle.
\newblock Gmsh: A 3-d finite element mesh generator with built-in pre- and
  post-processing facilities.
\newblock {\em Int J Numer Meth Eng}, 79(11):1309--1331, 2009.

\bibitem{GLI}
A.~Glitzky.
\newblock An electronic model for solar cells including active interfaces and
  energy resolved defect densities.
\newblock {\em SIAM J. Math. Anal.}, 44:3874--3900, 2012.

\bibitem{groger1985}
K.~Gr\"oger.
\newblock Initial-boundary value problems describing mobile carrier transport
  in semiconductor devices.
\newblock {\em Comment. Math. Univ. Carolin.}, 26:75--89, 1985.

\bibitem{groger1989}
K.~Gr\"oger.
\newblock A ${W}^{1,p}$-estimate for solutions to mixed boundary value problems
  for second order elliptic differential equations.
\newblock {\em Math. Ann.}, 283:679--687, 1989.

\bibitem{HALLER2008}
R.~Haller-Dintelmann, H.-C. Kaiser, and J.~Rehberg.
\newblock Elliptic model problems including mixed boundary conditions and
  material heterogeneities.
\newblock {\em J. Math. Pures Appl.}, 89(1):25 -- 48, 2008.

\bibitem{Hieber2008}
M.~Hieber and J.~Rehberg.
\newblock Quasilinear parabolic systems with mixed boundary conditions on
  nonsmooth domains.
\newblock {\em SIAM J. Math. Anal.}, 40(1):292--305, 2008.

\bibitem{hoppe_sariciftci_2004}
H.~Hoppe and N.~S. Sariciftci.
\newblock Organic solar cells: An overview.
\newblock {\em J. Mater. Res.}, 19(7):1924--1945, 2004.

\bibitem{Hussein2015}
A.~K. Hussein.
\newblock Applications of nanotechnology in renewable energies - a
  comprehensive overview and understanding.
\newblock {\em Renew Sust Energ Rev}, 42:460 -- 476, 2015.

\bibitem{hwang_greenham}
I.~Hwang and N.C. Greenham.
\newblock Modeling photocurrent transients in organic solar cells.
\newblock {\em Nanotechnology}, 19:424012 (8pp), 2008.

\bibitem{Jerome:AnalyCharTran}
J.~W. Jerome.
\newblock {\em Analysis of Charge Transport}.
\newblock Springer, New York, 1996.

\bibitem{Lundstrom2009}
M.~Lundstrom.
\newblock {\em Fundamentals of Carrier Transport, Edition 2}.
\newblock University Press, Cambridge, 2009.

\bibitem{markowich1986stationary}
P.A. Markowich.
\newblock {\em The Stationary Semiconductor Device Equations}.
\newblock Computational Microelectronics. Springer-Verlag, 1986.

\bibitem{Markowich}
P.A. Markowich, C.A. Ringhofer, and C.~Schmeiser.
\newblock {\em Semiconductor Equations}.
\newblock Springer-Verlag, 1990.

\bibitem{Porro2014phDthesis}
M.~Porro.
\newblock {\em Bio-polymer interfaces for optical cellular stimulation: a
  computational modeling approach}.
\newblock PhD thesis, Politecnico di Milano, Italy, 2014.

\bibitem{porro2014}
M.~Porro, C.~de~Falco, M.~Verri, G.~Lanzani, and R.~Sacco.
\newblock Multiscale simulation of organic heterojunction light harvesting
  devices.
\newblock {\em COMPEL}, 33(4):1107--1122, 2014.

\bibitem{IPCC_report}
K.~Seyboth, P.~Eickemeier, P.~Matschoss, G.~Hansen, S.~Kadner, S.~Schl\"omer,
  T.~Zwickel, and C.~von Stechow.
\newblock Renewable energy sources and climate change mitigation. {S}pecial
  report of the intergovernmental panel on climate change.
\newblock Technical report, Intergovernmental Panel on Climate Change, 2012.

\bibitem{MEN2008}
S.L.M. van Mensfoort and R.~Coehoorn.
\newblock Effect of gaussian disorder on the voltage dependence of the current
  density in sandwitch-type devices based on organic semiconductors.
\newblock {\em Phys. Rev B}, 78:085207, 2008.

\bibitem{williams_th}
J.~Williams.
\newblock Finite element simulations of excitonic solar cells and organic light
  emitting diodes.
\newblock Master's thesis, University of Bath, UK, 2008.

\bibitem{williams2008}
J.~Williams and A.B. Walker.
\newblock {Two-dimensional simulations of bulk heterojunction solar cell
  characteristics}.
\newblock {\em Nanotechnology}, 19:424011, 2008.

\bibitem{walker2008}
J.~H.~T. Williams and A.~B. Walker.
\newblock Two-dimensional simulations of bulk heterojunction solar cell
  characteristics.
\newblock {\em Nanotechnology}, 19:424011, 2008.

\bibitem{xu1999}
J.~Xu and L.~Zikatanov.
\newblock A monotone finite element scheme for convection-diffusion equations.
\newblock {\em Math Comp}, 68(228):pp. 1429--1446, 1999.

\bibitem{ZIE1989}
W.~Ziemer.
\newblock {\em Weakly differentiable functions}.
\newblock Springer Verlag, New York, 1991.

\end{thebibliography}

\end{document}